\documentclass[twoside]{article}

\usepackage[accepted]{aistats2026}

\usepackage{enumitem}
\usepackage{natbib}
\usepackage{makecell}

\usepackage{booktabs} 

\usepackage[toc,page,header]{appendix}
\usepackage{minitoc}

\usepackage{utils}

\begin{document}

\runningtitle{Convex Markov Games and Beyond}

\twocolumn[
\aistatstitle{Convex Markov Games and Beyond: New Proof of Existence, Characterization and Learning Algorithms for Nash Equilibria}

\aistatsauthor{Anas Barakat\footnotemark[1] \And Ioannis Panageas\footnotemark[2] \And  Antonios Varvitsiotis\footnotemark[1]\footnotemark[3]\footnotemark[4]}
\medskip
\aistatsaddress{SUTD \And UC Irvine \And SUTD, CQT NUS, Archimedes}
]

\begin{abstract}
Convex Markov Games (cMGs) were recently introduced as a broad class of multi-agent learning problems that generalize Markov games to settings where strategic agents optimize general utilities beyond additive rewards. 
While cMGs expand the modeling frontier, their theoretical foundations, particularly the structure of Nash equilibria (NE) and guarantees for learning algorithms, are not yet well understood. In this work, we address these gaps for an extension of cMGs, which we term General Utility Markov Games (GUMGs), capturing new applications requiring coupling between agents' occupancy measures. 
We prove that in GUMGs, Nash equilibria coincide with the fixed points of projected pseudo-gradient dynamics (i.e., first-order stationary points), enabled by a novel agent-wise gradient domination property. This insight also yields a simple proof of NE existence using Brouwer’s fixed-point theorem. We further show the existence of Markov \textit{perfect} equilibria. 
Building on this characterization, we establish a policy gradient theorem for GUMGs and design a model-free policy gradient algorithm. For potential GUMGs, we establish iteration complexity guarantees for computing approximate-NE under exact gradients and provide sample complexity bounds in both the generative model and on-policy settings. Our results extend beyond prior work restricted to zero-sum cMGs, providing the first theoretical analysis of common-interest cMGs. 
\end{abstract}

\footnotetext[1]{Singapore University of Technology and Design}
\footnotetext[2]{University of California Irvine, USA}
\footnotetext[3]{Centre for Quantum Technologies, National University of Singapore, Singapore}
\footnotetext[4]{Archimedes/Athena Research Center, Greece}

\section{INTRODUCTION}

Multi-agent reinforcement learning (MARL) is classically modeled by Markov games \citep{shapley53,fink64,littman94}, 
a direct generalization of Markov Decision Processes (MDPs) \citep{howard1960book,puterman14mdps}. While widely used, this additive-reward framework fails to capture key objectives such as risk sensitivity, imitation, exploration, or fairness. 

In the last few years, to go beyond additive rewards in MDPs, convex RL \citep{zhang-et-al20variational,zahavy-et-al21} has emerged as a general and unifying framework where the goal of an agent is to optimize a concave functional of its occupancy measure (i.e. the frequency of their state-action visitations under the selected policy). In particular, convex RL captures a variety of applications including pure exploration \citep{hazan-et-al19}, imitation learning \citep{ho-ermon16gail}, diverse skill discovery \citep{eysenbach-et-al19} and experiment design \citep{mutny-et-al23} to name a few. 

Inspired by single-agent convex RL, 
\citet{gemp-et-al25convex-mgs} recently  introduced convex Markov games (cMGs). 
Extending the scope of classical Markov games for MARL, cMGs provide a flexible unifying framework to model strategic interactions in MARL with general utilities beyond additive rewards. This framework unifies a wide array of MARL problems, from preference alignment \citep{wu-et-al24multistep} and task-agnostic exploration \citep{zamboni-et-al25} to risk-sensitive sequential games \citep{bhatt-sobel25} and fairness-aware MARL  \citep{hughes-et-al18}.

While cMGs extend the modeling frontier, their theoretical foundations remain poorly understood. 
\citet{gemp-et-al25convex-mgs} proved existence of Nash equilibria (NE) and proposed approximation methods based on regularization annealing, but required full model knowledge and left equilibrium structure uncharacterized.
More recently, \citet{kalogiannis-et-al25zs-mgs} analyzed the zero-sum subclass of cMGs, but efficient algorithms for general or common-interest cMGs remain largely open.  

This raises three fundamental questions: (i) what is the structure of NE in general cMGs beyond existence, (ii) can we design efficient policy gradient algorithms for such games, and (iii) can we learn NE sample-efficiently in the common-interest setting, where the agents' utilities have a potential structure?

\begin{figure}[t]
\centering
\begin{tikzpicture}[scale=0.7,every node/.style={font=\bfseries}]
\filldraw[fill=white, draw=black, thick] (0,0) ellipse (4.1 and 3);
\filldraw[fill=gray!15, draw=black, thick] (0,0) ellipse (2.8 and 2.1);
\filldraw[fill=gray!40, draw=black, thick] (0,0) circle (1.3);
\filldraw[fill=gray!60, draw=black, thick] (0,-0.3) circle (0.7);
\node[font=\bfseries]  at (0,2.6) {GUMG};
\node[font=\bfseries] at (0,1.8) {cMG};
\node[font=\bfseries] at (0,1) {MG};
\node[font=\bfseries \small] at (0,-0.3) {MDP};
\end{tikzpicture}
\caption{From single-agent RL to General Utility Markov Games (GUMG). MDP: Markov Decision Processes (single-agent RL), MG: Markov Games (MARL), cMG: convex Markov Games.}
\label{fig:gumgs}
\end{figure}
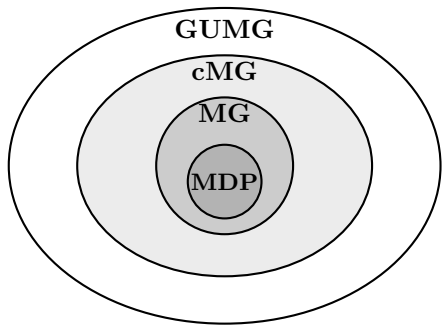

\subsection{Our Contributions}

In this work, we address these questions for an extension of cMGs. Our key contributions are: 
\begin{itemize}
\item \textbf{Framework.} We introduce the new class of General Utility Markov Games (GUMGs) as a natural extension of cMGs. 
This class strictly generalizes MDPs, Markov games, and cMGs (see Fig.~\ref{fig:gumgs}). 

\item \textbf{Existence and characterization of NE.} 
We prove that NE in GUMGs coincide with fixed points of projected pseudo-gradient dynamics. This yields a conceptually simple proof of equilibrium existence via a novel agent-wise gradient domination property of utilities in GUMGs.

\item \textbf{Policy gradient theorem and algorithm.} Motivated by this characterization, we establish a multi-agent policy gradient theorem for GUMGs and leverage it to design a principled model-free policy gradient (PG) algorithm  
(Algorithm~\ref{algo}). 

\item \textbf{Learning NE in potential GUMGs.} In potential GUMGs where utilities have a potential structure, our PG algorithm finds an $\varepsilon$-approximate NE in $\mathcal{O}(\varepsilon^{-2})$ iterations with exact gradients. When the transition dynamics are unknown, the PG algorithm learns an $\varepsilon$-approximate NE with total sample complexity of $\mathcal{O}(\varepsilon^{-4})$ under a generative model setting and $\mathcal{O}(\varepsilon^{-5})$ when only having access to on-policy sampled trajectories. To our knowledge, these are the first efficient NE learning guarantees for common-interest cMGs, beyond zero-sum cMGs. 
\end{itemize} 

\subsection{Technical Overview}

\noindent\textbf{Obstacles beyond Markov games.} 
Two obstacles prevent the direct extension of classical MG analysis to cMGs and our GUMGs. 
First, best response sets are non-convex in the policy space. Second, value functions and Bellman contraction arguments are not immediately available since the utilities are non-additive, unlike in Markov games. This hinders the application of Kakutani's fixed point theorem as used for the existence of NE in Markov games \citep{fink64}. 
Prior work on cMGs \citep{gemp-et-al25convex-mgs} established existence via Debreu’s theorem (using contractible sets) \citep{debreu52social-existence-eq-thm} to bypass the lack of convexity of best response sets, but provided no structural characterization of equilibria and assumed full model knowledge for (biased) approximate computation.

\noindent\textbf{Technical tool: per-agent gradient domination.} We prove a per-agent gradient domination property which holds for our general class of $N$-agents GUMGs (hence in cMGs). This per-player property implies that any (approximate) first-order stationary policy is also an (approximate) best response for that player. This generalizes gradient domination arguments used in special cases (MDPs/MGs; \citealt{agarwal-et-al21jmlr,daskalakis-foster-golowich20,leonardos-et-al22iclr,zhang-et-al24tac}) to non-additive, multi-agent utilities.

\noindent\textbf{Existence and first-order characterization via Brouwer.} 
Per-agent gradient domination in GUMGs implies the \textit{equivalence} between (a) NE, (b) first-order stationary points and (c) \textit{fixed-points of the projected pseudo-gradient dynamics}. Applying \textit{Brouwer's} fixed point theorem (rather than Kakutani or Debreu) to the projected pseudo-gradient map which is continuous on the compact policy space yields both existence and a structural characterization of NE in GUMGs (for any full-support initial state distribution), addressing the main theoretical gap left open by \citet{gemp-et-al25convex-mgs}. 
Therefore our gradient domination result allows us to circumvent the first obstacle above to (i) show existence of NE in GUMGs for any full support initial state distribution and (ii) characterize them.

\noindent\textbf{From NE to MPE.} 
We further prove existence of Markov-perfect equilibria (MPE) beyond the mere existence of \textit{non-perfect} NE (which depend on the initial state distribution). First, we show that the NE above does not depend on the choice of the  full-support initial distribution. Then, for initial distributions concentrating mass $1-\varepsilon$ on a single state, continuity of utilities in the initial distribution lets us pass to the limit $\varepsilon \to 0$, yielding an MPE. This generalizes the existence of MPE in Markov games (see e.g. \cite{deng-et-al23}) via a different proof technique. 

\noindent\textbf{PG theorem and model-free algorithm.} Despite non-additivity of utilities, we exploit the dynamic programming structure of occupancy measures to establish a PG theorem for $N$-player GUMGs. This leads to a model-free PG algorithm which does not require full knowledge of the cMG transition model (in contrast with \citealt{gemp-et-al25convex-mgs}). 

\noindent\textbf{Sample-efficient learning in potential GUMGs.} 
In the common interest setting, our analysis relies on showing smoothness of utilities w.r.t. policies in GUMGs, proving convergence to approximate stationary points using stochastic gradient ascent arguments and exploiting our gradient domination property to show convergence to approximate NE. Our sampling mechanism in the generative model setting differs from \cite{zhang-et-al24tac} as we use minibatch trajectory sampling \textit{without} $\alpha$-greedy policies, leading to an improved sample complexity of~$\tilde{\mathcal{O}}(\varepsilon^{-4})$ instead of $\tilde{\mathcal{O}}(\varepsilon^{-6})\,.$ In the on-policy setting, our proof technique relies on an analysis of stochastic projected gradient ascent with minibatch for smooth objectives, differing from the Moreau-envelope-based analysis used for the special case of MPGs in \citep{leonardos-et-al22iclr}. 

\subsection{Related Work}
\label{sec:related-work}

Recent years have witnessed active research in theoretical MARL. In particular, an important body of work has focused on 
learning approximate NE in structured classes of Markov games such as zero-sum Markov games and Markov potential games. Our work extends these research efforts to cMGs and beyond.  
Recently, \cite{gemp-et-al25convex-mgs} introduced cMGs, proved existence of NE and proposed to approximate them using a \textit{biased} entropic regularization approach under \textit{full knowledge} of the transition model. In this work, we design a \textit{model-free} PG algorithm for GUMGs and analyze its iteration and \textit{sample complexity}. In a follow-up work, \cite{kalogiannis-et-al25zs-mgs} studied \textit{2-player zero-sum cMGs}. In contrast, we characterize NE in any $N$-player general-sum GUMG (with cMGs as particular cases). Moreover, we study the common-interest setting beyond the competitive zero-sum two-player setting. See sections~\ref{sec:structure-general-cMGs} and~\ref{sec:pg-algo} for more details comparing our theoretical and algorithmic contributions to prior work. 
In the single-agent setting ($N=1$), convex Markov games reduce to convex RL which is an active research topic \citep{hazan-et-al19,zahavy-et-al21,zhang-et-al20variational,zhang-et-al21rlgu,agarwal-et-al22,geist-et-al22,mutti-et-al23convexRL-jmlr,barakat-et-al23rl-gen-ut,marin-et-al24metacurl,desanti-et-al24,barakat-et-al25rlgu}. A few works consider MARL with general utilities as a \textit{decentralized policy optimization} problem \citep{zhang-et-al22marl-gu-decentralized,ying-et-al23scalable-safe-marl-rlgu}. Our game-theoretic setting is fundamentally different. 
See Appendix~\ref{sec:extended-related-work} for an extended literature review. 

\section{\normalsize{FROM CLASSICAL TO GENERAL UTILITY MARKOV GAMES}} 

In this section, we introduce useful notation, Markov games and cMGs to motivate our class of GUMGs. 

\noindent\textbf{Notation.} 
We denote by $[N]$ the set of integers~$\{1, \cdots, N\}\,.$ 
For any vector $v \in \mathbb{R}^n$ where $n$ is any integer larger than one, we denote its infinity norm by~$\|v\|_{\infty} := \max_{i=1, \cdots, N} |v_i|\,.$  
For any finite set~$\mathcal{X}$ with cardinality~$X = |\mathcal{X}|,$ we denote by~$\Delta(\mathcal{X})$ the set of (discrete) probability distributions on~$\mathcal{X}\,.$ This set can be simply identified with the standard simplex, i.e. the set $\{\pi \in \R^{X}: \pi_i \geq 0\,, \forall i \in [X]\,, \sum_{i=1}^{X} \pi_i = 1 \}\,.$

\subsection{Markov Games and Notation}
\label{sec:markov-games-notation}

A Markov game is defined via a $6$-tuple: 
\begin{equation}
\label{eq:markov-game}
\mathcal{G} = ( \mathcal N, \mathcal{S}, \{ \mathcal A_i \}_{i \in \mathcal N}, P, \{r_i\}_{i \in \mathcal N}, \mu, \gamma)\,,
\end{equation}
where $\mathcal{N} := \{1, \cdots, N\}$ is a finite set of $N$ agents, $\mathcal{S}$ is a finite set of states, $\mathcal{A}_i$ is a finite set of actions of cardinality $|\mathcal{A}_i|$ for all~$i \in \mathcal N$. The joint action space is denoted by~$\mathcal{A} := \prod_{i \in \mathcal N} \mathcal{A}_i\,.$ Moreover, we denote by $P: \mathcal{S}\times\mathcal{A} \to \Delta(\mathcal{S})$ the probability transition kernel: For any state~$s \in \mathcal{S}$ and any joint action~$a \in \mathcal{A}$, the game transitions from state~$s$ to a state~$s' \in \mathcal{S}$ with probability~$P(s'|s, a)\,.$ 
The initial state probability distribution is denoted by~$\mu \in \Delta(\mathcal{S})$ and $\gamma \in (0,1)$ is a discount factor. Each agent~$i \in \mathcal{N}$ has their own reward function $r_i: \cS \times \cA \to \R$ assigning a reward $r_i(s,a)$ to any state~$s \in \cS$ and \textit{joint} action~$a \in \cA\,.$   

\noindent\textbf{Policies.} Each agent~$i \in \mathcal{N}$ chooses their actions according to a randomized stationary policy denoted by $\pi_i \in \Pi_i:=\Delta \left( \mathcal A_i \right)^{\mathcal S}$. 
The set of joint policies~$\pi = \left( \pi_i \right)_{i \in \mathcal N}$ is denoted by~$\Pi := \times_{i \in \mathcal{N}} \Pi_i$ and we further use the notation $\pi_{-i}=\left( \pi_j \right)_{j \in \mathcal{N} \setminus \{i\}} \in \Pi_{-i} := \times_{j \in \mathcal{N} \setminus \{i\}} \Pi_j$ for joint policies of all agents other than~$i$. 
In this work, we will also be considering $\alpha$-greedy policies which guarantee a level of exploration by ensuring that each action is selected with positive probability. Define for each agent $i \in \mathcal{N}$ the set of $\alpha$-greedy policies $\Pi_i^{\alpha}:= \{\pi_i^{\alpha} \in \Pi_i: \forall s \in \mathcal{S}, a_i \in \mathcal{A}_i, \pi_i^{\alpha}(a_i|s) := (1-\alpha) \pi_i(a_i|s) + \alpha/|\mathcal{A}_i|, \pi_i \in \Pi_i \}$ for $\alpha > 0\,.$ The set of joint $\alpha$-greedy policies is denoted by $\Pi^{\alpha} := \times_{i \in \mathcal{N}} \Pi_i^{\alpha}\,.$ 

At each time step~$t \in \bN$ in a state~$s_t \in \mathcal{S}$, each agent~$i \in \mathcal{N}$ chooses an action~$a_{i,t} \in \mathcal{A}_i$ with probability~$\pi_i(a_{i,t}|s_t)$ and then the environment transitions to a state~$s_{t+1}\in \mathcal{S}$ with probability~$P(s_{t+1}|s_t, a_t)$ where~$a_t = (a_{i,t})_{i \in \mathcal{N}}$ is the joint action.   
We denote by~$\bP_{\mu,\pi}$ the probability distribution of the Markov chain~$(s_t,a_t)_{t \in \mathbb{N}}$ induced by the joint policy~$\pi$ with initial state distribution~$\mu$.  
We use the notation~$\bE_{\mu,\pi}$ (or often simply $\bE$) for the associated expectation.

\noindent\textbf{Occupancy measures.} 
We define for any policy~$\pi \in \Pi$ the (normalized) state and state-joint-action occupancy measures~$d_{\mu}^{\pi} \in \Delta(\mathcal{S}), \lambda_{\mu}^{\pi} \in \Delta(\mathcal{S} \times \mathcal{A})$ which capture the frequency of state-action pairs visited:
\begin{equation}
\label{eq:s-a-occup-measure}
\lambda_{\mu}^{\pi}(s,a) \eqdef (1-\gamma) \sum_{t=0}^{+\infty} \gamma^t \bP_{\mu,\pi}(s_t = s, a_t = a)\,, 
\end{equation}
and~$d_{\mu}^{\pi}(s) = \sum_{a \in \cA} \lambda_{\mu}^{\pi}(s,a)\,.$ 
Note that~$\lambda_{\mu}^{\pi}$ will also be seen as a vector of the Euclidean space~$\R^{|\mathcal{S}| \cdot |\mathcal{A}|}$ with a slight abuse of notation. 
We also define for each agent~$i \in \mathcal{N}$ the marginal occupancy measure induced by the joint policy~$\pi$ as follows for every $s \in \mathcal{S}, a_i \in \mathcal{A}_i,$ 
\begin{equation}
\label{eq:marginal-occup-measure}
\lambda_{\mu,i}^{\pi}(s,a_i) \eqdef (1-\gamma) \sum_{t=0}^{+\infty} \gamma^t \bP_{\mu,\pi}(s_t = s, a_{i,t} = a_i)\,. 
\end{equation}
Note that $\lambda_{\mu,i}^{\pi}(s,a_i) = \sum_{a_{-i} \in \cA_{-i}} \lambda_{\mu}^{\pi}(s,a_i, a_{-i})\,.$ 

\noindent\textbf{Value functions.} Finally, we introduce value functions which inform on the quality of a given initial state(-action) in terms of cumulative discounted rewards under a given policy. For any standard MDP $\mathcal{M} := (\mathcal{S}, \mathcal{A}, P, r, \mu, \gamma)$ where $r \in \mathcal{R}^{|\mathcal{S}| \cdot |\mathcal{A}|}$ refers to a reward function (i.e. $N=1$ in \eqref{eq:markov-game}), we define the (action)-value functions for every $(s,a) \in \mathcal{S} \times \mathcal{A},$

$Q_{s,a}^{\pi}(r) := \mathbb{E}_{\rho, \pi}\left[ \sum_{t=0}^{+\infty} \gamma^t r(s_t, a_t) | s_0 = s, a_0 = a \right]\,,$
and $V_{\rho}^{\pi}(r) = \sum_{a \in \mathcal{A}} \pi(a|s) Q_{s,a}^{\pi}(r)\,.$

\subsection{Convex Markov Games}
\label{sec:cmg}

As introduced by \cite{gemp-et-al25convex-mgs}, a convex Markov Game (cMG) is a $6$-tuple $\mathcal{G} =( \mathcal N, \mathcal{S}, \{ \mathcal A_i \}_{i \in \mathcal N}, P, \{F_i\}_{i \in \mathcal N}, \mu, \gamma)\,,$ 
where $(\mathcal N, \mathcal{S}, \{ \mathcal A_i \}_{i \in \mathcal N}, P, \mu, \gamma)$ is defined as in standard Markov games (see section~\ref{sec:markov-games-notation} above). 
However, in contrast to Markov games, utilities of agents are not restricted to be defined as cumulative discounted rewards using some given fixed reward functions.  
Each agent $i \in \mathcal N$ has a utility function $F_i: \Lambda_i \times \Pi_{-i} \to \R$\, where $\Lambda_i := \Delta(\mathcal{S} \times \mathcal{A}_i)$. The utility of agent~$i$ in the cMG under a joint policy~$\pi \in \Pi$ is then given by: 
\begin{equation}
\label{eq:def-cMG}
u_{\mu,i}(\pi_i, \pi_{-i}) = F_i(\lambda_{\mu,i}^{\pi}, \pi_{-i})\,,
\end{equation}
where the marginal occupancy measure~$\lambda_{\mu,i}^{\pi}$ is defined in~\eqref{eq:marginal-occup-measure}. Each utility function is supposed to be continuous and \textit{concave} in player $i$’s occupancy measure, and continuous in each policy $\pi_j$ for $j \neq i\,.$ 
Notice that the utility of agent~$i$ \textit{does not} depend explicitly on the individual occupancy measures of \textit{other agents}. Hence, coupling between occupancy marginals of agents is excluded. 
This motivates our more general definition which captures such couplings.

\subsection{General Utility Markov Games} 

In this section, we introduce our new class of games which generalizes the class of cMGs. 

\begin{definition}[General Utility Markov Game]
\label{def:gumg}
A General Utility Markov Game is defined via a 6-tuple 
$( \mathcal N, \mathcal{S}, \{ \mathcal A_i \}_{i \in \mathcal N}, P, \{F_i\}_{i \in \mathcal N}, \mu, \gamma)$ where $( \mathcal N, \mathcal{S}, \{ \mathcal A_i \}_{i \in \mathcal N}, P, \mu, \gamma)$ are defined as in section~\ref{sec:cmg} and each real-valued function $F_i$ is defined over the product space $\Lambda \times \Pi_{-i}$ where $\Lambda := \prod_{i=1}^N \Lambda_i$ and  $\Lambda_i := \Delta(\cS \times \cA_i)$. 
The utility function~$u_i : \Pi \to \mathbb{R}$ of agent~$i \in \mathcal{N}$ is defined for every policy $\pi = (\pi_1, \cdots, \pi_N) \in \Pi$ by
\begin{equation}
\label{eq:def-u_i}
u_{\mu,i}(\pi_i, \pi_{-i}) := F_i(\lambda_{\mu,1}^{\pi}, \cdots, \lambda_{\mu,N}^{\pi}, \pi_{-i})\,. 
\end{equation}
\end{definition}

Throughout this work we will suppose that $F_i$ is differentiable and this implies that $u_i$ is also differentiable.
We will also make the following important assumption. 
\begin{assumption}[Concavity] 
\label{as:concavity}
For all $i \in \mathcal{N}$, the function $F_i$ is jointly concave in $(\lambda_1, \cdots, \lambda_N)$. 
\end{assumption}

\noindent\textbf{Why GUMGs?} Assumption~\ref{as:concavity} captures cMGs, and therefore Markov games and convex RL ($N=1$) as well.  Compared to cMGs, GUMGs capture utilities that are concave in per-agent occupancies and \textit{aggregates} thereof. 
See examples in the next section. 

\begin{remark}
\label{rem:alternative-def}
Another possible and simple definition using the occupancy measure of the joint policy consists of defining $u_{\mu,i}(\pi) = F_i(\lambda_{\mu}^{\pi})$ for any policy $\pi \in \Pi.$ Our results in section~\ref{sec:structure-general-cMGs} will also hold for this definition. However, this is a particular case of cMGs and leads to the need of estimating the joint occupancy measure which cannot be directly estimated by each agent independently (in contrast to marginal occupancies). 
\end{remark}

\subsection{Examples} 
\label{subsec:examples}

Consider the following composite example defined for any $\lambda = (\lambda_1, \cdots, \lambda_N) \in \Lambda, \pi = (\pi_i, \pi_{-i}) \in \Pi,$ by: 
\begin{multline}
\label{eq:ex-composite}
F_i(\lambda_1, \cdots, \lambda_N, \pi_{-i}) = 
 \underbrace{ - \alpha_i \text{KL}(\lambda_i || q_i)}_{\text{imitation}} \quad \underbrace{- \beta_i \text{KL}(\lambda_i || \bar{\lambda}_N)}_{\text{consensus/diversity}}\\
+ \underbrace{\gamma_i h_i \left(\sum_{j=1}^N W_{i,j} \lambda_j \right)}_{\text{team aggregate}} + \underbrace{g_i(\pi_{-i})}_{\text{diversity penalty}}\,,
\end{multline}
where $q_i \in \Lambda_i, i \in \mathcal{N}$ are fixed probability distributions, $\alpha_i, \beta_i, \gamma_i$ are nonnegative reals, $W \in \mathbb{R}^{N \times N}$ is a stochastic matrix of weights,
and $\bar{\lambda}_N = \sum_{i=1}^N w_i \lambda_i$ with $w_i \geq 0, \sum_{i=1}^N w_i = 1\,.$ Here, $h_i$ and $g_i$ are concave functions. Note that $F_i$ satisfies Assumption~\ref{as:concavity} since it is the sum of jointly concave functions, noticing that $\text{KL}$ divergence is jointly convex and the composition of a convex function with a linear one is also convex.  

Notice that the above composite example for agent $i$'s utility shows dependence on the occupancy measures of \textit{all} agents via the aggregated terms $\bar{\lambda}_N$ and $\sum_{j=1}^N W_{i,j} \lambda_j$. This coupling between agents is \textit{not} captured by the definition of cMGs proposed in \cite{gemp-et-al25convex-mgs}, see \eqref{eq:def-cMG}. We now elaborate on special cases of our composite example~\eqref{eq:ex-composite}: 

\noindent\textbf{a. Multi-agent imitation learning} ($\alpha_i > 0,\beta_i = \gamma_i = 0$). Each agent matches its own occupancy to its expert occupancy~$q_i$. See e.g. \cite{song-et-al18}. 

\noindent\textbf{b. Consensus and diversity}  ($\beta_i > 0,\alpha_i = \gamma_i = 0$).    
The second term in \eqref{eq:ex-composite} couples marginals via~$\bar{\lambda}_N$ which can be computed from broadcast aggregates or neighbor averages without reconstructing the occupancy of the joint policy. Moreover, the term~$g_i(\pi_{-i})$ can be interpreted as a policy diversity penalty. See e.g. \cite{li-et-al21diversity} for MARL applications using information-theoretical regularization for behavioral diversity. 

\noindent\textbf{c. Population-level coordination} ($\gamma_i > 0,\alpha_i = \beta_i = 0$).   
Using per-agent marginals, utilities depend on the population behavior or neighbor average. 
See mean-field MARL \citep{yang-et-al18mean-field}. 

\noindent\textbf{d. Team coverage and distribution matching} 
($F_i(\lambda_{1:N},\pi_{-i}) = - \text{KL}(\bar{\lambda}_N || \lambda_{\text{target}})$). This utility encourages matching team visitation with a desired target distribution~$\lambda_{\text{target}}$. Important examples include patrolling, area coverage, search and target localization, see e.g. \cite{mavrommati-et-al17} in robotics.

\noindent\textbf{e. Collective exploration} ($\gamma_i > 0,\alpha_i = \beta_i = 0$). When $h_i$ is the negative entropy, utility maximization enhances exploration.  
See e.g. \cite{kim-sung23,zamboni-et-al25}. 

\noindent\textbf{Beyond GUMGs?}
There are individual objectives that do not satisfy our \textit{joint} concavity assumption (e.g. $F_1(\lambda_1, \lambda_2, \pi_{-i}) = \lambda_1 \lambda_2$ when $N=2$). 
We leave the question of whether our theory can be extended to even more general functions~$\{F_i\}_{i \in \mathcal{N}}$ to future work. 

\section{EXISTENCE AND STRUCTURE OF NASH POLICIES IN GUMGs}
\label{sec:structure-general-cMGs}

Recall first the definition of an (approximate)-NE: a joint policy where no agent can improve its utility by deviating unilaterally. For any $\varepsilon \geq 0,$ a policy $\pi^{\star} \in \Pi$ is an $\varepsilon$-approximate NE of a GUMG if for all~$i \in \mathcal{N}$, all policies $\pi_i' \in \Pi_i$, $u_{\mu,i}(\pi_i', \pi_{-i}^{\star}) - u_{\mu,i}(\pi_i^{\star}, \pi_{-i}^{\star}) \leq \varepsilon\,.$ When $\varepsilon = 0,$ the policy~$\pi^{\star}$ is a NE policy. 
A Markov perfect equilibrium (MPE) is a Markov policy that constitutes a NE for every initial state $s \in \mathcal{S}$, i.e., the definition of NE holds for any $\mu = \delta_s$ for MPE (where $\delta_s$ is the Dirac measure at state $s$) rather than only for the fixed initial state distribution $\mu$ defining the GUMG\footnote{Note that utilities are indexed by the initial state distribution to highlight this dependence.}. This means that the inequalities in the definition of NE hold for every distribution $\mu = \delta_s, s \in \mathcal{S}$ (and hence for any initial state distribution as well). See Appendix~\ref{app:ne-vs-mpe} for more details regarding the difference between NE and MPE and its relevance. 

Our goal in this section is to show that in GUMGs, NE exist and admit a first-order characterization under Assumption~\ref{as:concavity}, revealing their connection to projected pseudo-gradient dynamics. First, we make a standard exploration assumption to guarantee that all states are visited with positive probability in the long run.

\begin{assumption}
\label{as:exploration}
For all $\pi \in \Pi, s \in \mathcal{S}$, $d_{\mu}^{\pi}(s) > 0\,.$ 
\end{assumption}
Note that Assumption~\ref{as:exploration} is satisfied for instance when the initial state distribution~$\mu$ has full support since we always have $d_{\mu}^{\pi}(s) \geq (1-\gamma)\mu(s)$ for all states~$s \in \cS$. 

The utility of each agent in GUMGs is non-concave in their policy in general, as this is already the case in standard single-agent RL (see e.g. Lemma~1 in \cite{agarwal-et-al21jmlr}).  
Nevertheless, the following crucial result establishes a 
gradient domination property which holds per-agent. This structural property of GUMGs relates utility change under unilateral policy deviation to individual policy gradients. 

\begin{proposition}[Agentwise Gradient Domination]
\label{prop:agentwise-grad-dom} 
Let Assumptions~\ref{as:concavity} and~\ref{as:exploration} hold.
For all $i \in \mathcal{N}$, all policies $\pi = (\pi_i, \pi_{-i}) \in \Pi$, $\pi_i' \in \Pi_i$, we have for any distribution~$\mu, \rho \in \Delta(\cS)$ satisfying Assumption~\ref{as:exploration}, 
\begin{equation*}
u_{\mu,i}(\pi_i', \pi_{-i}) - u_{\mu,i}(\pi_i, \pi_{-i}) \leq C_{\mathcal{G}} \cdot \max_{\tilde{\pi}_i \in \Pi_i} \langle \nabla_{\pi_i} u_{\rho,i}(\pi), \tilde{\pi}_i - \pi_i \rangle
\end{equation*}
where the minimax distribution mismatch coefficient: 
$C_{\mathcal{G}} := \max_{i \in \mathcal{N}} \max_{\pi \in \Pi} \min_{\pi_i^{\star} \in \Pi_i^{\star}(\pi_{-i})} \left\| \frac{d_{\mu}^{\pi_i^{\star}, \pi_{-i}}}{d_{\rho}^{\pi_i, \pi_{-i}}} \right\|_{\infty}\,,$ 
where $\Pi_i^{\star}(\pi_{-i})$ denotes the set of best response policies for agent~$i$, is finite. 
\end{proposition}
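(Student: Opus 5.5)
Fix agent $i$ and $\pi_{-i}$. Since $\pi_{-i}$ is frozen, agent $i$ faces a single-agent MDP on $\cS\times\cA_i$ with kernel $P_i(s'|s,a_i)=\sum_{a_{-i}}\pi_{-i}(a_{-i}|s)P(s'|s,a_i,a_{-i})$. For this MDP I record two facts. First, the marginal occupancy $\lambda_{\mu,i}^{(\pi_i,\pi_{-i})}$ is the occupancy measure of $\pi_i$. Second, for $j\neq i$, $\lambda_{\mu,j}^{\pi}(s,a_j)=\sum_{a_i}\lambda^{\pi}_{\mu,i}(s,a_i)\,\pi_{-i}$-marginal$(a_j|s)$, so every $\lambda_{\mu,j}$ is a \emph{linear} function of $\lambda_{\mu,i}$ when $\pi_{-i}$ is fixed.

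Consequently $u_{\mu,i}(\cdot,\pi_{-i})=G_\mu(\lambda_{\mu,i}^{\pi_i,\pi_{-i}})$, where $G(\lambda_i):=F_i(L_1\lambda_i,\dots,L_N\lambda_i,\pi_{-i})$ and each $L_j$ is linear ($L_i=\mathrm{Id}$). By Assumption~\ref{as:concavity}, $G$ is concave, because it is a jointly concave function composed with a linear map. This reduces the problem to convex RL for one agent: the objective is a concave function of the occupancy, taken over a convex polytope of occupancies $\{\lambda_i\ge 0: \text{Bellman flow w.r.t. } P_i,\mu\}$.

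Next I use the standard convex-RL gradient domination argument, in the style of \citet{zhang-et-al21rlgu} and \citet{agarwal-et-al21jmlr}. Let $\pi_i^\star$ be a best response attaining the min in $C_{\mathcal G}$, and write $\lambda^\star=\lambda_{\mu,i}^{\pi_i^\star,\pi_{-i}}$. By concavity,
\[
u_{\mu,i}(\pi_i',\pi_{-i})-u_{\mu,i}(\pi)\le G(\lambda^\star)-G(\lambda_\mu)\le\langle\nabla G(\lambda_\mu),\lambda^\star-\lambda_\mu\rangle.
\]
Set $r:=\nabla G(\lambda_{\mu})$, which acts as a pseudo-reward. By linearity in the reward of the MDP for agent $i$, the right-hand side equals $(1-\gamma)\big(V^{\pi_i^\star}_\mu(r)-V^{\pi_i}_\mu(r)\big)$. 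The performance difference lemma rewrites this as $\sum_s d_\mu^{\pi_i^\star,\pi_{-i}}(s)\sum_{a_i}(\pi_i^\star-\pi_i)(a_i|s)Q^{\pi_i}_{s,a_i}(r)$. I bound the inner term by $\max_{\bar\pi_i}\sum_{a_i}(\bar\pi_i-\pi_i)(a_i|s)Q_{s,a_i}\ge0$. I then multiply and divide by $d_\rho^{\pi}(s)$, which is positive by Assumption~\ref{as:exploration} (this is also what makes $C_{\mathcal G}$ finite, by continuity and compactness). This gives a bound of $\|d^{\star}_\mu/d^{\pi}_\rho\|_\infty\max_{\tilde\pi_i}\sum_s d_\rho^\pi(s)\langle\tilde\pi_i(\cdot|s)-\pi_i(\cdot|s),Q_{s,\cdot}\rangle$. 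By the policy gradient theorem for the fixed reward $r$, that last quantity equals $\langle\nabla_{\pi_i}u_{\rho,i}(\pi),\tilde\pi_i-\pi_i\rangle$ up to the $1/(1-\gamma)$ normalization.

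\textbf{Main obstacle.} The main obstacle is the mismatch between $\mu$ and $\rho$. The pseudo-reward $r=\nabla G$ is evaluated at $\lambda_\mu$, whereas $\nabla_{\pi_i}u_{\rho,i}$ involves $\nabla G$ at $\lambda_\rho$ (chain rule: $\nabla_{\pi_i}u_{\rho,i}=\nabla_{\pi_i}\lambda_\rho^\top\nabla G(\lambda_\rho)$). These gradients generally differ, so the identification in the last step only works cleanly when $\rho=\mu$. For $\rho\ne\mu$ I would first try to prove the case $\rho=\mu$ fully. To transfer it, I would then exploit the fact that stationarity of $u_{\rho,i}$ in $\pi_i$ is equivalent to $\pi_i$ being greedy on states with positive $d_\rho$ (all of them) for the induced reward. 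If necessary, I would state the bound with $\rho=\mu$ in the reward and check carefully how the paper intends $\rho$ to enter. The remaining checks are routine: the chain rule through $\lambda$, the constants in $1-\gamma$, and finiteness of $C_{\mathcal G}$.
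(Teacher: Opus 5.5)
Your argument is complete and correct for $\rho=\mu$. The obstacle you flag for $\rho\neq\mu$ is not a technicality: for nonlinear $F_i$ the inequality as stated is false, so no transfer argument can close it.

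\textbf{Counterexample.} Take $N=1$, or add dummy agents with a single action. Let $\mathcal{S}=\{s_1,s_2\}$ and actions $\{a_1,a_2\}$, with every state absorbing under every action, so that $\lambda^{\pi}_{\mu}(s,a)=\mu(s)\pi(a|s)$. Let $F(\lambda)=-\|\lambda-c\|_2^2$ with $c(s,a_1)=1/2$ and $c(s,a_2)=0$ at both states. Take $\rho$ uniform and $\pi$ playing $a_1$ everywhere. Then $\lambda^{\pi}_{\rho}=c$, so $\nabla_{\pi}u_{\rho}(\pi)=0$ and the right-hand side vanishes. But for $\mu=(0.9,0.1)$, the deviation $\pi'(a_1|s_1)=7/9$ (unchanged at $s_2$) raises $u_{\mu}$ from $-0.32$ to $-0.24$.

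\textbf{Why your transfer fails.} Stationarity of $u_{\rho,i}$ means $\pi_i$ is greedy for the pseudo-reward $\nabla G(\lambda^{\pi}_{\rho,i})$, which here is identically zero. That says nothing about the linearization at $\lambda^{\pi}_{\mu,i}$, which here strictly prefers $a_2$ at $s_1$.

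\textbf{The paper's proof has the same hole.} Its last step identifies $\sum_j \nabla_{\pi_i} v^{\pi}_{\rho,j}(\bar r^{\pi}_{i,j})$ with $\nabla_{\pi_i}u_{\rho,i}(\pi)$ via the chain-rule formula behind Proposition~\ref{prop:individual-pg}. But $\bar r^{\pi}_{i,j}$ is evaluated at $\lambda^{\pi}_{\mu,1:N}$, so that identity holds only for $\rho=\mu$, or for $F_i$ affine in $\lambda$ (the Markov-game case, where the classical two-distribution bound is valid).

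\textbf{What is actually provable.} Your fallback is the right one. Your argument proves the statement for $\rho=\mu$. For general $\rho$ it proves the hybrid bound with $\nabla_{\pi_i}\langle \lambda^{\pi}_{\rho,i}, \nabla G(\lambda^{\pi}_{\mu,i})\rangle$ (pseudo-reward frozen at $\mu$) in place of $\nabla_{\pi_i}u_{\rho,i}(\pi)$. The paper's later results (Proposition~\ref{prop:fos-nash-equivalence} and the convergence theorems) only use $\rho=\mu$. However, its remark that Nash policies do not depend on the full-support initial distribution relies on the false version, and the same example refutes it.

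\textbf{Comparison with the paper for $\rho=\mu$.} Your route genuinely differs from the paper's. The paper linearizes $F_i$ jointly in $(\lambda_1,\dots,\lambda_N)$, which gives a sum over $j$ of value gaps with separate pseudo-rewards $\bar r^{\pi}_{i,j}$ in the averaged MDP. It then applies the single-agent gradient-domination lemma to each term, with an optimal policy of that linearized MDP, and sums. You instead use $\lambda_{\mu,j}=L_j\lambda_{\mu,i}$ to fold everything into one concave $G$ and one pseudo-reward on $\mathcal{S}\times\mathcal{A}_i$. You then run the performance-difference argument once, with an actual best response as comparator. This is what makes the stated form come out, in two ways:
\begin{itemize}
\item The coefficient is exactly the one in $C_{\mathcal{G}}$. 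The paper's per-$j$ comparators need not lie in $\Pi_i^{\star}(\pi_{-i})$, nor coincide across $j$.
\item A single maximum over $\tilde\pi_i$ appears on the right. Applying the lemma term by term instead yields a sum of separate maxima, which is at least, not at most, $\max_{\tilde\pi_i}\sum_j\langle\cdot\rangle$.
\end{itemize}
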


Prop.~\ref{prop:agentwise-grad-dom} generalizes the per-agent gradient domination property shown in \cite{daskalakis-foster-golowich20} for 2-player zero-sum Markov games, in \cite{leonardos-et-al22iclr} for Markov potential games and in \cite{zhang-et-al24tac,giannou-et-al22neurips} for general-sum Markov games.

\begin{remark} The mismatch coefficient~$C_{\mathcal{G}}$ characterizes hardness of state exploration. This constant can be loosely bounded by $1/((1-\gamma)\min_{s \in \cS} \rho(s))$ when $\rho$ has full support, since for any policy~$\pi \in \Pi, (1-\gamma) \rho(s) \leq d_{\rho}^{\pi}(s) \leq 1$ for all~$s \in \cS.$ Note also, as mentioned in \citet[Proposition 1]{daskalakis-foster-golowich20}, that finiteness of the minimax coefficient~$C_{\mathcal{G}}$ is in general weaker than finiteness of the concentrability coefficient~$ \max_{\pi \in \Pi} \left\| \frac{d_{\mu}^{\pi}}{\mu} \right\|_{\infty}$ which has been used in prior work in Markov games \citep{leonardos-et-al22iclr,ding-et-al22icml,zhang-et-al24tac,alatur-barakat-he24cdc}. 
\end{remark}

Recently, a similar result was proved in \cite[Theorem~1]{barakat-et-al25rlgu} for the \textit{single-agent} case and in \cite[Lemma 2.5]{kalogiannis-et-al25zs-mgs} for \textit{2-player} convex Markov games. Note that (i) our result holds for any number $N$ of players and for any GUMG beyond $2$-player zero-sum cMGs, and (ii) our bound is tighter as it uses a distribution mismatch coefficient which matches the two-player Markov game gradient domination result of \cite{daskalakis-foster-golowich20}. In particular, our proof technique differs from \cite{kalogiannis-et-al25zs-mgs} which rather relies on hidden convexity \citep{fatkhullin-et-al25}. Instead, we leverage the single agent gradient domination RL result of \cite{agarwal-et-al21jmlr} in our setting and make use of our concavity assumption. The first step in the proof uses concavity to relate utility deviation to standard single-agent RL value function deviation using pseudo-rewards (by defining a suitable MDP to reduce general utility to standard RL). The second step applies single-agent gradient domination for standard RL when the policies of other players' but $i$ are fixed, and uses our policy gradient expression to recover the first-order variation in the desired inequality. See proof in Appendix~\ref{sec:proof-prop-agentwise-grad-dom}. 

In what follows, we use the agent-wise gradient domination property of utilities (Proposition~\ref{prop:agentwise-grad-dom}) to relate NE to fixed points of the projected pseudo-gradient dynamics. First, we introduce the \textit{pseudo-gradient} vector field notation for every joint policy~$\pi \in \Pi$: 
\begin{equation}
\label{eq:pseudo-grad}
v_i(\pi) :=  \nabla_{\pi_i} u_{\mu,i}(\pi)   \,,\quad  v(\pi) =  (v_i(\pi))_{i\in \mathcal{N}}\,.
\end{equation}
Note that the vector field~$v$ is not necessarily a gradient field in general, i.e. the existence of a function~$f$ s.t. $v = \nabla f$ is not guaranteed in general. Indeed, each~$v_i$ is the gradient of a different function~$u_i\,.$

It is well-known that Nash equilibria are first-order stationary. However, the converse implication is not true in general. Thanks to agent-wise gradient domination, the next result shows that Nash policies are equivalent to first-order stationary points in GUMGs.  

\begin{proposition}[First-order Stationary policies are Nash policies] 
\label{prop:fos-nash-equivalence}
Let Assumptions~\ref{as:concavity} and~\ref{as:exploration} hold. 
A (joint) policy $\pi^{\star} \in \Pi$ is a Nash policy in a GUMG if and only if it is first-order stationary, i.e. it satisfies: 
\begin{equation}
\label{eq:def-fos}
\langle v(\pi^{\star}), \pi - \pi^{\star} \rangle \leq 0, \quad \forall \pi \in \Pi\,. 
\tag{FOS}
\end{equation}
Moreover, for any $\epsilon > 0,$ if a policy $\pi^{\star}$ is a $\varepsilon$-FOS policy, i.e. it satisfies: 
$\langle v(\pi^{\star}), \pi - \pi^{\star} \rangle \leq \varepsilon, \, \forall \pi \in \Pi\,,$ 
then $\pi^{\star}$ is also a $(C_{\mathcal{G}} \,\varepsilon)$-Nash policy, where  $C_{\mathcal{G}}$ is the distribution mismatch coefficient defined in Prop.~\ref{prop:agentwise-grad-dom}. 
\end{proposition}

\ref{eq:def-fos} points coincide with fixed points of the projected pseudo-gradient dynamics. Indeed, a policy~$\pi^{\star}$ is \ref{eq:def-fos} if and only if it is a fixed point of the projected gradient dynamics, i.e.,   
$\pi^{\star} = \text{Proj}_{\Pi}(\pi^{\star} + \eta\, v(\pi^{\star})),\forall \eta > 0$ (see Lemma~\ref{prop:characterization-fos-fixed-point} in Appendix~\ref{app:proof-lemma-fos-fixed-pts} for a proof of this straightforward result).
Combining this result with Prop.~\ref{prop:fos-nash-equivalence}, we obtain a characterization of Nash policies in cMGs.

\begin{theorem} 
\label{prop:equiv-ne-fixed-point-projgrad}
Let Assumptions~\ref{as:concavity} and~\ref{as:exploration} hold.  
A (joint) policy $\pi^{\star} \in \Pi$ is a NE in a GUMG if and only if it is a fixed point of the projected gradient dynamics, i.e.: 
\begin{equation}
\label{eq:fixed-point-proj-grad}
\pi^{\star} = \text{Proj}_{\Pi}(\pi^{\star} + \eta\, v(\pi^{\star})), \quad \forall \eta > 0\,,
\end{equation}
or equivalently, using the product structure of the joint policy space and the properties of the projection, 
$\pi_i^{\star} = \text{Proj}_{\Pi_i}(\pi_i^{\star} + \eta\, v_i(\pi^{\star})), \forall i \in \mathcal{N},\, \forall \eta > 0\,.$
\end{theorem}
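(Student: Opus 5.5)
The plan is to chain two equivalences: NE $\Leftrightarrow$ \eqref{eq:def-fos}, which is Proposition~\ref{prop:fos-nash-equivalence} and which we assume, and \eqref{eq:def-fos} $\Leftrightarrow$ fixed point of the projected pseudo-gradient map. So the only work is the second equivalence, together with the reduction from the joint projection to agent-wise projections.

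For the second equivalence, fix $\eta>0$ and apply the variational characterization of the Euclidean projection onto the closed convex set $\Pi$. For $z\in\R^d$, $y=\text{Proj}_{\Pi}(z)$ if and only if $y \in \Pi$ and
\[
\langle z-y,\pi-y\rangle\le 0 \quad \text{for all } \pi\in\Pi .
\]
Take $z=\pi^\star+\eta v(\pi^\star)$ and $y=\pi^\star$. Then $\pi^\star=\text{Proj}_\Pi(\pi^\star+\eta v(\pi^\star))$ holds if and only if $\eta\langle v(\pi^\star),\pi-\pi^\star\rangle\le 0$ for all $\pi\in\Pi$. Since $\eta>0$, this is exactly \eqref{eq:def-fos}, and the condition does not depend on $\eta$. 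Hence the fixed-point property holds for one $\eta>0$ if and only if it holds for all $\eta>0$, and \eqref{eq:fixed-point-proj-grad} is equivalent to \eqref{eq:def-fos}. This is the content of Lemma~\ref{prop:characterization-fos-fixed-point}, so I would either cite it or reproduce these few lines.

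For the agent-wise form, use that $\Pi=\times_i\Pi_i$ is a product of convex sets and that the squared Euclidean norm separates across blocks. Minimizing $\|\pi-z\|^2=\sum_i\|\pi_i-z_i\|^2$ over the product set therefore decouples, giving $\text{Proj}_\Pi(z)=(\text{Proj}_{\Pi_i}(z_i))_{i}$. It follows that $\pi^\star=\text{Proj}_\Pi(\pi^\star+\eta v(\pi^\star))$ if and only if $\pi_i^\star=\text{Proj}_{\Pi_i}(\pi_i^\star+\eta v_i(\pi^\star))$ for every $i$.

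Equivalently, \eqref{eq:def-fos} splits into per-agent conditions $\langle v_i(\pi^\star),\pi_i-\pi_i^\star\rangle\le0$ for all $\pi_i\in\Pi_i$. To go from the joint condition to agent $i$'s condition, take $\pi=(\pi_i,\pi^\star_{-i})$. To go back, sum the per-agent conditions over $i$.

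Combining these steps with Proposition~\ref{prop:fos-nash-equivalence} gives the theorem. The genuinely hard step, turning first-order stationarity into the Nash property, is already handled by the agent-wise gradient domination of Proposition~\ref{prop:agentwise-grad-dom} (applied with $\rho=\mu$). What remains here is only some care about notation. Each $\Pi_i$ is a product of simplices $\Delta(\cA_i)^{\cS}$ and so is closed and convex, which makes the projection well defined and unique. The gradient $\nabla_{\pi_i}u_{\mu,i}$ must be understood as the gradient in the ambient space $\R^{|\cS||\cA_i|}$, with $u_{\mu,i}$ differentiable there as assumed. Under that reading the inner products in \eqref{eq:def-fos} and in the projection characterization are the same object.
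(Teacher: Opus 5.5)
Your proposal is correct and takes essentially the paper's route: the paper also proves the theorem by chaining Proposition~\ref{prop:fos-nash-equivalence} (NE $\Leftrightarrow$ FOS) with Lemma~\ref{prop:characterization-fos-fixed-point}, and that lemma's proof is the same variational characterization of the projection that you reproduce. Your explicit derivation of the agent-wise form, via separability of the squared norm over the product set $\Pi=\times_i\Pi_i$, spells out what the paper only asserts in passing.
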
 

\begin{corollary}
\label{cor:existence-NE-cMG}
Under Assumption~\ref{as:concavity}, there exists at least one Nash equilibrium in GUMGs as defined in Definition~\ref{def:gumg}. Moreover, there exists at least one MPE.
\end{corollary}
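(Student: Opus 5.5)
The plan has two parts: existence of a NE for every full-support $\mu$ via Brouwer, and existence of an MPE via a limiting argument in the initial distribution. Note that the corollary assumes only Assumption~\ref{as:concavity}. So we first restrict to full-support distributions $\mu$; for these Assumption~\ref{as:exploration} holds automatically, since $d_\mu^\pi(s)\ge(1-\gamma)\mu(s)>0$.

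For existence, fix a full-support $\mu$ and $\eta>0$, and define $T:\Pi\to\Pi$ by $T(\pi)=\text{Proj}_{\Pi}(\pi+\eta\, v(\pi))$. The set $\Pi$ is a finite product of simplices, hence nonempty, convex and compact. The map $\pi\mapsto\lambda^\pi_{\mu,i}$ is continuous, indeed smooth. This follows from the closed form $\lambda^\pi_\mu = (1-\gamma)\,\mathrm{diag}$-type expression $(I-\gamma P_\pi^\top)^{-1}\mu$ combined with $\pi$, where $I-\gamma P_\pi$ is invertible. I will assume, as the paper does implicitly, that $F_i$ is continuously differentiable. Then $v$ is continuous on $\Pi$, and since the Euclidean projection is $1$-Lipschitz, $T$ is continuous. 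Brouwer's theorem gives a fixed point $\pi^\star$. Because the fixed-point property for one $\eta$ already yields \eqref{eq:def-fos} through the variational characterization of the projection, Theorem~\ref{prop:equiv-ne-fixed-point-projgrad} (equivalently Prop.~\ref{prop:fos-nash-equivalence}) shows that $\pi^\star$ is a NE.

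For the MPE, the key step is to show that a NE $\pi^\star$ for one full-support $\mu$ is a NE for every full-support $\rho$. I would use Proposition~\ref{prop:agentwise-grad-dom} with the roles arranged so that the left-hand side involves $u_{\rho,i}$ and the gradient involves $u_{\mu,i}$:
\[
u_{\rho,i}(\pi_i',\pi^\star_{-i})-u_{\rho,i}(\pi^\star)\le C\max_{\tilde\pi_i}\langle\nabla_{\pi_i}u_{\mu,i}(\pi^\star),\tilde\pi_i-\pi^\star_i\rangle\le 0,
\]
where the final inequality holds because $\pi^\star$ is FOS for $\mu$. This gives a single $\pi^\star$ that is a NE simultaneously for all full-support distributions. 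However, $\pi^\star$ still depends on the reference $\mu$ used to build it only through the dynamics; if this step is shaky, I would instead run a limiting argument. Take $\mu_\varepsilon=(1-\varepsilon)\delta_s+\varepsilon\,\mathrm{Unif}(\mathcal S)$. By the above, one policy $\pi^\star$ is a NE for all $\mu_\varepsilon$ and all $s$ simultaneously; under the limiting route we would instead get a family $\pi^{\star,\varepsilon}$ and use compactness of $\Pi$ to extract a convergent subsequence.

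To pass to $\varepsilon\to0$, observe that $\lambda^\pi_{\mu,i}$ is linear in $\mu$ and jointly continuous in $(\mu,\pi)$, so $u_{\mu,i}(\pi)$ is jointly continuous in $(\mu,\pi)$. Taking limits in $u_{\mu_\varepsilon,i}(\pi_i',\pi^\star_{-i})\le u_{\mu_\varepsilon,i}(\pi^\star)$ gives the NE inequality at $\mu=\delta_s$ for every $s$ and every $\pi_i'$, so the resulting policy is an MPE. In the subsequence variant, joint continuity handles the moving $\pi^{\star,\varepsilon}$ as well.

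The main obstacle is the independence-of-$\mu$ step, namely applying Proposition~\ref{prop:agentwise-grad-dom} with two different distributions. This requires checking that the coefficient $C_{\mathcal G}$ is finite for the pair $(\rho,\mu)$, and finiteness follows since $d_\mu^{\pi}(s)\ge(1-\gamma)\mu(s)>0$ bounds the denominator. A second point to watch is that $u_{\delta_s,i}$ is not a GUMG utility satisfying Assumption~\ref{as:exploration}. This is exactly why we work at full-support $\mu$ and only take the limit at the end, relying on continuity rather than gradient domination at $\delta_s$.
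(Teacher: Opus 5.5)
Your Brouwer argument for a full-support $\mu$ is correct and is the paper's argument. The gap is in the MPE part, at the step ``a NE for one full-support $\mu$ is a NE for every full-support $\rho$.'' Proposition~\ref{prop:agentwise-grad-dom} is stated for arbitrary $\mu,\rho$, but its proof does not support two different distributions when $F_i$ is nonlinear. It linearizes $F_i$ at the occupancies of the distribution on the left-hand side. It then applies single-agent gradient domination to a value function whose pseudo-rewards are \emph{frozen} at those occupancies. The resulting gradient equals the utility gradient for the second distribution only if the two distributions coincide, or if the pseudo-rewards do not depend on the occupancies (linear $F_i$, i.e.\ Markov games). In your arrangement, the right-hand side you actually obtain is $\sum_j\langle\nabla_{\pi_i}v^{\pi^\star}_{\mu,j}(r^{\rho}_{i,j}),\tilde\pi_i-\pi^\star_i\rangle$ with $r^{\rho}_{i,j}:=\nabla_{\lambda_j}F_i(\lambda^{\pi^\star}_{\rho,1:N},\pi^\star_{-i})$. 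First-order stationarity at $\mu$ involves the $\mu$-pseudo-rewards and gives no sign information on this quantity; finiteness of the mismatch coefficient is not the issue. Your fallback does not repair this either. The subsequence limit is taken for a fixed $s$, so it produces a policy $\pi^{\star,s}$ that is a NE for $\delta_s$ only, and nothing forces the limits for different $s$ to coincide. The paper's appendix argument is exactly this fixed-$s$ limit, with the independence asserted in the subsequent remark, so it has the same defect.

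In fact the MPE claim fails for nonlinear concave utilities, so this step cannot be patched. Take $N=1$ and $\mathcal{S}=\{1,2\}$, with actions ``stay''/``switch'' and deterministic transitions. Let $p_k:=\pi(\text{switch}\mid k)$, $\gamma\ge\tfrac12$, and use the smooth concave utility $F(\lambda)=-(d(1)-\tfrac12)^2$ with $d(1)=\sum_a\lambda(1,a)$. For $\mu_\varepsilon=(1-\varepsilon)\delta_1+\varepsilon\delta_2$ a direct computation gives
\[
d^{\pi}_{\mu_\varepsilon}(1)=\frac{(1-\varepsilon)(1-\gamma)+\gamma p_2}{1-\gamma+\gamma(p_1+p_2)}.
\]
So the optimal policies for $\mu_\varepsilon$ are exactly those with $\gamma(p_1-p_2)=(1-2\varepsilon)(1-\gamma)$. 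By symmetry, those for $\varepsilon\delta_1+(1-\varepsilon)\delta_2$ satisfy $\gamma(p_2-p_1)=(1-2\varepsilon)(1-\gamma)$. For $\varepsilon\in(0,\tfrac12)$ both distributions have full support and the two sets are disjoint. This refutes the independence step, and with it the two-distribution form of Proposition~\ref{prop:agentwise-grad-dom}. At $\varepsilon=0$ no policy is optimal from both $\delta_1$ and $\delta_2$, so no MPE exists.

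What your limiting argument does establish is existence of a NE for every initial distribution, including non-full-support ones. To get this, run it with $(1-\varepsilon)\mu+\varepsilon\,\mathrm{Unif}(\mathcal{S})$ for an arbitrary $\mu$. Your treatment of the first claim, restricted to full-support $\mu$, otherwise leaves this case open. The MPE conclusion is recovered when all $F_i$ are linear, where your independence argument is the standard one.
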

\begin{proof}
Fixed points of the projected gradient dynamics described in \eqref{eq:fixed-point-proj-grad} exist by Brouwer's fixed point theorem under continuity of the mapping~$v$ (which follows from Proposition~\ref{prop:smoothness} below) observing that the set of policies~$\Pi$ is compact. 
It follows from the equivalence of Proposition~\ref{prop:equiv-ne-fixed-point-projgrad} that Nash policies do exist.   
\end{proof}

In the special case of Markov games, corollary~\ref{cor:existence-NE-cMG} provides an alternative proof to the original proof of \cite{fink64} which uses Kakutani's fixed point theorem (which is for set-valued functions). Our first-order characterization of NE in GUMGs allows us to bypass the need for a set-valued function theorem and directly apply Brouwer's fixed point theorem. Existence of NE for cMGs has been recently proved in \cite{gemp-et-al25convex-mgs} via \cite{debreu52social-existence-eq-thm}'s theorem using contractible sets (which are not necessarily convex).  
Thm.~\ref{prop:equiv-ne-fixed-point-projgrad} shows that it is sufficient to find a fixed point of the projected pseudo-gradient dynamics to find a NE. This motivates the design of policy gradient algorithms for GUMGs, which we discuss in the next section. 

\section{POLICY GRADIENT ALGORITHM FOR GUMGs} 
\label{sec:pg-algo}

In this section, we design a policy gradient algorithm for GUMGs. We first establish a 
useful multi-agent policy gradient theorem. 
In GUMGs, the gradient of the utility of each player w.r.t. their own policy is given by the following result. 

\begin{proposition}[Multi-Agent PG Theorem]
\label{prop:individual-pg}
    For any agent~$i \in \mathcal{N}$, any joint policy~$\pi = (\pi_1, \cdots, \pi_N) \in \Pi$, 
    \vspace{-2mm}
    \begin{enumerate}[label=(\roman*),leftmargin=*]
        \item \label{prop-i:individual-pg} $\nabla_{\pi_i} u_{\mu,i}(\pi) = \mathbb{E}_{\mu, \pi}\left[ \sum_{t=0}^{+\infty} \gamma^t R_i(s_t, a_t)\, \psi_{\pi_i}^t \right]\,,$
        where $R_i(s_t, a_t) := \sum_{j=1}^N \bar{r}_{ij}^{\pi}(s_t, a_{j,t})$,\\
        $\bar{r}_{ij}^{\pi} := \nabla_{\lambda_j} F_i(\lambda_{\mu,1}^{\pi}, \cdots, \lambda_{\mu,N}^{\pi}, \pi_{-i})$ for $j \in \mathcal{N}$,\\
        $\psi_{\pi_i}^t := \sum_{t' = 0}^t \nabla_{\pi_i} \log \pi_i(a_{i,t'}|s_{t'})\,.$
        \item \label{prop-ii:individual-pg} For any state~$s \in \mathcal{S}$ and any action $a_i \in \mathcal{A}_i$, 
        \begin{align}
        \frac{\partial u_{\mu,i}(\pi)}{\partial \pi_i(a_i|s)} &= \frac{1}{1-\gamma} d_{\mu}^{\pi}(s) \sum_{j=1}^N \bar{Q}_{s,a_i}^{\pi}(\bar{r}_{i,j}^{\pi})\,,\\
        \bar{Q}_{s,a_i}^{\pi}(\bar{r}_{i,j}^{\pi}) &:= \sum_{a_{-i} \in \mathcal{A}_{-i}} \pi_{-i}(a_{-i}|s) Q_{s,a}^{\pi}(\bar{r}_{i,j}^{\pi}) \nonumber\,,
        \end{align}
        where $a = (a_i, a_{-i})\,.$ 
    \end{enumerate} 
\end{proposition}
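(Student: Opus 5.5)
The plan is to apply the chain rule through the occupancy measures, and then reduce each resulting term to a standard single-agent policy gradient with a pseudo-reward. The policy $\pi_i$ enters $u_{\mu,i}(\pi) = F_i(\lambda_{\mu,1}^{\pi}, \dots, \lambda_{\mu,N}^{\pi}, \pi_{-i})$ only through the marginals $\lambda_{\mu,j}^{\pi}$, since the last argument $\pi_{-i}$ does not depend on $\pi_i$. Since $F_i$ is differentiable, the chain rule gives
\[
\nabla_{\pi_i} u_{\mu,i}(\pi) = \sum_{j=1}^N \big(\nabla_{\pi_i} \lambda_{\mu,j}^{\pi}\big)^{\top} \nabla_{\lambda_j} F_i(\lambda_{\mu,1}^{\pi}, \dots, \lambda_{\mu,N}^{\pi}, \pi_{-i}) = \sum_{j=1}^N \big(\nabla_{\pi_i} \lambda_{\mu,j}^{\pi}\big)^{\top} \bar r_{ij}^{\pi}.
\]
The key observation is that, for a \emph{fixed} vector $r \in \mathbb{R}^{|\mathcal{S}||\mathcal{A}_j|}$, the bilinear form $\langle \lambda_{\mu,j}^{\pi}, r\rangle$ is a standard discounted return. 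Writing it as $\sum_{s,a_j} \lambda_{\mu,j}^{\pi}(s,a_j) r(s,a_j)$ and using \eqref{eq:marginal-occup-measure}, we get $(1-\gamma)\,\mathbb{E}_{\mu,\pi}[\sum_t \gamma^t r(s_t,a_{j,t})]$. Equivalently, lifting $r$ to the joint action space by $\tilde r(s,a) := r(s,a_j)$, this equals $(1-\gamma) V^{\pi}_{\mu}(\tilde r)$ in the MDP with joint action space $\mathcal{A}$. Therefore $(\nabla_{\pi_i}\lambda_{\mu,j}^{\pi})^{\top}\bar r_{ij}^{\pi}$ is the gradient of $(1-\gamma)V^{\pi}_{\mu}(\tilde r)$ with respect to $\pi_i$, evaluated at $r = \bar r_{ij}^{\pi}$. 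Here the reward is frozen, so it is not differentiated.

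For \ref{prop-i:individual-pg}, apply the REINFORCE / likelihood-ratio argument to each frozen-reward return. The probability of a truncated trajectory $(s_0,a_0,\dots,s_t,a_t)$ factorizes as $\mu(s_0)\prod_{t'\le t}\prod_k \pi_k(a_{k,t'}|s_{t'}) \prod_{t'<t} P(s_{t'+1}|s_{t'},a_{t'})$. Only the factors $\pi_i(a_{i,t'}|s_{t'})$ depend on $\pi_i$. Differentiating $\mathbb{E}[\gamma^t \tilde r(s_t,a_t)]$ therefore yields $\mathbb{E}[\gamma^t \tilde r(s_t,a_t)\,\psi^t_{\pi_i}]$, with the causal score sum up to time $t$. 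Next, sum over $t$ and justify the interchange of derivative and infinite sum. This follows by dominated convergence: the scores are bounded on the interior of the simplex, and the terms are geometrically decaying. Summing over $j$ and absorbing the factor $(1-\gamma)$ into the normalization yields the stated formula with $R_i = \sum_j \bar r_{ij}^{\pi}(s_t,a_{j,t})$. I expect the normalization constants to need care: the $(1-\gamma)$ in $\lambda$ and the $1/(1-\gamma)$ in \ref{prop-ii:individual-pg} must be matched consistently.

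For \ref{prop-ii:individual-pg}, use instead the direct tabular policy gradient (as in \citet{agarwal-et-al21jmlr}) for the MDP with joint actions, where the joint policy is $\pi(a|s) = \prod_k \pi_k(a_k|s)$. In that MDP,
\[
\frac{\partial V^{\pi}_\mu(\tilde r)}{\partial \pi(a|s)} = \frac{1}{1-\gamma} d_\mu^{\pi}(s)\, Q^{\pi}_{s,a}(\tilde r),
\]
and $\partial \pi(a|s)/\partial \pi_i(a_i'|s) = \mathbf{1}[a_i=a_i']\,\pi_{-i}(a_{-i}|s)$. Combining the two via the chain rule produces $\frac{1}{1-\gamma}d_\mu^\pi(s)\,\bar Q^{\pi}_{s,a_i}(\bar r_{ij}^{\pi})$ for each $j$, and summing over $j$ gives the claim.

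The main obstacle is justifying that the frozen-reward trick is legitimate, namely that the chain rule through $\lambda$ exactly separates into "gradient of the occupancy" times the fixed vector $\bar r_{ij}^{\pi}$. Doing so requires differentiability of $\pi \mapsto \lambda_{\mu,j}^{\pi}$. I would establish this from the closed form $d$-vector $(1-\gamma)\mu^\top(I-\gamma P_\pi)^{-1}$, which is a rational function of $\pi$ with nonvanishing denominator. I would also treat derivatives at the simplex boundary as derivatives of these smooth extensions to an open neighborhood.
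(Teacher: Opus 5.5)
Your route is the paper's route. It uses the chain rule through the marginals with the pseudo-rewards $\bar r_{ij}^{\pi}$ held fixed, and lifts each $\bar r_{ij}^{\pi}$ to a reward on joint actions. For part (i) it uses the likelihood-ratio argument, in which only the $\pi_i$-factors of the product policy enter the score. For part (ii) it uses the direct-parameterization gradient $\frac{1}{1-\gamma}d_\mu^{\pi}(s)Q^{\pi}_{s,a}$ together with $\partial\pi(a|s)/\partial\pi_i(a_i'|s)=\mathbf{1}[a_i=a_i']\,\pi_{-i}(a_{-i}|s)$; the paper phrases this step through $\partial\log\pi(a'|s')/\partial\pi_i(a_i|s)$, which is the same thing. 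Your differentiability argument via $(I-\gamma P_\pi)^{-1}$ is a justification the paper does not give.

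The one step that does not hold as written is the treatment of the constant. With $\lambda$ normalized as in \eqref{eq:marginal-occup-measure}, your own identity $\langle\lambda_{\mu,j}^{\pi},r\rangle=(1-\gamma)V_\mu^{\pi}(\tilde r)$ gives
\[
\nabla_{\pi_i}u_{\mu,i}(\pi)=(1-\gamma)\,\mathbb{E}_{\mu,\pi}\Big[\sum_{t\ge 0}\gamma^t R_i(s_t,a_t)\,\psi^t_{\pi_i}\Big],\qquad \frac{\partial u_{\mu,i}(\pi)}{\partial\pi_i(a_i|s)}=d_\mu^{\pi}(s)\sum_{j=1}^N\bar Q^{\pi}_{s,a_i}(\bar r^{\pi}_{i,j}).
\]
So both formulas in the statement are too large by a factor $1/(1-\gamma)$. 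A multiplicative constant cannot be ``absorbed into the normalization,'' and in part (ii) you dropped the $(1-\gamma)$ you had just identified.

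The paper's proof reaches the stated constants only because it writes $\langle\lambda_{\mu,j}^{\pi},r_j\rangle=\mathbb{E}_{\mu,\pi}[\sum_t\gamma^t r_j(s_t,a_{j,t})]$, i.e.\ it tacitly uses the unnormalized occupancy measure. You should either adopt that convention explicitly or carry the factor $(1-\gamma)$. Since it is a positive scalar, it does not change which policies are first-order stationary, but it does change explicit constants downstream.
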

The \textit{pseudo-reward} quantity~$\bar{r}_{i,j}^{\pi}$ is an \textit{implicit policy-dependent} reward (in contrast to rewards in classical RL or Markov games). 
For instance, in pure exploration where $F_i$ is the entropy function, the goal is to explore a state space without reward signals. 

Proposition~\ref{prop:individual-pg} recovers as particular cases recent existing multi-agent policy gradient theorems for Markov games (see e.g. \cite{leonardos-et-al22iclr}, \cite{zhang-et-al24tac}), for standard single-agent RL ($N=1$) \citep{sutton-et-al99PG,williams92reinforce} as well as for (single-agent) RL with general utilities \citep{zhang-et-al21rlgu}. As for the proof, we first use the chain rule  to differentiate our GUMG objective and then use the standard policy gradient using pseudo-rewards and the corresponding average Q functions. The proof is inspired from prior work in standard Markov games and single-agent RL with general utilities.

\noindent\textbf{Algorithm design.} 
Our results in section~\ref{sec:structure-general-cMGs} (Thm.~\ref{prop:equiv-ne-fixed-point-projgrad}) suggest to perform projected pseudo-gradient ascent:
\begin{equation}
\pi_{i}^{t+1} = \text{Proj}_{\Pi_i^{\alpha}}\left(\pi_{i}^{t} + \eta  \nabla_{\pi_i} u_{\mu,i}(\pi^t) \right)\,, 
\end{equation}
for each time step~$t$, using a positive step size~$\eta$. In light of Thm.~\ref{prop:individual-pg}, we replace exact gradients by stochastic policy gradients computed using sampled finite-length trajectories as described in Algorithm~\ref{algo}:
\begin{equation}
\label{eq:stochastic-pg}
\hat{\nabla}_{\pi_i} u_{\mu,i}(\pi) := \sum_{t=0}^{H-1} \gamma^t \hat{R}_i(s_t,a_t) \psi_{\pi_i}^t\,,
\end{equation}
where $\hat{R}_i(s_t,a_t) := \sum_{j=1}^N \hat{r}_{i,j}(s_t, a_{t,j})$ and~$\psi_{\pi_i}^t$ is defined in Prop.~\ref{prop:individual-pg}. 
Note that marginal occupancy measures are estimated by $\hat{\lambda}_{\mu,i}^{\pi}(s,a_i) = \hat{d}_{\mu}^{\pi}(s) \pi_i(a_i|s)$ where state occupancy measure estimates are given by: 
\begin{equation}
\label{eq:occup-estimate}
\hat{d}_{\mu}^{\pi}(s) =  \frac{1}{M} \sum_{k=1}^M \sum_{t=0}^{H-1} \gamma^t \mathds{1}_{\{s_t^{(k)} = s\}}\,. 
\end{equation} 

\begin{algorithm}[t]
   \caption{PG Algorithm for GUMGs (for $i \in \mathcal{N}$)}
   \label{algo}
\begin{algorithmic}[1]
   \STATE {\bfseries Input:} $\pi_i^0 \in \Pi_i, T, N \geq 1, \eta > 0,  H\,.$ 
   \FOR{$t=0, \ldots, T -  1$}
        \STATE Sample~$M$ trajectories~$(\tau_{i,k})_{1 \leq k \leq M}$ of length~$H$: $\tau_{i,k} = \{(s_t^{(k)}, a_{i,t}^{(k)})_{0 \leq t \leq H-1}\}, a_{i,t} \sim \pi_i^t(\cdot|s_t)\,.$ 
        \STATEx {\color{blue}\texttt{//Pseudo-reward estimation}}
        \STATE Compute $\hat{\lambda}_{\mu,i}^{\pi^t}$ using Eq.~\eqref{eq:occup-estimate}. 
        \STATE Send $\hat{\lambda}_{\mu,i}^{\pi^t}$ to agents $j \in \mathcal{N}, j \neq i\,.$
        \STATE Receive $\hat{\lambda}_{\mu,j}^{\pi^t}, j \in \mathcal{N}, j \neq i\,.$ 
        \STATE Compute $\hat{r}_{i,j} = \nabla_{\lambda_j} F_i(\hat{\lambda}_{\mu,1:N}^{\pi^t},\pi_{-i}^t), j \in \mathcal{N}$ 
        \STATEx {\color{blue}\texttt{//Policy gradient estimation}}
        \STATE Compute PG estimate $\hat{\nabla}_{\pi_i} u_{\mu,i}(\pi^t)$ using~\eqref{eq:stochastic-pg} (on-policy) or \eqref{eq:stochastic-pg-gen-model} (under a generative model). 
        \STATEx {\color{blue}\texttt{//Policy update}}
        \STATE $\pi_{i}^{t+1} = \text{Proj}_{\Pi_i^{\alpha}}\left(\pi_{i}^{t} + \eta  \hat{\nabla}_{\pi_i} u_{\mu,i}(\pi^t) \right)$
    \ENDFOR
\end{algorithmic}
\end{algorithm}

\noindent\textbf{Communication protocol.} In a GUMG, to implement the policy gradient Algorithm~\ref{algo}, each agent needs to compute its pseudo-reward $(\bar{r}_{ij}^{\pi} = \nabla_{\lambda_j} F_i(\lambda_{\mu,1}^{\pi}, \cdots, \lambda_{\mu,N}^{\pi}, \pi_{-i}))$ which depends on the (marginal) occupancy measures of all agents. A sufficient and practical protocol is for each agent to broadcast their estimated occupancy measure and  their policy to the other agents at the beginning of each interaction round (steps 5–6 of Algorithm~\ref{algo}). Each agent can then compute pseudo-rewards locally and update their policy independently. 
Note that policies of other agents can also be inferred from state-action occupancy measures via marginalization given knowledge of their action spaces. 
The communication requirement stems directly from the fact that individual GUMG objectives depend on other agents’ occupancies and policies in our general model. In particular, communication is not required in special cases such as standard Markov games or settings where each agent’s utility depends only on its own occupancy measure. We also emphasize that policies need not be communicated if the utilities $F_i$ do not have a dependence on $\pi_{-i}$ via their last argument (while they can still depend on $\pi_{-i}$ via the marginal occupancy measures).

We now highlight some features of our algorithm:

\noindent\textbf{Model-free algorithm.} Algorithm~\ref{algo} is model-free as it only relies on sampled trajectories to estimate policy gradients. In contrast, the algorithm proposed in \cite{gemp-et-al25convex-mgs} for approximate NE computation in cMGs assumes full knowledge of the transition model. Our algorithm design is rather inspired by PG methods for RL with general utilities \citep{zhang-et-al21rlgu}. Compared to Markov games, an important difference is the need to estimate marginal occupancies~$\lambda_{\mu,j}^{\pi}$ to estimate the pseudo-rewards (which only need to be estimated at sampled visited state-action pairs).  

\noindent\textbf{No regularization.} Our algorithm does not use any form of regularization. Prior work uses either entropic regularization with temperature annealing which introduces a bias (see Thm.~2 in \cite{gemp-et-al25convex-mgs} for cMGs) or quadratic occupancy regularization 
in zero-sum cMGs (see Thm.~4.5 in \cite{kalogiannis-et-al25zs-mgs}). 

\noindent\textbf{Decentralized play.} Algorithm~\ref{algo} can be executed in a decentralized way: each agent~$i$ updates their policy, communicating with other agents~$j\neq i$ to obtain estimates of their marginal occupancies~$\lambda_{\mu,j}^{\pi}$.  When~$F_i$ only depends on $\lambda_{\mu,i}^{\pi}$ (and not on the argument~$\pi_{-i}$), the gradient of agent $i$ simplifies and becomes independent from other agents' occupancies. Agent~$i$ only needs to estimate their own marginal occupancy and updates their policy \textit{independently}. 

\noindent\textbf{Simultaneous play.} In a GUMG,  Algorithm~\ref{algo} can be run simultaneously by $N$ players who do not need to take turns. In contrast, \cite{kalogiannis-et-al25zs-mgs} propose nested and alternating algorithms for 2-player zero-sum cMGs where players need to take turns, i.e. one player updates their policy while the other waits until the other finishes before updating their own policy, in an asynchronous manner.

\section{LEARNING NASH EQUILIBRIA IN POTENTIAL GUMGs} 
\label{sec:potential-convex-MGs}

In this section, we show how to learn approximate NE efficiently in GUMGs using Algorithm~\ref{algo} in the fully cooperative setting where the utilities of the agents enjoy a potential structure. We establish iteration and sample complexity results under different model access assumptions. 
We start by defining potential GUMGs which generalize Markov potential games. 

\begin{definition}[Potential GUMG]
\label{def:cMPG}
A GUMG as defined in Definition~\ref{def:gumg} is said to be potential if there exists a function~$\Phi: \Lambda \to \R$ s.t. for every~$i \in \mathcal{N}$, for every policy~$\pi_i' \in \Pi_i$, joint policy $\pi = (\pi_i, \pi_{-i}) \in \Pi,$
$u_{\mu,i}(\pi_i', \pi_{-i}) - u_{\mu,i}(\pi_i, \pi_{-i})
= \Phi(\lambda_{\mu,1:N}^{\pi_i', \pi_{-i}}) - \Phi(\lambda_{\mu,1:N}^{\pi_i, \pi_{-i}})\,,$
where $\lambda_{\mu,1:N}^{\pi_i', \pi_{-i}} = (\lambda_{\mu,1}^{\pi_i', \pi_{-i}}, \cdots, \lambda_{\mu,N}^{\pi_i', \pi_{-i}})\,.$
\end{definition}

Standard Markov potential games (MPGs) are particular cases of GUMGs, including non-identical interest examples. Beyond MPGs, even the common interest setting for GUMGs (where all the functions~$F_i$ are identical) captures interesting examples that are not instances of MPGs. These include all convex functions of occupancy measures (e.g. KL divergence in imitation learning, entropy in pure exploration) that are not captured by standard MPGs. See section~\ref{subsec:examples} for examples and appendix~\ref{app:limitations-potential-gumgs} for limitations. 

Using the potential function~$\Phi$ defined in \eqref{def:cMPG}, we introduce the function~$\phi_{\mu}: \Pi \to \mathbb{R}$ defined for every joint policy $\pi \in \Pi$ as follows: $\phi_{\mu}(\pi) := \Phi(\lambda_{\mu,1}^{\pi}, \cdots, \lambda_{\mu,N}^{\pi})\,.$ 

 For a potential GUMG, it is worth noting that for any joint policy~$\pi$ and any $i \in \mathcal{N}$,  
 $v_i(\pi) = \nabla_{\pi_i} u_{\mu,i}(\pi) = \nabla_{\pi_i} \phi_{\mu}(\pi)\,,$  
 i.e. the pseudo-gradient field~$v$ defined in \eqref{eq:pseudo-grad} is a gradient field ($v = \nabla \phi_{\mu}$).  
 Using this key observation, we establish the iteration complexity of Algorithm~\ref{algo}. 
 We make the following smoothness assumption on the player utilities~$F_i$ which is a multi-player version of the standard smoothness assumption made in the RL with general utility literature \citep{zhang-et-al21rlgu,barakat-et-al23rl-gen-ut,barakat-et-al25rlgu}.  

\begin{assumption}[Smoothness]
\label{as:smoothness}
For every~$i \in \mathcal{N}$, the function~$F_i$ satisfies two requirements. First, there exists~$l_{\infty} > 0$ s.t. for all~$j \in [N]$, $\lambda_{1:N} = (\lambda_1, \cdots, \lambda_N) \in \Lambda, \|\nabla_{\lambda_j} F_i(\lambda_{1:N}, \pi_{-i})]\|_{\infty} \leq l_{\infty}\,.$ Second, there exists~$L > 0$ s.t. for all $\lambda, \lambda' \in \Lambda$, and for all $\pi_{-i}, \pi_{-i}' \in \Pi_{-i},$
$\|\nabla_{\lambda_j} F_i(\lambda, \pi_{-i}) - \nabla_{\lambda_j} F_i(\lambda', \pi_{-i}')\|_{\infty}\\
\leq L (\|\lambda - \lambda'\|_1 + \|\pi_{-i} - \pi_{-i}'\|_1)\,.$
\end{assumption}

\begin{remark}
The first requirement in Assumption~\ref{as:smoothness} follows from compactness of the space of occupancy measures and continuity of all the partial derivatives of the functions~$F_i\,.$ We mainly state it formally to introduce the constant~$l_{\infty}.$ 
\end{remark}

Under Assumption~\ref{as:smoothness}, we obtain smoothness of the utility functions~$u_{\mu,i}$ for $i \in \mathcal{N}$. 
\begin{proposition}
\label{prop:smoothness}
Let Assumption~\ref{as:smoothness} hold.
Then, for all $i \in \mathcal{N},$ the utility $u_{\mu,i}$ defined in \eqref{eq:def-u_i} is smooth, i.e. there exists a constant $\beta > 0$ s.t. for all $\pi, \pi' \in \Pi,$ 
$ \|\nabla_{\pi} u_{\mu,i}(\pi) - \nabla_{\pi} u_{\mu,i}(\pi')\|_2 \leq \beta \|\pi - \pi'\|_2\,.$   
\end{proposition}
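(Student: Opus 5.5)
Our plan is to obtain smoothness through the chain rule, splitting the gradient of $u_{\mu,i}(\pi) = F_i(\lambda_{\mu,1}^{\pi},\dots,\lambda_{\mu,N}^{\pi},\pi_{-i})$ into two kinds of terms. The first kind comes from the occupancy measures. The second comes from the explicit dependence on $\pi_{-i}$. Concretely, for any coordinate block $\pi_k$ we write
\begin{equation*}
\nabla_{\pi_k} u_{\mu,i}(\pi) = \sum_{j=1}^N \big(\nabla_{\pi_k}\lambda_{\mu,j}^{\pi}\big)^{\top} \nabla_{\lambda_j}F_i(\lambda_{\mu,1:N}^{\pi},\pi_{-i}) + \mathds{1}_{\{k\neq i\}}\nabla_{\pi_k}F_i(\lambda_{\mu,1:N}^{\pi},\pi_{-i}).
\end{equation*}
To bound $\|\nabla u_{\mu,i}(\pi)-\nabla u_{\mu,i}(\pi')\|_2$, we add and subtract mixed terms in each summand. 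This gives two pieces per summand:
\begin{itemize}
\item[(a)] $(\nabla\lambda^{\pi}_j-\nabla\lambda^{\pi'}_j)^{\top}\nabla_{\lambda_j}F_i(\cdot)$, which we control by the first part of Assumption~\ref{as:smoothness} (the bound $l_\infty$) together with a Lipschitz bound on the Jacobian $\pi\mapsto\nabla_\pi\lambda_{\mu,j}^{\pi}$;
\item[(b)] $(\nabla\lambda^{\pi'}_j)^{\top}\big(\nabla_{\lambda_j}F_i(\lambda^{\pi},\pi_{-i})-\nabla_{\lambda_j}F_i(\lambda^{\pi'},\pi'_{-i})\big)$, which we control by the second part of Assumption~\ref{as:smoothness} together with a bound on the Jacobian of $\lambda$ and Lipschitzness of $\pi\mapsto\lambda_{\mu,j}^{\pi}$ in $\ell_1$.
\end{itemize}

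All the needed facts about occupancy measures reduce to standard single-agent statements, by viewing the linear functional $\langle r,\lambda^{\pi}_{\mu,j}\rangle$ as a discounted value $ (1-\gamma)V^{\pi}_\mu(r)$ with a bounded reward $r(s,a)=r'(s,a_j)$ on the joint MDP. Three known facts are then used.
\begin{itemize}
\item Lipschitzness: $\|\lambda^{\pi}_{\mu,j}-\lambda^{\pi'}_{\mu,j}\|_1\le \|\lambda^{\pi}_\mu-\lambda^{\pi'}_\mu\|_1 \le \frac{C}{1-\gamma}\|\pi-\pi'\|_1$. This follows from the Bellman flow equation $\lambda = (1-\gamma)\mu\pi + \gamma P^{\pi}\lambda$, or from the performance difference lemma with $\|r\|_\infty\le1$. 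The joint product policy $\prod_k\pi_k$ is Lipschitz in $(\pi_k)_k$ in $\ell_1$ with constant depending only on $N$.
\item Bounded Jacobian: since $\|\nabla_{\lambda_j}F_i\|_\infty\le l_\infty$, the vector $(\nabla_\pi\lambda_j)^{\top}r$ is the policy gradient of a value with reward bounded by $l_\infty$. By Proposition~\ref{prop:individual-pg}\ref{prop-ii:individual-pg} its entries are bounded by $O(l_\infty/(1-\gamma)^2)$.
\item Smoothness of values: for a fixed reward with $\|r\|_\infty\le 1$, $\pi\mapsto V^{\pi}_\mu(r)$ is $O(|\mathcal{A}|/(1-\gamma)^3)$-smooth in the direct parametrization, by the standard argument of \citet{agarwal-et-al21jmlr} (Lemma D.3 there) applied to the joint policy, which is multilinear in the agents' policies. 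This gives the Lipschitz property of the Jacobian needed in (a), uniformly over directions $r$ with $\|r\|_\infty\le l_\infty$.
\end{itemize}

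The remaining explicit term $\nabla_{\pi_k}F_i$ for $k\neq i$ is the main subtlety. Assumption~\ref{as:smoothness} only controls $\nabla_\lambda F_i$, not the partial derivatives of $F_i$ in its last argument. We will therefore either invoke an analogous Lipschitz-gradient condition in $\pi_{-i}$ implicit in the differentiability and smoothness assumptions, or note that the statement concerns $\nabla_{\pi_i}u_{\mu,i}$ used in the algorithm, for which that term vanishes. We expect the cleanest route is to prove the bound for $\nabla_{\pi_i}u_{\mu,i}$ with respect to changes of the full $\pi$, which is what Corollary~\ref{cor:existence-NE-cMG} and the potential-game analysis need. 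Summing the bounds over $j\in[N]$, converting $\ell_1$ and $\ell_\infty$ norms to $\ell_2$ with factors $\sqrt{|\mathcal{S}||\mathcal{A}|}$, yields a constant $\beta$ polynomial in $N$, $|\mathcal{S}|$, $|\mathcal{A}|$, $l_\infty$, $L$ and $(1-\gamma)^{-1}$. The main obstacle is the Lipschitz bound on the occupancy Jacobian in step (a), i.e. second-order smoothness of values in the multi-agent product parametrization. We plan to handle it by directly differentiating $\lambda^{\pi}=(1-\gamma)(I-\gamma P^{\pi})^{-1}\mu^{\pi}$ twice along a line $\pi+t u$, as in \citet{agarwal-et-al21jmlr}.
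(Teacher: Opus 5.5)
Your plan is essentially the paper's proof: it also writes $\nabla_{\pi_i}u_{\mu,i}(\pi)=\sum_j\nabla_{\pi_i}v^{\pi}_{\mu,j}(\bar r^{\pi}_{i,j})$ and adds and subtracts $\nabla_{\pi_i}v^{\pi'}_{\mu,j}(\bar r^{\pi}_{i,j})$. Your piece (a) is handled by a multi-agent value-smoothness lemma (Lemma D.7 of Giannou et al.), which keeps $\beta$ scaling with $\sum_k|\mathcal{A}_k|$ instead of the joint $|\mathcal{A}|$ that lifting Agarwal et al.'s lemma would give; this does not matter for the existence claim. Your piece (b) is handled by Lipschitzness of the policy gradient in the reward, plus Assumption~\ref{as:smoothness} and $\ell_1$-Lipschitzness of occupancy measures. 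Your caveat about the explicit $\nabla_{\pi_k}F_i$ term is correct, and restricting to the own-block gradients $\nabla_{\pi_i}u_{\mu,i}$ as functions of the full $\pi$ is exactly what the paper's proof actually establishes, since its first display silently replaces $\nabla_\pi u_{\mu,i}$ by the pseudo-gradient $(\nabla_{\pi_k}u_{\mu,k})_{k}$.
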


\subsection{Exact Setting Iteration Complexity}

In this section, we suppose that each agent~$i \in \mathcal{N}$ has access to exact gradients~$\nabla_{\pi_i} u_{\mu,i}(\pi)$ for any joint policy~$\pi$ and we analyze the iteration complexity of Algorithm~\ref{algo} to obtain an $\varepsilon$-approximate NE. 

Define for any joint policy~$\pi \in \Pi$ and any $i \in \mathcal{N}$ the individual and maximal NE gaps: 
\begin{align}
\text{NE-Gap}_i(\pi) &:= \underset{\pi_i' \in \Pi_i}{\max} \{ u_{\mu,i}(\pi_i', \pi_{-i}) - u_{\mu,i}(\pi)\}\,,\\ 
\text{NE-Gap}(\pi) &:= \underset{i \in \mathcal{N}}{\max} \,\text{NE-Gap}_i(\pi)\,.
\end{align}
A policy~$\pi^{\star}$ is an $\epsilon$-NE if and only if $\text{NE-Gap}(\pi^{\star}) \leq \epsilon\,.$

\begin{theorem}[Deterministic setting]
\label{thm:iter-complexity-exact-PG} Let Assumptions~\ref{as:concavity}, \ref{as:exploration}, and \ref{as:smoothness} hold. Then, the sequence $(\pi^t)$ of policies generated by running Algorithm~\ref{algo} (with exact gradients in step 8 and $\alpha = 0$) simultaneously by all agents $i \in \mathcal{N}$ with exact gradients and stepsize $\eta \leq 1/\beta$ in a potential GUMG satisfies: 
$\lim_{t \to +\infty} \text{NE-Gap}(\pi^t) = 0\,.$ 
Moreover, 
for any $\varepsilon > 0$, for $T \geq \frac{32 \mathcal{C}_{\mathcal{G}}^2 \beta (\Phi_{\max} - \Phi_{\min})}{(1-\gamma)^2\varepsilon^2}, \frac{1}{T} \sum_{t=1}^T \text{NE-Gap}(\pi^t) \leq \varepsilon\,.$
\end{theorem}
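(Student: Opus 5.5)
The plan is to treat the simultaneous update as projected gradient ascent on the single smooth function $\phi_\mu$, since in a potential GUMG $v=\nabla\phi_\mu$. First I will check that $\phi_\mu$ is $\beta$-smooth. Each $u_{\mu,i}$ is $\beta$-smooth by Proposition~\ref{prop:smoothness}, and $u_{\mu,i}-\phi_\mu$ does not depend on $\pi_i$. Hence the whole map $\nabla\phi_\mu=(\nabla_{\pi_i}u_{\mu,i})_i$ is Lipschitz; at worst the constant picks up a factor $\sqrt N$, which I absorb into $\beta$. Next, with $\alpha=0$, Algorithm~\ref{algo} reads $\pi^{t+1}=\mathrm{Proj}_\Pi(\pi^t+\eta\nabla\phi_\mu(\pi^t))$, because projection onto the product set $\Pi$ splits agentwise. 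The standard sufficient-ascent lemma for projected gradient with $\eta\le 1/\beta$ then gives
\[
\phi_\mu(\pi^{t+1})-\phi_\mu(\pi^t)\ \ge\ \tfrac{1}{2\eta}\|\pi^{t+1}-\pi^t\|_2^2 .
\]
Telescoping over $t$ yields $\sum_{t}\|\pi^{t+1}-\pi^t\|^2\le 2\eta(\Phi_{\max}-\Phi_{\min})$. In particular $\|\pi^{t+1}-\pi^t\|\to0$.

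The second step turns small steps into approximate first-order stationarity of $\pi^{t+1}$. The optimality condition of the projection gives, for all $\pi\in\Pi$,
\[
\langle \pi^t+\eta\nabla\phi_\mu(\pi^t)-\pi^{t+1},\ \pi-\pi^{t+1}\rangle\le 0 .
\]
Splitting terms and using smoothness, I expect
\[
\langle \nabla\phi_\mu(\pi^{t+1}),\pi-\pi^{t+1}\rangle\le \Big(\tfrac1\eta+\beta\Big)\|\pi^{t+1}-\pi^t\|\,\|\pi-\pi^{t+1}\| .
\]
The diameter of $\Pi_i$ is at most $\sqrt{2|\mathcal S|}$, which is where the constants enter. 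I then restrict to agent $i$, taking $\pi=(\pi_i',\pi^{t+1}_{-i})$, so that only the $i$-th block of the gradient, $v_i(\pi^{t+1})$, appears.

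The third step applies Proposition~\ref{prop:agentwise-grad-dom} with $\rho=\mu$. This gives
\[
\mathrm{NE\text{-}Gap}_i(\pi^{t+1})\le C_{\mathcal G}\max_{\tilde\pi_i}\langle v_i(\pi^{t+1}),\tilde\pi_i-\pi_i^{t+1}\rangle\le C_{\mathcal G}\,c\,\|\pi^{t+1}-\pi^t\| ,
\]
where $c$ collects the step-size and diameter constants. Taking the max over $i$ and combining with $\|\pi^{t+1}-\pi^t\|\to0$ gives $\lim_t\mathrm{NE\text{-}Gap}(\pi^t)=0$. For the rate I average and apply Cauchy--Schwarz:
\[
\frac1T\sum_t\mathrm{NE\text{-}Gap}(\pi^t)\le C_{\mathcal G}c\sqrt{\tfrac1T\textstyle\sum_t\|\pi^{t+1}-\pi^t\|^2}\le C_{\mathcal G}c\sqrt{2\eta(\Phi_{\max}-\Phi_{\min})/T} .
\]
Choosing $\eta=1/\beta$ and requiring this to be at most $\varepsilon$ gives $T\gtrsim C_{\mathcal G}^2\beta(\Phi_{\max}-\Phi_{\min})/\varepsilon^2$.

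The main obstacle is the constant bookkeeping needed to reach exactly $32\,C_{\mathcal G}^2\beta(\Phi_{\max}-\Phi_{\min})/((1-\gamma)^2\varepsilon^2)$. The $(1-\gamma)^{-2}$ suggests the authors bound the gradient mapping through the $d_\mu^\pi$ factor in Proposition~\ref{prop:individual-pg}\ref{prop-ii:individual-pg}, in the style of Agarwal et al. and Leonardos et al. To handle this I would follow their MPG argument: bound the per-state gradient-mapping inner product, using $\sqrt{|\mathcal S|}$-type diameter terms and the factor $(1-\gamma)^{-1}$ arising from the scaling between $d_\mu$ and $\mu$. I would adjust constants accordingly, accepting that the $32$ may come out different.
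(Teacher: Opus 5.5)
Your proposal is correct and is the paper's proof. The paper also uses $v=\nabla\phi_\mu$ to treat the simultaneous update as projected gradient ascent on $\phi_\mu$. It bounds $\frac1T\sum_t\|G^\eta(\pi^t)\|_2^2$ by the sufficient-increase/telescoping argument (Lemma~12 of \citet{zhang-et-al24tac}). It then turns the gradient mapping into approximate first-order stationarity at $\pi^{t+1}$ via projection optimality plus smoothness (their Lemma~11, which is your inequality with $\|G^\eta(\pi^t)\|_2=\|\pi^{t+1}-\pi^t\|_2/\eta$). It finishes with agentwise gradient domination and Jensen.

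On the constant, you can drop the plan in your last paragraph.

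\emph{Where the $(1-\gamma)^{-2}$ comes from.} It is not a $d_\mu$-versus-$\mu$ effect. The paper simply writes the domination step as $\text{NE-Gap}_i(\pi^{t+1})\le \frac{C_{\mathcal{G}}}{1-\gamma}\max_{\pi_i'}\langle \pi_i'-\pi_i^{t+1},\nabla_{\pi_i}\phi_\mu(\pi^{t+1})\rangle$. That factor is slack relative to Proposition~\ref{prop:agentwise-grad-dom} with $\rho=\mu$, because $C_{\mathcal{G}}$ already has $d_\rho^{\pi}$ in its denominator.

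\emph{Why there is no $|\mathcal{S}|$.} The paper bounds $\|\bar\pi_i-\pi_i^{t+1}\|_2$ by $2$, giving $2(1+\eta\beta)\|G^\eta\|_2$. As you note, the $\ell_2$ diameter of $\Pi_i$ is $\sqrt{2|\mathcal{S}|}$; the paper's own stochastic version of this lemma carries $2\sqrt{|\mathcal{S}|}$. A per-state version of the argument brings the same $\sqrt{|\mathcal{S}|}$ back when you sum over states with Cauchy--Schwarz.

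So your threshold $T\ge 16|\mathcal{S}|\,C_{\mathcal{G}}^2\beta(\Phi_{\max}-\Phi_{\min})/\varepsilon^2$ is what this argument actually supports. It implies the stated threshold only when $|\mathcal{S}|\le 2/(1-\gamma)^2$.

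Keep your choice $\eta=1/\beta$. The bound $\frac1T\sum_t\|G^\eta(\pi^t)\|_2^2\le 2\beta(\Phi_{\max}-\Phi_{\min})/T$, which the paper quotes for every $\eta\le 1/\beta$, is in general $2(\Phi_{\max}-\Phi_{\min})/(\eta T)$. An $\eta$-free iteration count therefore needs $\eta$ of order $1/\beta$.
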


\begin{remark} 
Compared to using a centralized single-agent PG algorithm for convex RL, the iteration complexity only scales with $\sum_{i=1}^N |\cA_i|$ (which appears in $\beta$, see appendix) rather than the cardinality of the joint action space $|\cA|$ which scales exponentially with the number~$N$ of agents. This additive dependence breaks the so-called curse of multi-agency thanks to decentralization. 
As for the convergence rate, notice that our gradient domination (Prop.~\ref{prop:agentwise-grad-dom}) only holds per-agent and is therefore weaker than its single-agent counterpart. Therefore, it cannot be immediately leveraged to obtain faster last-iterate convergence rates like in the single-agent setting \citep{xiao-et-al22pg}. 
\end{remark}

\subsection{Generative Model Sample Complexity}
\label{sec:generative model}

We now suppose that we only have access to a generative model, i.e. a simulator which we can query and samples a next state~$s' \sim P(\cdot|s,a)$ given \textit{any} current state~$s$ and any joint action~$ a \in \cA\,.$ Therefore finite-length state-action trajectories can be sampled from \textit{any} state. Using this model and given our PG theorem (Thm.~\ref{prop:individual-pg}-(ii)) we estimate individual utility gradients as follows: Given a current policy~$\pi$, generate for each~$i \in \mathcal{N}, (s,a_i) \in \cS \times \cA_i$ a set of $M$ independent truncated trajectories of length~$H$: $\tau_{s,a_i}^{(k)} = \{ (s_0^{(k)}, a_{i,0}^{(k)}), \cdots, (s_{H-1}^{(k)}, a_{i,H-1}^{(k)})\}, k \in [M]$ with $s_0^{(k)} = s, a_{i,0}^{(k)} = a_i$. Then compute the  estimates: 
\begin{equation}
\label{eq:stochastic-pg-gen-model}
[\hat{\nabla}_{\pi_i} u_{\mu,i}(\pi)]_{s,a_i} = \frac{1}{1-\gamma} \hat{d}_{\mu}^{\pi}(s) \sum_{j=1}^N\hat{q}^{\pi}_{i,j, (s,a_i)}\,, 
\end{equation}
where $\hat{q}^{\pi}_{i,j, (s,a_i)} = \frac{1}{M} \sum_{k=1}^M \sum_{t=0}^{H-1} \gamma^t \hat{r}_{i,j}^{\pi}(s_t^{(k)},a_{t,j}^{(k)})\,,$ 
and $\hat{r}_{i,j}^{\pi} = \nabla_{\lambda_j}F_i(\hat{\lambda}_{\mu,1:N}^{\pi})$ and $\hat{d}_{\mu}^{\pi}(s),\hat{\lambda}_{\mu,i}^{\pi}(s,a_i)$ are estimated as in \eqref{eq:occup-estimate}.  
Using the above estimated gradients, we are ready to state our first sample complexity result to find an $\varepsilon$-approximate NE in a potential GUMG.  

\begin{theorem}
\label{thm:sample-complexity-gen-model}
Let Assumptions~\ref{as:concavity}, \ref{as:exploration}, and \ref{as:smoothness} hold. 
For any desired accuracy $\varepsilon > 0$ and any confidence level~$\delta \in (0,1)$, if each agent $i \in \mathcal{N}$ runs Algorithm~\ref{algo} (using \eqref{eq:stochastic-pg-gen-model} in step 8) with stepsize~$\eta$, greediness parameter~$\alpha = 0$, minibatch size~$M = \tilde{\mathcal{{O}}}(\varepsilon^{-2})$ and truncation horizon~$H = \mathcal{O}((1-\gamma)^{-1} \log(1/\varepsilon))$ for a number of iterations~$T = \tilde{\mathcal{{O}}}(\varepsilon^{-2})$, then we have $\frac{1}{T} \sum_{t=0}^{T-1} \text{NE-Gap}(\pi^{t+1}) \leq \varepsilon$ with probability at least~$1-\delta\,.$ In particular, the total sample complexity in terms of state-action pairs is $T \times M \times H = \tilde{\mathcal{O}}(\varepsilon^{-4})\,.$\footnote{The notation~$\tilde{\mathcal{O}}(\cdot)$ hides polynomial factors in $(1-\gamma)^{-1}, |\cS|, \max_{i \in \mathcal{N}}|\cA_i|$ and polylogarithmic factors in $1/\varepsilon$ and $|\cS|\cdot \max_{i \in \mathcal{N}} |\cA_i|/\delta$, see appendix for explicit constants.} 
\end{theorem}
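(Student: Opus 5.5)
The proof runs stochastic projected gradient ascent on the potential $\phi_\mu$, which is $\beta$-smooth by Proposition~\ref{prop:smoothness} since $v=\nabla\phi_\mu$. The gradient-estimation error is controlled uniformly over iterations by concentration, and the resulting approximate stationarity is converted into an NE-gap bound via Proposition~\ref{prop:fos-nash-equivalence} (i.e. Proposition~\ref{prop:agentwise-grad-dom}). The argument has three steps: an inexact-gradient ascent lemma, a high-probability bound on the gradient error, and the stationarity-to-Nash conversion.

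\textbf{Step 1 (inexact projected ascent).} Write $e^t := \hat\nabla\phi_\mu(\pi^t)-\nabla\phi_\mu(\pi^t)$, stacking agents; the simultaneous per-agent projections equal one projection onto $\Pi$. Set $G_\eta(\pi)=(\mathrm{Proj}_\Pi(\pi+\eta\nabla\phi_\mu(\pi))-\pi)/\eta$ and take $\eta=1/(2\beta)$. The standard descent lemma for $\beta$-smooth functions, combined with the variational inequality of the projection, gives
$\phi_\mu(\pi^{t+1})\ge\phi_\mu(\pi^t)+\tfrac{\eta}{4}\|\hat G^t\|^2-c\,\eta\|e^t\|^2$,
where $\hat G^t$ is the gradient mapping built from the estimate. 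Telescoping yields
$\frac1T\sum_t\|\hat G^t\|^2\le \frac{4(\Phi_{\max}-\Phi_{\min})}{\eta T}+\frac{4c}{T}\sum_t\|e^t\|^2$.
Standard arguments (nonexpansiveness of $\mathrm{Proj}_\Pi$ together with the optimality condition at $\pi^{t+1}$) then show that $\pi^{t+1}$ is $\varepsilon_t$-FOS with $\varepsilon_t\lesssim (\mathrm{diam}(\Pi)+\eta\|\nabla\phi_\mu\|)(\|\hat G^t\|+\|e^t\|)$. Here $\mathrm{diam}(\Pi)\le\sqrt{2|\mathcal S|}$, and gradients are bounded by $\sqrt{\sum_i|\mathcal A_i|}\,N l_\infty/(1-\gamma)^2$.

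\textbf{Step 2 (gradient error), the main obstacle.} I bound $\|e^t\|_\infty$ entrywise and decompose it into three parts.

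\emph{Truncation bias.} Cutting sums at $H$ costs $\gamma^H l_\infty N/(1-\gamma)$, which is at most $\varepsilon$ once $H=\mathcal O((1-\gamma)^{-1}\log(1/\varepsilon))$.

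\emph{Occupancy error.} Hoeffding with a union bound over $|\mathcal S|$, $T$ and agents gives $\|\hat d-d^H\|_\infty\lesssim\frac{1}{1-\gamma}\sqrt{\log(|\mathcal S|TN/\delta)/M}$. Since $\hat\lambda_i=\hat d\,\pi_i$, this implies $\|\hat\lambda_{1:N}-\lambda_{1:N}\|_1\le N|\mathcal S|\,\|\hat d-d\|_\infty$. By the Lipschitz part of Assumption~\ref{as:smoothness}, the pseudo-reward error is then $\|\hat r_{ij}-\bar r_{ij}\|_\infty\le L\,N|\mathcal S|\,\|\hat d-d\|_\infty$.

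\emph{Q-estimate error.} This contributes Hoeffding error $\frac{N l_\infty}{1-\gamma}\sqrt{\log(\cdot)/M}$ per entry, plus the reward-perturbation term carried by discounted sums ($\le\frac{1}{1-\gamma}$ times the reward error).

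The difficulty is the coupling: $\hat r$ is built from the same occupancy estimate used inside $\hat d(s)\hat q$, so the error is not an unbiased zero-mean noise. I avoid needing unbiasedness altogether by working with the high-probability uniform bound. The product structure is handled by $|\hat d\hat q-dq|\le|\hat d-d||\hat q|+|d||\hat q-q|$. Union-bounding over all $(s,a_i)$, all $i$ and all $t\le T$ gives, with probability $1-\delta$, $\max_t\|e^t\|_2\le \mathrm{poly}(|\mathcal S|,\max_i|\mathcal A_i|,N,(1-\gamma)^{-1})\sqrt{\log(|\mathcal S||\mathcal A_i|TN/\delta)/M}+\mathcal O(\varepsilon)$.

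\textbf{Step 3 (conclusion).} On this event, Proposition~\ref{prop:fos-nash-equivalence} gives $\text{NE-Gap}(\pi^{t+1})\le C_{\mathcal G}\varepsilon_t$. Averaging and applying Cauchy--Schwarz to $\frac1T\sum_t\|\hat G^t\|$ yields
$\frac1T\sum_t\text{NE-Gap}(\pi^{t+1})\lesssim C_{\mathcal G}\big(\sqrt{\beta(\Phi_{\max}-\Phi_{\min})/T}+\max_t\|e^t\|\big)$.
Choosing $T=\tilde{\mathcal O}(\varepsilon^{-2})$ and $M=\tilde{\mathcal O}(\varepsilon^{-2})$ (the logarithmic dependence on $T$ inside $M$ is harmless) makes this at most $\varepsilon$. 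The sample count per iteration is $M H$ times the number of pairs $(s,a_i)$, which is absorbed into $\tilde{\mathcal O}$, giving a total of $TMH=\tilde{\mathcal O}(\varepsilon^{-4})$.
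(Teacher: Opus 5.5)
Your proposal is correct and follows essentially the same route as the paper's proof. Both run inexact projected ascent on the potential with a uniformly bounded gradient error, bound that error with high probability by truncation bias plus Hoeffding (coupling the pseudo-reward error to the occupancy error through the Lipschitz part of Assumption~\ref{as:smoothness}), and convert the resulting approximate first-order stationarity into an NE-gap bound via agentwise gradient domination. Your explicit union bound over the $T$ iterations, which costs only a $\log T$ factor in $M$, is a welcome tightening: the paper's concentration lemma claims the bound ``for all iterations'' but its proof only union-bounds over states.
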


\subsection{On-policy Learning Sample Complexity} 
\label{sec:on-policy-samp-comp}

In this last section, we relax the sampling model of section~\ref{sec:generative model} and we suppose that each agent can only sample trajectories of finite length using their current policy at each time step~$t$ as described in section~\ref{sec:pg-algo}. In particular, agents cannot query the transition model from \textit{any} state, including non-visited ones. This justifies the use of $\alpha$-greedy policies with $\alpha > 0$ to guarantee state exploration. 

In contrast to section~\ref{sec:generative model}, the algorithm described in this section does not require a generative model. In section~\ref{sec:generative model}, policy gradients are estimated (using the gradient representation in Prop.~\ref{prop:individual-pg}-\ref{prop-ii:individual-pg}) by computing estimates of the Q functions and occupancy measures \textit{at every state-action pair} $(s,a_i)$ for agent $i$. This requires being able to sample the next state from any given state–action pair, hence the need for a generative model (see e.g. section 5.1 in \cite{xiao-et-al22pg} for a similar model in the single-agent setting). In this section, we only need to estimate the pseudo-rewards at the state-action pairs that were encountered in the sampled trajectories (see Prop.~\ref{prop:individual-pg}-\ref{prop-i:individual-pg}) since only the values of $\bar{r}_{i,j}^{\pi}(s_t,a_{j,t})$ are required to compute stochastic policy gradients. Computing estimates of the occupancy measures at every state-action pair is not needed. The sampling model in section~\ref{sec:on-policy-samp-comp} only allows to sample trajectories using the current policy from initial states drawn from the fixed initial state distribution. 

The next result shows that we can still learn approximate NE in this case while the sample complexity slightly deteriorates compared to Thm.~\ref{thm:sample-complexity-gen-model}.

\begin{theorem}
\label{thm:sample-complexity-on-policy}
Let Assumptions~\ref{as:concavity}, \ref{as:exploration}, and \ref{as:smoothness} hold. 
For any desired accuracy $\varepsilon > 0$, if each agent $i \in \mathcal{N}$ runs Algorithm~\ref{algo} (using \eqref{eq:stochastic-pg} in step 8) in a potential GUMG with stepsize~$\eta$, positive greediness policy parameter~$\alpha = \mathcal{O}(\varepsilon)$, minibatch size~$M = \tilde{\mathcal{{O}}}(\varepsilon^{-3})$ and truncation horizon~$H = \mathcal{O}((1-\gamma)^{-1} \log(1/\varepsilon))$ for a number of iterations~$T = \tilde{\mathcal{{O}}}(\varepsilon^{-2})$, then we have $\E[\text{NE-Gap}(\pi^{\tau})] \leq \varepsilon$ where expectation is w.r.t. both randomness in Algorithm~\ref{algo} due to trajectory sampling and $\tau$ which is a uniform random variable over~$[T]\,.$ In particular, the total sample complexity in terms of state-action pairs is $T \times M \times H = \tilde{\mathcal{O}}(\varepsilon^{-5})$\,.
\end{theorem}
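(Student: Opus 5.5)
The plan is to reduce the on-policy problem to finding an approximate first-order stationary point of the potential $\phi_\mu$ over the restricted set $\Pi^\alpha$, and then convert stationarity into a NE-gap bound with Prop.~\ref{prop:fos-nash-equivalence}.

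\textbf{Step 1: gradient field and smoothness.} In a potential GUMG we have $v=\nabla\phi_\mu$. By Prop.~\ref{prop:smoothness}, $\phi_\mu$ is $\beta$-smooth. Algorithm~\ref{algo}, run simultaneously by all agents, is therefore exactly stochastic projected gradient ascent on $\phi_\mu$ over the convex compact set $\Pi^\alpha=\times_i\Pi_i^\alpha$. This uses the product structure of the projection, as in Theorem~\ref{prop:equiv-ne-fixed-point-projgrad}.

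\textbf{Step 2: bias and variance of the estimator \eqref{eq:stochastic-pg}.} There are three error sources.
\begin{itemize}
\item \emph{Truncation.} Cutting the trajectory at horizon $H$ costs $O(\gamma^H\,\mathrm{poly}((1-\gamma)^{-1}))$, which is at most $\varepsilon$ for $H=O((1-\gamma)^{-1}\log(1/\varepsilon))$.
\item \emph{Plug-in pseudo-rewards.} Using $\hat r_{i,j}$ from $\hat\lambda$ gives a bias of order $L\,\|\hat\lambda-\lambda\|_1$ times a bound on $\|\psi^t_{\pi_i}\|$, by Assumption~\ref{as:smoothness}.
\item \emph{Score-function variance.} On $\Pi^\alpha$ we have $\|\nabla\log\pi_i\|\le |\cA_i|/\alpha$. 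This inflates the second moment by $\alpha^{-2}$, which is exactly why $\alpha>0$ is needed.
\end{itemize}
A concentration bound on the occupancy estimate \eqref{eq:occup-estimate} gives $\mathbb{E}\|\hat\lambda-\lambda\|_1\lesssim\sqrt{|\cS||\cA_i|/M}$. Since the minibatch of size $M$ is also used to average the gradient, the mean-squared error of the estimator is
\[
\sigma^2\lesssim\frac{\mathrm{poly}}{\alpha^{2}M}+\mathrm{bias}^2 .
\]

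\textbf{Step 3: inexact projected gradient ascent.} I will use the standard descent-lemma analysis with the gradient mapping $G_\eta(\pi)=\eta^{-1}\big(\mathrm{Proj}_{\Pi^\alpha}(\pi+\eta\nabla\phi_\mu(\pi))-\pi\big)$ and $\eta=1/(2\beta)$. Nonexpansiveness of the projection handles the gradient error, so I do not need a Moreau envelope. Telescoping over $T$ iterations gives
\[
\frac1T\sum_t\mathbb{E}\|G_\eta(\pi^t)\|^2\lesssim \frac{\beta(\Phi_{\max}-\Phi_{\min})}{T}+\sigma^2 .
\]
A small gradient mapping at $\pi^t$ implies that $\pi^{t+1}$ is approximately FOS on $\Pi^\alpha$:
\[
\max_{\pi\in\Pi^\alpha}\langle\nabla\phi_\mu(\pi^{t+1}),\pi-\pi^{t+1}\rangle\le (2+\eta\beta)\,D\,\|G_\eta(\pi^t)\|+D\,\|\text{error}\|,
\]
where $D$ is the diameter of $\Pi$.

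\textbf{Step 4: from $\Pi^\alpha$-stationarity to $\Pi$-stationarity, then to NE.} Any $\pi\in\Pi$ has a counterpart $\pi^\alpha=(1-\alpha)\pi+\alpha\,\mathrm{unif}\in\Pi^\alpha$ with $\|\pi-\pi^\alpha\|\le 2\alpha\sqrt{|\cS|}$. Using the gradient bound from Assumption~\ref{as:smoothness} and Prop.~\ref{prop:individual-pg}, the FOS gap over $\Pi$ exceeds the FOS gap over $\Pi^\alpha$ by at most $O(\alpha\,\ell_\infty\,\mathrm{poly}/(1-\gamma)^2)$. Prop.~\ref{prop:fos-nash-equivalence}, applied agentwise, then gives $\mathrm{NE\text{-}Gap}\le C_{\mathcal G}\times\mathrm{FOS\ gap}$. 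Taking expectation over a uniform index $\tau$ and applying Jensen to the square root, the final bound is
\[
C_{\mathcal G}\Big(T^{-1/2}+\sigma+\alpha\Big)\cdot\mathrm{poly}.
\]
The parameter choices follow: $\alpha=\Theta(\varepsilon)$ and $T=\tilde O(\varepsilon^{-2})$. The variance term $\mathrm{poly}/(\alpha^2 M)\le\varepsilon^2$ would seem to force $M\sim\varepsilon^{-4}$.

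\textbf{Main obstacle.} The difficulty is in Step 2: obtaining $M=\tilde O(\varepsilon^{-3})$ instead of $\varepsilon^{-4}$. The variance must depend on $\alpha$ only as $\alpha^{-1}$. I would first try the refined bound $\mathbb{E}\|\nabla\log\pi_i(a|s)\|^2=\sum_a\pi_i(a|s)^{-1}\le|\cA_i|^2/\alpha$, which is linear rather than quadratic in $1/\alpha$ because the expectation is taken under $\pi_i$ itself. Combined with the $\gamma^t$ weighting of $\psi^t$, this gives a second moment of order $|\cA_i|^2/(\alpha(1-\gamma)^4)$. Then $\sigma^2\lesssim 1/(\alpha M)$ and $M=\tilde O(\varepsilon^{-3})$. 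The total sample complexity is then $TMH=\tilde O(\varepsilon^{-5})$.
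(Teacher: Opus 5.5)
Your proposal is correct and follows essentially the same route as the paper. The shared steps are: stochastic projected gradient ascent on the potential over $\Pi^{\alpha}$, analyzed through a gradient mapping without a Moreau envelope; approximate stationarity of $\pi^{t+1}$ on $\Pi^{\alpha}$; an $O(\alpha)$ correction to pass to $\Pi$; agent-wise gradient domination; and the error split $T^{-1/2}+\gamma^{H}+(M\alpha)^{-1/2}+\alpha$ with the same parameter choices. The only differences are cosmetic: you use the exact-gradient mapping instead of the paper's stochastic $\hat{G}^{\eta,\alpha}$, and you derive the $1/(\alpha M)$ variance from $\sum_{a}\pi_i(a|s)^{-1}\le|\mathcal{A}_i|^2/\alpha$, where the paper cites Lemma~2 of Daskalakis et al.

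One point to make explicit: the plug-in pseudo-reward error must also go through that second-moment bound, most simply by estimating $\hat{\lambda}$ from an independent batch. Multiplying $L\|\hat{\lambda}-\lambda\|_1$ by the worst-case bound $\|\psi^{t}_{\pi_i}\|\lesssim (t+1)|\mathcal{A}_i|/\alpha$ would reintroduce an $\alpha^{-2}$ factor and force $M\sim\varepsilon^{-4}$.
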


\section{CONCLUSION}
In this work, we established the existence of NE in GUMGs  as fixed points of the projected pseudo-gradient dynamics via Brouwer’s theorem and a per-agent gradient domination property. We designed a model-free policy gradient method and proved the first efficient NE learning guarantees in potential GUMGs. Future directions include extending our analysis to large state–action spaces with function approximation, and exploring generalizations to general-sum settings, e.g. convergence to correlated equilibria under decentralized dynamics \citep{erez-et-al23icml,cai-et-al24aistats}.

\subsubsection*{Acknowledgements}
 
This work is supported by the MOE Tier 2 Grant (MOE-T2EP20223-0018), Ministry of Education Singapore (SRG ESD 2024 174), the CQT++ Core Research Funding Grant (SUTD) (RS-NRCQT-00002), and partially by Project MIS 5154714 of the National Recovery and Resilience Plan, Greece 2.0, funded by the European Union under the NextGenerationEU Program. 
Ioannis Panageas is supported by NSF grant CCF- 2454115.

\bibliographystyle{apalike}
\bibliography{references}

\onecolumn
\aistatstitle{Supplementary Material:\\ 
\medskip 
Convex Markov Games and Beyond: New Proof of Existence, Characterization and Learning Algorithms for Nash Equilibria}

\tableofcontents

\section{EXTENDED RELATED WORK}
\label{sec:extended-related-work}

\noindent\textbf{Markov Games.} Recent years have witnessed active research in theoretical MARL. 
In particular, an important body of work has focused on 
learning approximate Nash equilibria in structured classes of Markov games such as zero-sum Markov games \citep{perolat-et-al15icml,bai-jin20icml,bai-jin-yu20neurips,xie-et-al20colt,daskalakis-foster-golowich20,sayin-et-al21neurips,qiu-et-al21icml,alacaoglu-et-al22icml,cui-du22neurips,zhang-et-al22neurips-pg,zhao-et-al22aistats,park-zhang-ozdaglar23neurips,zhang-et-al23jmlr,cai-et-al23neurips-bandit-zsmg}, Markov potential games \citep{leonardos-et-al22iclr,zhang-et-al24tac,ding-et-al22icml,fox-et-al22aistats,zhang-et-al22neurips,guo-et-al25tac-alpha,sun-et-al23neurips,cheng-et-al24aistats,maheshwari-et-al25tac,alatur-barakat-he24cdc,aydin-eksin23l4dc,sahitaj-et-al25perfMPG} or adversarial team games \citep{kalogiannis-et-al23iclr,kalogiannis-yan-panageas24neurips}. Our work extends these research efforts to cMGs and beyond. 

\noindent\textbf{Convex Markov Games.} Recently, \cite{gemp-et-al25convex-mgs} introduced cMGs, proved existence of NE and proposed to approximate them using a \textit{biased} entropic regularization approach under \textit{full knowledge} of the transition model. In this work, we design a \textit{model-free} PG algorithm for GUMGs and analyze its iteration and \textit{sample complexity}. In a follow-up work, \cite{kalogiannis-et-al25zs-mgs} studied \textit{2-player zero-sum cMGs}. In contrast, we characterize NE in any $N$-player general-sum GUMG (with cMGs as particular cases). Moreover, we study the common-interest setting beyond the competitive zero-sum two-player setting. See sections~\ref{sec:structure-general-cMGs} and~\ref{sec:pg-algo} for more details comparing our theoretical and algorithmic contributions to prior work. Recently, \cite{zamboni-et-al25} studied task-agnostic exploration in MARL, focusing on an identical-interest setting. Their work provides a global optimality iteration complexity for projected gradient ascent leveraging the single-agent analysis of \cite{zhang-et-al21rlgu} under access to exact gradients and assuming smoothness of the objective (see appendix B.1 therein). We note that such a single-agent analysis leads to an exponential dependence on the number of players in general (via the product of action space sizes). In contrast, in our more general setting of GUMGs, we provide a multi-agent decentralized policy gradient sample complexity analysis which only scales with the sum of the action space sizes, using stochastic gradients rather than exact ones and proving smoothness of the objectives under suitable assumptions on the utility functions. 
Note that in general global optimality results cannot be established in our general setting using single-agent analysis \citep{zhang-et-al20variational,zhang-et-al21rlgu} since multi-agent gradient domination (Prop.~\ref{prop:agentwise-grad-dom}) is weaker than the corresponding result for single agents \citep{xiao-et-al22pg} as it only holds per-player.  

\noindent\textbf{Convex MDPs and RL with General Utilities.} 
In the single-agent setting ($N=1$), convex Markov games reduce to convex RL which is an active research topic \citep{hazan-et-al19,zahavy-et-al21,zhang-et-al20variational,zhang-et-al21rlgu,geist-et-al22,agarwal-et-al22,mutti-et-al23convexRL-jmlr,barakat-et-al23rl-gen-ut,marin-et-al24metacurl,desanti-et-al24}. 
A few works consider MARL with general utilities as a \textit{decentralized policy optimization} problem \citep{zhang-et-al22marl-gu-decentralized,ying-et-al23scalable-safe-marl-rlgu}. Our setting is fundamentally different: we rather consider strategic agents in a game-theoretic setting where each agent has their own objective. In particular, GUMGs cannot be cast as single-objective policy optimization or consensus problems in general. Furthermore, our state space is not decomposable as a product of local state spaces. 

\section{MORE DETAILS ON EXAMPLES, EQUILIBRIA AND ALGORITHM} 

\subsection{More Details on Examples of Section~\ref{subsec:examples}}
\label{app:more-about-examples}

In this section we provide more details regarding each one of the examples of section~\ref{subsec:examples} to show how to apply our framework and demonstrate its versatility, especially for gradient computation in Algorithm~\ref{algo} (step 7). 

Recall from~\eqref{eq:ex-composite} our composite example defined for any $\lambda = (\lambda_1, \cdots, \lambda_N) \in \Lambda, \pi = (\pi_i, \pi_{-i}) \in \Pi,$ by: 
\begin{multline}
F_i(\lambda_1, \cdots, \lambda_N, \pi_{-i}) = 
 \underbrace{ - \alpha_i \text{KL}(\lambda_i || q_i)}_{\text{imitation}} \quad \underbrace{- \beta_i \text{KL}(\lambda_i || \bar{\lambda}_N)}_{\text{consensus/diversity}}
+ \underbrace{\gamma_i h_i \left(\sum_{j=1}^N W_{i,j} \lambda_j \right)}_{\text{team aggregate}} + \underbrace{g_i(\pi_{-i})}_{\text{diversity penalty}}\,,
\end{multline}
where $q_i \in \Lambda_i, i \in \mathcal{N}$ are fixed probability distributions, $\alpha_i, \beta_i, \gamma_i$ are nonnegative reals, $W \in \mathbb{R}^{N \times N}$ is a stochastic matrix of weights,
and $\bar{\lambda}_N = \sum_{i=1}^N w_i \lambda_i$ with $w_i \geq 0, \sum_{i=1}^N w_i = 1\,.$ Here, $h_i$ and $g_i$ are concave functions. Note that $F_i$ satisfies Assumption~\ref{as:concavity} since it is the sum of jointly concave functions, noticing that $\text{KL}$ divergence is jointly convex and the composition of a convex function with a linear one is also convex.  

We now elaborate on special cases of our composite example:   

\noindent\textbf{a. Multi-agent imitation learning.} In this case ($\alpha_i > 0,\beta_i = \gamma_i = 0$), the utility function of each agent and its gradient with respect to their own occupancy measure variable are given by: 
\begin{align}
F_i(\lambda_1, \cdots, \lambda_N, \pi_{-i}) &:= - \alpha_i \text{KL}(\lambda_i || q_i) + g_i(\pi_{-i})\,, \nonumber\\
\nabla_{\lambda_i} F_i(\lambda_1, \cdots, \lambda_N, \pi_{-i}) &= - \alpha_i \left( \log\left(\frac{\lambda_i}{q_i}\right) + \mathds{1} \right)\,, \nonumber\\
\nabla_{\lambda_j} F_i(\lambda_1, \cdots, \lambda_N, \pi_{-i}) & = 0\,,\quad \forall j \neq i\,, 
\end{align}
where $\mathds{1}$ refers to the vector of ones and the notation $\log\left(\frac{\lambda_i}{q_i}\right)$ is to be understood component-wise. It remains to use these gradients to compute pseudo-rewards in Algorithm~\ref{algo}. 

\noindent\textbf{b. Consensus and diversity.} In this case ($\beta_i > 0,\alpha_i = \gamma_i = 0$), the utility function of each agent and its gradients with respect to the different occupancy measure variables are given by: 
\begin{align}
F_i(\lambda_1, \cdots, \lambda_N, \pi_{-i}) &:=- \beta_i \text{KL}(\lambda_i || \bar{\lambda}_N) + g_i(\pi_{-i})\,, \nonumber\\
\nabla_{\lambda_i} F_i(\lambda_1, \cdots, \lambda_N, \pi_{-i}) &= - \beta_i \left( \log\left(\frac{\lambda_i}{\bar{\lambda}_N}\right) + \frac{\bar{\lambda}_N - w_i \lambda_i}{\bar{\lambda}_N} \right)\,, \nonumber\\
\nabla_{\lambda_j} F_i(\lambda_1, \cdots, \lambda_N, \pi_{-i}) & = \beta_i \frac{w_j \lambda_i}{\bar{\lambda}_N}\,,\quad \forall j \neq i\,, 
\end{align}
where again $\log$ and fractions are to be computed coordinate-wise. 

\noindent\textbf{c. Population-level coordination.}   
In this case ($\gamma_i > 0,\alpha_i = \beta_i = 0$), the utility function of each agent and its gradients with respect to the different occupancy measure variables are given by: 
\begin{align}
\label{eq:pop-level-coordination}
F_i(\lambda_1, \cdots, \lambda_N, \pi_{-i}) &:=\gamma_i h_i \left(\sum_{j=1}^N W_{i,j} \lambda_j \right) + g_i(\pi_{-i})\,, \nonumber\\
\nabla_{\lambda_i} F_i(\lambda_1, \cdots, \lambda_N, \pi_{-i}) &= \gamma_i W_{i,i}^{\top}  \nabla h_i\left(\sum_{j=1}^N W_{i,j} \lambda_j \right) \,, \nonumber\\
\nabla_{\lambda_j} F_i(\lambda_1, \cdots, \lambda_N, \pi_{-i}) & =  \gamma_i W_{i,j}^{\top}  \nabla h_i\left(\sum_{j=1}^N W_{i,j} \lambda_j \right)\,,\quad \forall j \neq i\,, 
\end{align}
where $\nabla h_i$ is the gradient of the function~$h_i$.

\noindent\textbf{d. Team coverage and distribution matching} 
($F_i(\lambda_{1:N},\pi_{-i}) = - \text{KL}(\bar{\lambda}_N || \lambda_{\text{target}})$). In this case, the utility function of each agent and its gradients with respect to the different occupancy measure variables are given by: 
\begin{align}
F_i(\lambda_1, \cdots, \lambda_N, \pi_{-i}) &:= - \text{KL}(\bar{\lambda}_N || \lambda_{\text{target}})\,, \nonumber\\
\nabla_{\lambda_i} F_i(\lambda_1, \cdots, \lambda_N, \pi_{-i}) &= - w_i \left( \log\left(\frac{\bar{\lambda}_N}{\lambda_{\text{target}}}\right) + \mathds{1} \right) \,, \nonumber\\
\nabla_{\lambda_j} F_i(\lambda_1, \cdots, \lambda_N, \pi_{-i}) & = - w_j \left( \log\left(\frac{\bar{\lambda}_N}{\lambda_{\text{target}}}\right) + \mathds{1} \right) \,,\quad \forall j \neq i\,, 
\end{align}
where again $\log$ and fractions are to be computed coordinate-wise. 

\noindent\textbf{e. Collective exploration.} When $h_i$ is the negative entropy, utility maximization enhances exploration. In this case ($\gamma_i > 0,\alpha_i = \beta_i = 0$), the utility function of each agent and its gradients with respect to the different occupancy measure variables are given by \eqref{eq:pop-level-coordination} and the gradient of the negative entropy is straightforward.   

\subsection{Limitations of Potential GUMGs}
\label{app:limitations-potential-gumgs}

The potential assumption is rigid beyond the identical interest setting. However, we point out that this is already the case for MPGs. Beyond the common interest setting, the potential structure imposes important restrictions on the transitions (e.g. transitions independent of actions). In particular, a state-based potential game (for which reward functions induce a normal-form potential game at each state) is not an MPG in general as further restrictive conditions on state transitions are required (see e.g. Prop. 3.2 in \cite{leonardos-et-al22iclr}). In our case, even the common interest setting is interesting and goes beyond MPGs.

\subsection{Markov Nash Equilibrium vs Markov Perfect Equilibrium}
\label{app:ne-vs-mpe}

A (Markov) Nash equilibrium policy is defined relative to an initial distribution $\mu$ and ensures optimality only for the original experiment starting from that distribution. This policy need not be a Nash policy if the system is perturbed to another state. MPE is a more robust and stronger equilibrium notion than NE.

In many dynamic environments, the system can be unexpectedly pushed into states that were extremely unlikely under the initial distribution, for example, a sudden congestion spike in a network, a robot drifting off a planned trajectory, or a firm experiencing an unexpected shock. A Nash equilibrium defined only for the initial state distribution does not guarantee rational behavior outside the initial state distribution. An MPE is independent from the initial state distribution and guarantees that the policy is a best response in every state, ensuring that the policy remains sensible and stable even when the system experiences disturbances or is initialized outside the initially assumed distribution.

\subsection{Initial State Sampling}

In section~\ref{sec:on-policy-samp-comp}, the initial state in each trajectory is sampled according to the fixed initial state distribution defining the GUMG. This is because the policy gradients are expectations over this initial state distribution (see Prop.~\ref{prop:individual-pg}-\ref{prop-i:individual-pg}) since utilities are defined as functions of the occupancy measures induced by the same initial distribution. Therefore, sampling starting from the initial state distribution yields an unbiased policy gradient estimate. This is consistent with on-policy policy gradient methods for standard MPGs and single-agent RL with general utilities. Whether it is possible to learn Nash policies using a single trajectory by only sampling starting from the current state at each time step is an interesting question that merits further investigation. We expect that this would induce a bias in the policy gradient that needs to be controlled by carefully bounding the policy drift by tuning the step sizes for instance.

\subsection{Utility Function Knowledge}

In most applications of (single-agent) RL with general utilities known in the literature, the function $F_i$ is available, and its gradient can therefore be computed directly. Thus, in the majority of scenarios, access to $F_i$ (and its gradient) is not a restrictive requirement. For instance, this includes negative entropy in pure exploration (see e.g. \cite{hazan-et-al19}), KL divergence in imitation learning (see e.g. \cite{ho-ermon16gail}), A and D-design objectives in experiment design (see e.g. \cite{mutny-et-al23}). In addition, utilities are also typically supposed to be known in recent work on RL with general utilities (e.g. \cite{barakat-et-al23rl-gen-ut,barakat-et-al25rlgu}) and on cMGs \citep{gemp-et-al25convex-mgs,kalogiannis-et-al25zs-mgs}. Investigating settings where it can be useful to use some weaker utility feedback (such as bandit feedback) or learn such utility functions is an interesting direction for future work. 

\subsection{Regularization}

Regularization has been used in recent work on cMGs in \cite{gemp-et-al25convex-mgs} and in the follow-up work of \cite{kalogiannis-et-al25zs-mgs}. \cite{gemp-et-al25convex-mgs} do not adopt our policy gradient viewpoint and assume knowledge of the transition kernel, they use entropic regularization in the utilities and they do not provide convergence guarantees. \cite{kalogiannis-et-al25zs-mgs}  focus on the zero-sum setting and regularization improves  conditioning for solving the resulting minmax problem. In contrast, our work provides convergence guarantees in the potential setting which does not lead to a minmax problem (while our existence and characterization results hold for any GUMG).

Our approach is consistent with existing work in Markov games (Markov potential games in particular) where regularization is not necessary to obtain polynomial sample complexity for learning approximate Nash equilibria, even for 2-player zero-sum Markov games (see e.g. \cite{daskalakis-foster-golowich20}). We believe that we can also prove guarantees for zero-sum GUMGs without regularization using an approach similar to \cite{daskalakis-foster-golowich20} for 2-player zero-sum Markov games. However, such an analysis only provides one-sided Nash guarantees (for one of the players) when running the algorithm and would lead to iteration and sample complexities which would be worse than the analysis of \citet{kalogiannis-et-al25zs-mgs} (which require players to take turns on the other hand). 

Our approach is also inspired from single-agent RL with general utilities policy optimization (e.g. \cite{zhang-et-al21rlgu,barakat-et-al23rl-gen-ut,barakat-et-al25rlgu}) where policy gradient algorithms have been studied without regularization.

\section{PROOFS FOR SECTION~\ref{sec:structure-general-cMGs}}

\subsection{Proof of Proposition~\ref{prop:agentwise-grad-dom}}
\label{sec:proof-prop-agentwise-grad-dom}

\begin{proof}
\textbf{Step 1. Using concavity to relate utility deviation to standard single-agent RL.} 
Let $i \in \mathcal{N}$ and let $\pi = (\pi_i, \pi_{-i}) \in \Pi, \pi_i' \in \Pi_i$. By joint concavity of~$F_i$(Assumption~\ref{as:concavity}) w.r.t.~$(\lambda_1, \cdots, \lambda_N)$, we have that for any $\lambda = (\lambda_1, \cdots, \lambda_N), \lambda' =  (\lambda_1', \cdots, \lambda_N') \in \mathbb{R}^{|\mathcal{S}|\cdot \sum_{j=1}^N |\mathcal{A}_k|}\,,$ and any~$\pi_{-i} \in \Pi_{-i},$
\begin{equation}
\label{eq:joint-concavity-Fi}
F_i(\lambda',\pi_{-i}) - F_i(\lambda,\pi_{-i}) \leq \langle \nabla_{\lambda} F_i(\lambda,\pi_{-i}), \lambda' - \lambda \rangle = \sum_{j=1}^N \langle \nabla_{\lambda_j} F_i(\lambda_1, \cdots, \lambda_N, \pi_{-i}), \lambda_j' - \lambda_j \rangle\,.
\end{equation}

Recall now the definition of the (partial) value function defined for any $j \in \mathcal{N}$, any (joint) policy $\pi \in \Pi$ and any reward vector $r \in \mathbb{R}^{|\mathcal{S}| \cdot |\mathcal{A}_j|}$: 
\begin{equation}
v_{\mu,j}^{\pi}(r) := \langle\lambda_{\mu,j}^{\pi}, r \rangle = \mathbb{E}_{\mu, \pi}\left[ \sum_{t=0}^{\infty} \gamma^t r(s_t, a_{j,t})\right]\,.
\end{equation}
We also define for any $i,j \in \mathcal{N}$ and any joint policy~$\pi \in \Pi$ the pseudo-reward vector: 
\begin{equation}
\bar{r}_{i,j}^{\pi} := \nabla_{\lambda_j} F_i(\lambda_{\mu,1}^{\pi}, \cdots, \lambda_{\mu,N}^{\pi}, \pi_{-i}) \in \mathbb{R}^{|\mathcal{S}|\cdot|\mathcal{A}_j|}\,.
\end{equation}
Using this notation and setting $\lambda'_j = \lambda_{\mu,j}^{\pi_i', \pi_{-i}}$ and  $\lambda_j = \lambda_{\mu,j}^{\pi_i, \pi_{-i}}$ in \eqref{eq:joint-concavity-Fi} yields
\begin{align}
\label{eq:bound-ui-diff}
u_{\mu,i}(\pi_i', \pi_{-i}) - u_{\mu,i}(\pi_i, \pi_{-i}) &= F_i(\lambda_{\mu,1}^{\pi_i', \pi_{-i}}, \cdots, \lambda_{\mu,N}^{\pi_i', \pi_{-i}},\pi_{-i}) - F_i(\lambda_{\mu,1}^{\pi_i, \pi_{-i}}, \cdots, \lambda_{\mu,N}^{\pi_i, \pi_{-i}}, \pi_{-i}) \nonumber\\ 
&\leq \sum_{j=1}^N \langle \bar{r}_{i,j}^{\pi}, \lambda_{\mu,j}^{\pi_i', \pi_{-i}} - \lambda_{\mu,j}^{\pi_i, \pi_{-i}} \rangle \nonumber\\
&= \sum_{j=1}^N v_{\mu,j}^{\pi_i', \pi_{-i}}(\bar{r}_{i,j}^{\pi}) - v_{\mu,j}^{\pi_i, \pi_{-i}}(\bar{r}_{i,j}^{\pi})\,.
\end{align}

\textbf{Step 2. Using single-agent gradient domination in an (averaged) MDP.} 
Given the fixed policy~$\pi_{-i}$, define now for any reward vector $r \in \mathbb{R}^{|\mathcal{S}|\times |\mathcal{A}|}$ the following MDP~$\mathcal{M} := (\mathcal{S}, \mathcal{A}_i, \bar{P}^{\pi_{-i}}, r, \mu, \gamma)$ where the transition kernel is defined as follows: For any $s, s' \in \mathcal{S}, a_i \in \mathcal{A}_i$, 
\begin{equation}
\bar{P}^{\pi_{-i}}(s'|s,a_i) = \sum_{a_{-i} \in \mathcal{A}_{-i}} \pi_{-i}(a_{-i}|s) P(s'|s, a_i, a_{-i})\,, 
\end{equation}
where $P$ is the transition kernel of our initial GUMG. 

In this MDP~$\mathcal{M}$, observe that $v_{\mu,j}^{\pi_i, \pi_{-i}}(r)$ (as defined in \eqref{eq:def-vj-rj}-\eqref{eq:def-tilde-v-tilde-r}) is the value function associated to policy~$\pi_i$ and reward $r\,.$ Using the variational gradient domination property for (single-agent) MDPs (see Proposition~\ref{prop:var-grad-dom-mdps}), we obtain that for every~$\pi_i' \in \Pi_i$ and any state distribution~$\rho \in \Delta(\mathcal{S}),$ 
\begin{equation}
\label{eq:grad-dom-agentwise-proof}
v_{\mu,j}^{\pi_i^{\star},\pi_{-i}}(r) - v_{\mu,j}^{\pi_i,\pi_{-i}}(r)
\leq \left\| \frac{d_{\mu}^{\pi_i^{\star}, \pi_{-i}}}{d_{\rho}^{\pi_i, \pi_{-i}}} \right\|_{\infty}\,   \langle \nabla_{\pi_i} v_{\rho,j}^{\pi}(r) , \pi_i' - \pi_i \rangle\,,
\end{equation}
where~$\pi_i^{\star}$ is an optimal policy in the MDP~$\mathcal{M}$ (depending on the fixed policy~$\pi_{-i}$)\,. 
As a consequence, using \eqref{eq:bound-ui-diff} and \eqref{eq:grad-dom-agentwise-proof} with $r = \bar{r}_{i,j}^{\pi}$, it follows from summing up the above inequality for $j \in \mathcal{N}$ that 
\begin{equation}
u_{\mu,i}(\pi_i', \pi_{-i}) - u_{\mu,i}(\pi_i, \pi_{-i}) 
\leq C_{\mathcal{G}}  \left\langle \sum_{j=1}^N \nabla_{\pi_i} v_{\rho,j}^{\pi}(\bar{r}_{i,j}^{\pi}) , \pi_i' - \pi_i \right\rangle = C_{\mathcal{G}} \left\langle \nabla_{\pi_i} u_{\rho,i}(\pi) , \pi_i' - \pi_i \right\rangle \,,
\end{equation}
where the last identity follows from using the policy gradient expression proved in Proposition~\ref{prop:individual-pg} (see \eqref{eq:pg-sum-grad}). To conclude, it remains to take the maximum over~$\pi_i'$ on the right-hand side of the above inequality.  
\end{proof}

\subsection{Proof of Proposition~\ref{prop:fos-nash-equivalence}}

\textbf{(i) If $\pi^{\star}$ is a Nash policy then $\pi^{\star}$ is FOS.}
\begin{proof}
This implication is standard in optimization. We provide a proof here for completeness. Let~$\pi^{\star}$ be a Nash policy. By definition of the differentiability of~$u_i$ w.r.t~$\pi_i$ at~$\pi^{\star}$, we can write for any $\eta > 0$ and any policy $\pi_i' \in \Pi_i$ (using $\delta_i := \pi_i' - \pi_i^{\star}$) that 
\begin{equation}
u_i(\pi_i^{\star} + \eta \delta_i, \pi_{-i}^{\star}) = u_i(\pi_i^{\star}, \pi_{-i}^{\star}) + \eta\, \langle \nabla_{\pi_i} u_i(\pi^{\star}), \delta_i \rangle + \eta \|\delta_i\| \cdot o_{\eta \to 0}(1)\,.
\end{equation}
Therefore, we obtain by rearranging the identity
\begin{equation}
\langle \nabla_{\pi_i} u_i(\pi^{\star}), \pi_i' - \pi_i^{\star} \rangle = \frac 1\eta (u_i(\pi_i^{\star} + \eta \,\delta_i, \pi_{-i}^{\star}) - u_i(\pi_i^{\star},\pi_{-i}^{\star})) - \|\delta_i\| \cdot o_{\eta \to 0}(1)\,.
\end{equation}
By the definition of a Nash equilibrium policy, it follows that 
\begin{equation}
\langle \nabla_{\pi_i} u_i(\pi^{\star}), \pi_i' - \pi_i^{\star} \rangle \leq - \|\delta_i\| \cdot o_{\eta \to 0}(1)\,.
\end{equation}
Taking $\eta \to 0$ yields $\langle \nabla_{\pi_i} u_i(\pi^{\star}), \pi_i' - \pi_i^{\star} \rangle \leq 0$, i.e.$\langle v_i(\pi^{\star}), \pi_i' - \pi_i^{\star} \rangle \leq 0\,.$ As $i$ is arbitrary in $\mathcal{N}$, the inequality holds for all $i \in \mathcal{N}$ and summing up these inequalities yields $\langle v(\pi^{\star}), \pi' - \pi^{\star} \rangle \leq 0$ for all $\pi' \in \Pi$ which means that $\pi^{\star}$ is a FOS policy. 
\end{proof}

\textbf{(ii) If $\pi^{\star}$ is a FOS policy then  $\pi^{\star}$ is also a Nash policy.}
\begin{proof}
The result follows from our gradient domination result of Proposition~\ref{prop:agentwise-grad-dom}. 

Let~$\pi^{\star} = (\pi_i^{\star}, \pi_{-i}^{\star}) \in \Pi$ be a FOS policy. By definition it satisfies for all $\pi \in \Pi, \langle v(\pi^{\star}), \pi - \pi^{\star} \rangle \leq 0$ for all $\pi \in \Pi\,.$ Choose $\pi = (\pi_i', \pi_{-i}^{\star})$ in this FOS inequality to obtain $\langle v_i(\pi^{\star}), \pi_i' - \pi_i^{\star} \rangle \leq 0$ for all $i \in \mathcal{N}$ and for all $\pi_i' \in \Pi_i\,.$ 
By Proposition~\ref{prop:agentwise-grad-dom}, for all $i \in \mathcal{N}$ and for all $\pi_i' \in \Pi_i,$ we have 
\begin{equation}
\label{eq:grad-dom-proof-eq}
u_i(\pi_i', \pi_{-i}^{\star}) - u_i(\pi_i^{\star}, \pi_{-i}^{\star}) \leq C_{\mathcal{G}} \cdot \max_{\tilde{\pi}_i \in \Pi_i} \langle v_i(\pi^{\star}), \tilde{\pi}_i - \pi_i^{\star} \rangle \,.
\end{equation}
This in turn implies that $u_i(\pi_i', \pi_{-i}^{\star}) - u_i(\pi_i^{\star}, \pi_{-i}^{\star}) \leq 0$ for all $i \in \mathcal{N}$ and for all $\pi_i' \in \Pi_i$ and hence $\pi^{\star}$ is a Nash policy. 
\end{proof}

\subsection{Proof of Lemma~\ref{prop:characterization-fos-fixed-point}}
\label{app:proof-lemma-fos-fixed-pts}

\begin{lemma}
\label{prop:characterization-fos-fixed-point}
A policy~$\pi^{\star}$ is \ref{eq:def-fos} if and only if it is a fixed point of the projected gradient dynamics, i.e. it satisfies: 
$\pi^{\star} = \text{Proj}_{\Pi}(\pi^{\star} + \eta\, v(\pi^{\star})),\forall \eta > 0\,.$
\end{lemma}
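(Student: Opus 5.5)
The plan is to use the variational characterization of the Euclidean projection onto a closed convex set. Since $\Pi$ is a product of simplices, it is nonempty, closed and convex, so for any $x$ the point $p = \text{Proj}_{\Pi}(x)$ is the unique element of $\Pi$ satisfying
\begin{equation*}
\langle x - p, \pi - p \rangle \leq 0, \quad \forall \pi \in \Pi.
\end{equation*}
This is the standard obtuse-angle criterion, i.e.\ the first-order optimality condition of the strongly convex problem $\min_{p \in \Pi} \tfrac12\|x - p\|_2^2$. I will state it as a known fact, or briefly justify it by expanding $\|x - ((1-t)p + t\pi)\|^2 \geq \|x-p\|^2$ for $t \in (0,1]$ and letting $t \to 0$.

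For the direction ``FOS implies fixed point'', fix $\eta > 0$ and set $x = \pi^{\star} + \eta\, v(\pi^{\star})$ with candidate $p = \pi^{\star} \in \Pi$. Then
\begin{equation*}
\langle x - \pi^{\star}, \pi - \pi^{\star} \rangle = \eta \langle v(\pi^{\star}), \pi - \pi^{\star}\rangle \leq 0
\end{equation*}
for all $\pi \in \Pi$, by \eqref{eq:def-fos}. By uniqueness in the characterization, $\text{Proj}_{\Pi}(x) = \pi^{\star}$.

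For the converse, suppose $\pi^{\star} = \text{Proj}_{\Pi}(\pi^{\star} + \eta\, v(\pi^{\star}))$ for some, or every, $\eta > 0. The characterization gives $\eta\langle v(\pi^{\star}), \pi - \pi^{\star}\rangle \leq 0$ for all $\pi\in\Pi$. Dividing by $\eta > 0$ yields \eqref{eq:def-fos}. This also shows that being a fixed point for a single $\eta$ is equivalent to being one for all $\eta$.

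Finally, I will note that the per-agent reformulation in Theorem~\ref{prop:equiv-ne-fixed-point-projgrad} follows because the projection onto the product set $\Pi = \times_i \Pi_i$ acts blockwise, since the squared Euclidean norm separates across agents. There is no real obstacle here. The only point needing care is invoking uniqueness of the projection, which holds by strict convexity of the squared distance, and recording that the argument uses nothing specific to GUMGs beyond convexity and closedness of $\Pi$.
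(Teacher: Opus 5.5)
Your proposal is correct and takes essentially the same approach as the paper: both directions rest on the variational (obtuse-angle) characterization of the Euclidean projection onto the closed convex set $\Pi$. The fixed-point-to-FOS direction follows by dividing by $\eta>0$, and the converse by the fact that this inequality characterizes the projection; your added remark that a fixed point for a single $\eta>0$ is already a fixed point for every $\eta$ is a correct, harmless sharpening.
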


\begin{proof}
Suppose that $\pi^{\star} = \text{Proj}_{\Pi}(\pi^{\star} + \eta\, v(\pi^{\star})),\forall \eta > 0\,.$ By characterization of the projection, we have 
\begin{equation}
\label{eq:proj-charac}
\forall \pi \in \Pi, \forall \eta > 0, \langle \pi - \pi^{\star} , (\pi^{\star} + \eta\, v(\pi^{\star})) - \pi^{\star} \rangle \leq 0\,. 
\end{equation}
By simplifying and dividing by $\eta >0$ the above inequality, 
we obtain the desired variational inequality,  
\begin{equation}
\forall \pi \in \Pi, \forall \eta > 0, \langle \pi - \pi^{\star} ,   v(\pi^{\star}) \rangle \leq 0\,,   
\end{equation}
i.e. $\pi^{\star}$ is a FOS policy. The converse implication also holds since \eqref{eq:proj-charac} is a characterization of the projection, i.e. if $\pi^{\star} \in \Pi$ satisfies \eqref{eq:proj-charac} then it must be the projection of $\pi^{\star} + \eta\, v(\pi^{\star})$ on the policy space~$\Pi$. 
\end{proof}

\subsection{Proof of Theorem~\ref{prop:equiv-ne-fixed-point-projgrad}}

The proof follows from combining Prop.~\ref{prop:fos-nash-equivalence} and Lem.~\ref{prop:characterization-fos-fixed-point}. Prop.~\ref{prop:fos-nash-equivalence} shows that a (joint) policy $\pi^{\star} \in \Pi$ is a Nash policy in a GUMG if and only if it is FOS. Lem.~\ref{prop:characterization-fos-fixed-point} shows that a policy is FOS if and only if it is a fixed point of the projected gradient dynamics. 

\subsection{Proof of Corollary~\ref{cor:existence-NE-cMG}} 

The proof of existence of a NE is in the main part. We now use this result to further show the existence of Markov perfect equilibrium (MPE). 

Let $\varepsilon > 0\,.$ Let $ s \in \cS$ be an arbitrary state. Consider the mixed initial state distribution~$\mu_{\varepsilon} := (1-\varepsilon) \delta_{s} +  \varepsilon \frac{1}{|\cS|}$ where~$\delta_{s}$ is the Dirac measure at the state~$s \in \cS$, defined s.t. $\delta_s(s) = 1$ and $\delta_s(s') = 0$ for any $s' \neq s\,.$ Since $\mu_{\varepsilon}$ is fully mixed and hence implies that Assumption~\ref{as:exploration} is satisfied (i.e. $d_{\mu_{\varepsilon}}^{\pi}(s) > 0$ for all $\tilde{s} \in \cS, \pi \in \Pi$ since it always holds that $d_{\mu_{\varepsilon}}^{\pi}(\tilde{s}) \geq \mu_{\varepsilon}(\tilde{s})$) We can apply the NE existence result using the initial state distribution~$\mu_{\varepsilon}$ established in the first part of Corollary~\ref{cor:existence-NE-cMG} to obtain the existence of a NE~$\pi^{\star}_{\varepsilon}$ (a priori depending on~$\varepsilon$). By definition this NE satisfies the following:
\begin{equation}
\label{eq:ne-ineq}
\forall \pi_i \in \Pi_i,\quad u_{\mu,i}(\pi_i, \pi_{\varepsilon,-i}^{\star}) \leq u_{\mu,i}(\pi^{\star}_{\varepsilon})\,.
\end{equation}
Let $(\varepsilon_n)$ be a sequence of positive numbers converging to zero. The sequence of policies $(\pi^{\star}_{\varepsilon_n})$ evolves in the compact space of policies~$\Pi.$ By the Bolzano-Weierstrass theorem, we can extract from it a converging subsequence~$(\pi^{\star}_{\varepsilon_{\varphi(n)}})$ where $\varphi$ is an increasing function over integers. Denote by $\pi^{\star}$ the limit of the subsequence~$(\pi^{\star}_{\varepsilon_{\varphi(n)}})$. It follows from~\eqref{eq:ne-ineq} that for any integer~$n$, 
\begin{equation}
\label{eq:ne-ineq-extracted}
\forall \pi_i \in \Pi_i,\quad u_{\mu_{\varepsilon_{\varphi(n)}},i}(\pi_i, \pi_{\varepsilon_{\varphi(n)},-i}^{\star}) \leq u_{\mu_{\varepsilon_{\varphi(n)}},i}(\pi^{\star}_{\varepsilon_{\varphi(n)}})\,.
\end{equation}
We now note that the utility function~$u_{\mu,i}$ is continuous w.r.t. both its policy argument and the initial distribution~$\mu$. Indeed, this follows from observing its definition~\eqref{eq:def-u_i}, noting that~$F_i$ is supposed to be continuous and the occupancy measures are continuous functions in both their policy variable (see e.g., Prop.~\ref{prop:lip-state-occup-measures}) and the initial state distribution~$\mu$ (by linearity of the occupancy measure w.r.t.~$\mu$).  
Finally, taking the limit when $n \to \infty$ in~\eqref{eq:ne-ineq-extracted} yields: 
\begin{equation}
\forall \pi_i \in \Pi_i,\quad u_{\delta_s,i}(\pi_i, \pi_{-i}^{\star}) \leq u_{\delta_s,i}(\pi^{\star})\,,
\end{equation}
which proves that~$\pi^{\star}$ is a MPE. 

\begin{remark}
A stronger result can be shown: an NE for a given full-support initial state distribution is also an MPE. Indeed, it can be proved that a NE~$\pi$ for a given full-support initial state distribution~$\mu$ is also a NE for any other full-support initial state distribution~$\rho \neq \mu$ using per-player gradient domination (Prop.~\ref{prop:agentwise-grad-dom}) combined with the policy gradient theorem (Prop.~\ref{prop:individual-pg}-\ref{prop-ii:individual-pg}). Therefore a NE is independent of the initial full-support state distribution. Then a similar limit argument as above shows that this same NE policy~$\pi$ is a MPE. 
\end{remark}

\section{PROOFS FOR SECTION~\ref{sec:pg-algo}: Proof of Proposition~\ref{prop:individual-pg}}

\begin{proof}
We prove each result separately. 

\noindent\textbf{Proof of (i).} 
Using the chain rule, we have for every $i \in \mathcal{N},$
\begin{equation}
\label{eq:grad-ui-1st-expression}
\nabla_{\pi_i} u_{\mu,i}(\pi) = \sum_{j=1}^N [\nabla_{\pi_i} \lambda_{\mu,j}^{\pi}]^T\, \nabla_{\lambda_j} F_i(\lambda_{\mu,1}^{\pi}, \cdots, \lambda_{\mu,N}^{\pi}, \pi_{-i})\,.
\end{equation}

Now define for all $j \in \mathcal{N}, r_j \in \mathbb{R}^{|\mathcal{S}| \cdot |\mathcal{A}_j|},$
\begin{equation}
\label{eq:def-vj-rj}
v_{\mu,j}^{\pi}(r_j) = \langle \lambda_{\mu,j}^{\pi}, r_j \rangle = \mathbb{E}_{\mu, \pi} \left[ \sum_{t=0}^{+\infty} \gamma^t r_j(s_t, a_t^j) \right]\,.
\end{equation}
It follows that 
\begin{equation}
\label{eq:grad-vj}
\nabla_{\pi_i} v_{\mu,j}^{\pi}(r_j) = \sum_{s \in \mathcal{S}, a_j \in \mathcal{A}_j} \nabla_{\pi_i} \lambda_{\mu,j}^{\pi}(s,a_j) r_j(s,a_j) = [\nabla_{\pi_i} \lambda_{\mu,j}^{\pi}]^{\top} r_j\,.
\end{equation}
Therefore, we obtain from \eqref{eq:grad-ui-1st-expression} and \eqref{eq:grad-vj} together that
\begin{equation}
\label{eq:pg-sum-grad}
\nabla_{\pi_i} u_{\mu,i}(\pi) = \sum_{j=1}^N \nabla_{\pi_i} v_{\mu,j}^{\pi}(r_j)_{\big| r_j = \nabla_{\lambda_j} F_i(\lambda_{\mu,1}^{\pi}, \cdots, \lambda_{\mu,N}^{\pi}, \pi_{-i})} = \sum_{j=1}^N \nabla_{\pi_i} v_{\mu,j}^{\pi}(\bar{r}_{i,j}^{\pi})\,,
\end{equation}
where we recall that $\bar{r}_{ij}^{\pi} = \nabla_{\lambda_j} F_i(\lambda_{\mu,1}^{\pi}, \cdots, \lambda_{\mu,N}^{\pi}, \pi_{-i})$ for any $j \in \mathcal{N}\,.$ 
Now define the value function 
\begin{equation}
\label{eq:def-tilde-v-tilde-r}
\tilde{v}_{\mu,j}^{\pi}(\tilde{r}_j) = \mathbb{E}_{\mu, \pi} \left[ \sum_{t=0}^{+\infty} \gamma^t \tilde{r}_j(s_t, a_{j,t}, a_{-j,t}) \right]\,, 
\end{equation}
where $\tilde{r}_j(s_t, a_t) = r_j(s_t, a_{j,t})$ for all $a_t = (a_{j,t}, a_{-j,t})\,.$ 
Given the above notation, we can now apply the standard policy gradient theorem for MDPs (single-agent RL, see Proposition~\ref{prop:classical-pg-thm}) to write 
\begin{equation}
\nabla_{\pi} \tilde{v}_{\mu,j}^{\pi}(\tilde{r}_j) = \mathbb{E}_{\mu, \pi} \left[ \sum_{t=0}^{+\infty} \gamma^t \tilde{r}_j(s_t, a_t) \sum_{t'=0}^{t} \nabla_{\pi} \log \pi(a_{t'}|s_{t'})   \right]\,.
\end{equation}
Now using our product policy form $\left(\pi(a|s) = \prod_{i=1}^N \pi_i(a_i|s) \right)$ and the definition of $\tilde{r}_j$ (which implies that for any policy~$\pi \in \Pi, v_{\mu,j}^{\pi}(r_j)=\tilde{v}_{\mu,j}^{\pi}(\tilde{r}_j)$, see \eqref{eq:def-vj-rj} and \eqref{eq:def-tilde-v-tilde-r}), it follows that 
\begin{equation}
\nabla_{\pi_i} v_{\mu,j}^{\pi}(r_j) = \nabla_{\pi_i} \tilde{v}_{\mu,j}^{\pi}(\tilde{r}_j) =  \mathbb{E}_{\mu, \pi} \left[ \sum_{t=0} ^{+\infty} \gamma^t r_j(s_t, a_{j,t}) \sum_{t'=0}^t \nabla_{\pi_i} \log \pi_i(a_{i,t'}|s_{t'}) \right]\,,
\end{equation}
which gives the desired result when combined with \eqref{eq:pg-sum-grad}.  

\noindent\textbf{Proof of (ii).} 
Notice first from \eqref{eq:pg-sum-grad} that for every state-action pair~$(s,a_i) \in \cS \times \cA_i,$ 
\begin{equation}
\label{eq:chain-rule-sum-pg}
\frac{\partial u_{\mu,i}(\pi)}{\partial \pi_i(a_i|s)} 
= \sum_{j=1}^N  \frac{\partial v_{\mu,j}^{\pi}(r)}{\partial \pi_i(a_i|s)}_{\large{|} r =\bar{r}_{i,j}^{\pi}}\,,
\end{equation}

By the classical policy gradient theorem (see Proposition~\ref{prop:classical-pg-thm}), we have for every~$j \in [N]$ and every $r \in \R^{|S| \cdot |A|},$
\begin{equation}
\nabla_{\pi} v_{\mu,j}^{\pi}(r) = \frac{1}{1-\gamma} \sum_{s' \in \mathcal{S}} \sum_{a' \in \mathcal{A}} \lambda_{\mu}^{\pi}(s',a')\, Q_{s',a'}^{\pi}(r) \, \nabla \log \pi(a'|s')\,.
\end{equation}

Hence, it follows that for every $r \in \R^{|S| \cdot |A|}$ and every state-action pair~$(s,a_i) \in \cS \times \cA_i,$ 
\begin{equation}
\frac{\partial v_{\mu,j}^{\pi}(r)}{\partial \pi_i(a_i|s)} = \frac{1}{1-\gamma} \sum_{s' \in \mathcal{S}} \sum_{a' \in \mathcal{A}} d_{\mu}^{\pi}(s') \pi(a'|s')\, Q_{s',a'}^{\pi}(r) \, \frac{\partial \log \pi(a'|s')}{\partial \pi_i(a_i|s)}\,.
\end{equation}
Since $\pi(a'|s') = \prod_{i=1}^N \pi_i(a_i|s)$ and $\pi_i$ is a direct parameterized policy, we have 
\begin{equation}
\frac{\partial \log \pi(a'|s')}{\partial \pi_i(a_i|s)} = \frac{\partial \log \pi_i(a'|s')}{\partial \pi_i(a_i|s)} = \frac{1}{\pi_i(a_i|s)} \mathbf{1}_{\{s' = s, a_i' = a_i\}}\,,
\end{equation}
where we use the notation~$\mathbf{1}_{\mathcal{X}}$ for the indicator function of the set~$\mathcal{X}\,.$
Therefore, we have: 
\begin{align}
\label{eq:partial-derivative-v_j}
\frac{\partial v_{\mu,j}^{\pi}(r)}{\partial \pi_i(a_i|s)} 
&=  \frac{1}{1-\gamma} \sum_{s' \in \mathcal{S}} \sum_{a' \in \mathcal{A}} d_{\mu}^{\pi}(s') \pi(a'|s')\, Q_{s',a'}^{\pi}(r_i^{\pi}) \frac{1}{\pi_i(a_i|s)} \mathbf{1}_{\{s' = s, a_i' = a_i\}}\, \nonumber\\
&= \frac{1}{1-\gamma} d_{\mu}^{\pi}(s) \sum_{a_{-i}' \in \mathcal{A}_{-i}} \pi_{-i}(a_{-i}'|s) Q_{s, (a_i, a_{-i}')}^{\pi}(r)\nonumber\\
&= \frac{1}{1-\gamma} d_{\mu}^{\pi}(s) \bar{Q}_{s,a_i}^{\pi}(r)\,,
\end{align}
where the average Q-function~$\bar{Q}^{\pi}(r)$ is defined for any joint policy~$\pi \in \Pi$, any reward $r \in \R^{|\cS|\cdot|\cA|}$ and for any state-action pair $(s, a_i) \in \cS \times \cA_i$ as follows:  
\begin{equation}
\label{eq:avg-Q-fun}
\bar{Q}_{s,a_i}^{\pi}(r) := \sum_{a_{-i} \in \cA_{-i}} \pi_{-i}(a_{-i}|s) \,Q_{s,(a_i, a_{-i})}^{\pi}(r)\,.
\end{equation} 

Using \eqref{eq:chain-rule-sum-pg} together with \eqref{eq:partial-derivative-v_j}, we obtain
\begin{equation}
\frac{\partial u_{\mu,i}(\pi)}{\partial \pi_i(a_i|s)} 
= \sum_{j=1}^N  \frac{\partial v_{\mu,j}^{\pi}(r)}{\partial \pi_i(a_i|s)}_{\large{|} r =\bar{r}_{i,j}^{\pi}} 
= \frac{1}{1-\gamma} d_{\mu}^{\pi}(s) \sum_{j=1}^N \bar{Q}_{s,a_i}^{\pi}(\bar{r}_{i,j}^{\pi})\,,
\end{equation}
which concludes the proof. 
\end{proof}

\section{PROOFS FOR SECTION~\ref{sec:potential-convex-MGs}}

\subsection{Proof of Proposition~\ref{prop:smoothness}}

\begin{proof}
Let $i \in \mathcal{N}, \pi, \pi' \in \Pi$\,. We show smoothness by controlling each one of the partial gradients w.r.t. policy~$\pii$ of agent~$i$ since: 
\begin{equation}
\label{eq:sum-of-norms-partials}
\|\nabla_{\pi} u_{\mu,i}(\pi) - \nabla_{\pi} u_{\mu,i}(\pi')\|_2^2 = \sum_{i=1}^N \|\nabla_{\pii} u_{\mu,i}(\pi) - \nabla_{\pii} u_{\mu,i}(\pi')\|_2^2\,.
\end{equation}

Using Prop.~\ref{prop:individual-pg} (see eq.~\eqref{eq:pg-sum-grad}), we can write the following decomposition: 
\begin{align}
\label{eq:decomp-proof-smoothness}
\nabla_{\pi_i} u_{\mu,i}(\pi) - \nabla_{\pi_i} u_{\mu,i}(\pi') 
&= \sum_{j=1}^N \nabla_{\pi_i} v_{\mu,j}^{\pi}(\bar{r}_{i,j}^{\pi})
- \nabla_{\pi_i} v_{\mu,j}^{\pi'}(\bar{r}_{i,j}^{\pi'}) \nonumber\\
&=  \sum_{j=1}^N \nabla_{\pi_i} v_{\mu,j}^{\pi}(\bar{r}_{i,j}^{\pi})
- \nabla_{\pi_i} v_{\mu,j}^{\pi'}(\bar{r}_{i,j}^{\pi}) 
 +  \nabla_{\pi_i} v_{\mu,j}^{\pi'}(\bar{r}_{i,j}^{\pi}) 
 - \nabla_{\pi_i} v_{\mu,j}^{\pi'}(\bar{r}_{i,j}^{\pi'})\,.
\end{align}

Using Lemma~D.7 in \citet{giannou-et-al22neurips} and boundedness of the pseudo-rewards (Assumption~\ref{as:smoothness}), we obtain 
\begin{equation}
\label{eq:term1-smoothness}
\|\nabla_{\pi_i} v_{\mu,j}^{\pi}(\bar{r}_{i,j}^{\pi})
- \nabla_{\pi_i} v_{\mu,j}^{\pi'}(\bar{r}_{i,j}^{\pi})\| 
\leq \frac{3 \sqrt{|\cA_i|} l_{\infty}}{(1-\gamma)^3} \sum_{k=1}^N \sqrt{|\cA_k|} \cdot \|\pi_k - \pi_k'\|_2\,. 
\end{equation}

Using Prop.~\ref{prop:lipschitzness-value-wrt-rewards} and then the smoothness assumption~\ref{as:smoothness} together with the definition of pseudo-rewards, we get 
\begin{equation}
\label{eq:interm-2-proof}
\|\nabla_{\pi_i} v_{\mu,j}^{\pi'}(\bar{r}_{i,j}^{\pi}) 
 - \nabla_{\pi_i} v_{\mu,j}^{\pi'}(\bar{r}_{i,j}^{\pi'})\|_2
 \leq \frac{\sqrt{|\cAi|}}{(1-\gamma)^2} \|\bar{r}_{i,j}^{\pi} - \bar{r}_{i,j}^{\pi'} \|_{\infty}
 \leq \frac{\sqrt{|\cAi|} L}{(1-\gamma)^2} (\|\lambda_{\mu,1:N}^{\pi} - \lambda_{\mu,1:N}^{\pi'}\|_1 + \|\pi_{-i} - \pi_{-i}'\|_1)\,.
\end{equation}

The next lemma shows that policy deviations only scale in norm with the sum of the sizes of the action spaces rather than their product. 
\begin{lemma}
\label{lem:diff-policy-norms}
The following statements hold when using product policies: 
\begin{enumerate}[label=(\roman*)]
\item \label{item1-diff-policy} $\forall s \in \cS,\, \|\pi_{-i}(\cdot|s) - \pi_{-i}'(\cdot|s)\|_1 \leq \sum_{k=1, k \neq i} \|\pi_k(\cdot|s) - \pi_k'(\cdot|s)\|_1\,,$
\item \label{item2-diff-policy} $\forall s \in \cS,\, \|\pi_{-i}(\cdot|s) - \pi_{-i}'(\cdot|s)\|_1 \leq \sqrt{\sum_{k=1}^N |\cA_k|} \cdot \|\pi - \pi'\|_2\,,$

\item \label{item3-diff-policy} $\sum_{k=1}^N  \|\pi_k - \tilde{\pi}_k\|_1 \leq 
\sqrt{|\cS| \cdot \sum_{k=1}^N |\cA_k|} \cdot \|\pi - \tilde{\pi}\|_2\,.$

\item \label{item4-diff-policy} $\|\pi - \tilde{\pi}\|_1 \leq \sqrt{|\cS| \cdot \sum_{k=1}^N |\cA_k|} \cdot \|\pi - \tilde{\pi}\|_2\,.$
\end{enumerate}
\end{lemma}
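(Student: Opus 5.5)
The plan is to prove item \ref{item1-diff-policy} by a standard hybrid (telescoping) argument for product distributions, and then derive items \ref{item2-diff-policy}--\ref{item4-diff-policy} from it using only Cauchy--Schwarz. Throughout, $\pi = (\pi_1, \cdots, \pi_N)$ is viewed as the concatenated vector in $\R^{|\cS| \cdot \sum_{k} |\cA_k|}$, consistently with the norm $\|\cdot\|_2$ used in Proposition~\ref{prop:smoothness}. The product policy of the other agents is $\pi_{-i}(a_{-i}|s) = \prod_{k \neq i} \pi_k(a_k|s)$.

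For \ref{item1-diff-policy}, fix $s$ and enumerate the agents $k \neq i$ as $k_1, \dots, k_{N-1}$. I introduce the hybrid product distributions $h_m := \pi'_{k_1} \otimes \cdots \otimes \pi'_{k_m} \otimes \pi_{k_{m+1}} \otimes \cdots \otimes \pi_{k_{N-1}}$ (all evaluated at $s$), so that $h_0 = \pi_{-i}(\cdot|s)$ and $h_{N-1} = \pi'_{-i}(\cdot|s)$. By the triangle inequality, $\|h_0 - h_{N-1}\|_1 \leq \sum_{m} \|h_{m-1} - h_m\|_1$. Consecutive hybrids differ in a single factor, so $h_{m-1} - h_m = (\text{common product}) \otimes (\pi_{k_m} - \pi'_{k_m})$. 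Because the $\ell_1$ norm of a tensor product factorizes, and every other factor is a probability vector of $\ell_1$ norm one, this gives $\|h_{m-1} - h_m\|_1 = \|\pi_{k_m}(\cdot|s) - \pi'_{k_m}(\cdot|s)\|_1$. Summing over $m$ yields \ref{item1-diff-policy}.

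For \ref{item2-diff-policy}, I apply Cauchy--Schwarz on each factor, $\|\pi_k(\cdot|s) - \pi_k'(\cdot|s)\|_1 \leq \sqrt{|\cA_k|}\, \|\pi_k(\cdot|s) - \pi_k'(\cdot|s)\|_2$. Cauchy--Schwarz again over the index $k$ gives
\[
\sum_{k \neq i} \sqrt{|\cA_k|}\, x_k \leq \sqrt{\textstyle\sum_k |\cA_k|}\, \sqrt{\textstyle\sum_k x_k^2},
\]
where $x_k := \|\pi_k(\cdot|s) - \pi_k'(\cdot|s)\|_2$. Finally, $\sum_k x_k^2 \leq \|\pi - \pi'\|_2^2$, since the left side only collects the coordinates at the single state $s$.

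For \ref{item3-diff-policy}, note that $\sum_k \|\pi_k - \tilde{\pi}_k\|_1$ is exactly the $\ell_1$ norm of the concatenated vector $\pi - \tilde{\pi}$, which has dimension $|\cS| \cdot \sum_k |\cA_k|$. The claim is then the inequality $\|x\|_1 \leq \sqrt{d}\, \|x\|_2$ in dimension $d = |\cS| \cdot \sum_k |\cA_k|$. Item \ref{item4-diff-policy} is the same statement, because $\|\pi - \tilde{\pi}\|_1$ of the concatenated policy coincides with the left-hand side of \ref{item3-diff-policy}.

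The only step with real content is the factorization in \ref{item1-diff-policy}, namely $\|p \otimes (q - q')\|_1 = \|p\|_1\, \|q - q'\|_1$ with $\|p\|_1 = 1$. I would verify it directly by summing $|p(a)|\, |q(b) - q'(b)|$ over $(a,b)$. One point to keep straight is the notation: $\|\pi - \tilde{\pi}\|_1$ must be read for the concatenated vector, not for the joint product policy. The product-policy difference is what \ref{item1-diff-policy} controls, and it would not give the sum-of-action-spaces scaling directly.
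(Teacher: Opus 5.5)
Your arguments for (i)--(iii) are correct and match the paper's. The paper writes the same telescoping decomposition coordinatewise for (i), then applies Cauchy--Schwarz per factor and across agents for (ii) and (iii).

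The gap is in item (iv). You insist that $\|\pi - \tilde{\pi}\|_1$ be read as the $\ell_1$ norm of the concatenated vector, which makes (iv) a verbatim copy of (iii). The intended left-hand side is the $\ell_1$ distance between the joint product policies, $\sum_{s\in\mathcal{S}}\sum_{a\in\mathcal{A}}|\pi(a|s)-\tilde{\pi}(a|s)|$ with $\pi(a|s)=\prod_{k}\pi_k(a_k|s)$. Three things signal this: the phrase ``when using product policies'', the fact that (iv) is listed separately from (iii), and the way the paper proves and uses it. This joint-action quantity is exactly what Proposition~\ref{prop:lip-state-occup-measures} produces, since its proof sums over joint actions $a'\in\mathcal{A}$. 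The proof of Proposition~\ref{prop:smoothness} invokes (iv) precisely to turn the resulting $\frac{N\gamma}{1-\gamma}\|\pi-\pi'\|_1$ into $\sqrt{|\mathcal{S}|\sum_k|\mathcal{A}_k|}\,\|\pi-\pi'\|_2$. Under your reading, that step is left unjustified.

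Your remark that the product reading ``would not give the sum-of-action-spaces scaling'' is also false. The fix uses only your own tool. Run the hybrid argument of (i) over all $N$ agents rather than only $k\neq i$. This gives, for each state, $\|\pi(\cdot|s)-\tilde{\pi}(\cdot|s)\|_1\le\sum_{k=1}^N\|\pi_k(\cdot|s)-\tilde{\pi}_k(\cdot|s)\|_1$. Summing over $s$ yields $\|\pi-\tilde{\pi}\|_1\le\sum_{k=1}^N\|\pi_k-\tilde{\pi}_k\|_1$, and then (iii) finishes. This is exactly the paper's proof of (iv).

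A bound involving the product $|\mathcal{S}|\cdot|\mathcal{A}|$ of joint action-space sizes only appears if you apply Cauchy--Schwarz directly in $\mathbb{R}^{|\mathcal{S}|\cdot|\mathcal{A}|}$ and skip this factorization step.
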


\begin{proof}
See complete proof in section~\ref{subsec:proof-lem-diff-policy-norms} below. 
\end{proof}

We now upper bound each one of the terms in \eqref{eq:interm-2-proof}. For the first term, we have: 
\begin{align}
\label{eq:diff-occup-meas-bound}
\|\lambda_{\mu,1:N}^{\pi} - \lambda_{\mu,1:N}^{\pi'}\|_1 
&= \sum_{i=1}^N \sum_{s \in \cS, a_i \in \cA_i} |d_{\mu}^{\pi}(s) \pi_i(a_i|s) - d_{\mu}^{\pi'}(s) \pi_i'(a_i|s)| \nonumber\\
&\leq \sum_{i=1}^N \sum_{s \in \cS, a_i \in \cA_i} |d_{\mu}^{\pi}(s) - d_{\mu}^{\pi'}(s)| \pi_i(a_i|s) + d_{\mu}^{\pi'}(s) |\pi_i(a_i|s) - \pi_i'(a_i|s)| \nonumber\\
&\leq N \|d_{\mu}^{\pi} - d_{\mu}^{\pi'}\|_1 + \sum_{i=1}^N \sum_{s \in \cS, a_i \in \cA_i} d_{\mu}^{\pi'}(s) |\pi_i(a_i|s) - \pi_i'(a_i|s)| \nonumber\\
&\leq N \|d_{\mu}^{\pi} - d_{\mu}^{\pi'}\|_1 + \sum_{i=1}^N \|\pi_i - \pi_i'\|_1 \nonumber\\
&\leq \frac{N \gamma}{1-\gamma} \|\pi - \tilde{\pi}\|_1 + \sqrt{|\cS| \cdot \sum_{k=1}^N |\mathcal{A}_k|} \|\pi - \pi'\|_2 \nonumber\\
&\leq \left(1 + \frac{N \gamma}{1-\gamma} \right) \sqrt{|\mathcal{S}| \cdot \sum_{k=1}^N |\mathcal{A}_k|} \|\pi - \pi'\|_2\,, 
\end{align}
where we used Proposition~\ref{prop:lip-state-occup-measures} and Lemma~\ref{lem:diff-policy-norms}. 

Using again Lemma~\ref{lem:diff-policy-norms} yields $\|\pi_{-i} - \pi_{-i}'\|_1 \leq |\mathcal{S}| \cdot \sqrt{\sum_{k=1}^N |\mathcal{A}_k|} \cdot \|\pi - \pi'\|_2$ and we obtain then using \eqref{eq:diff-occup-meas-bound}: 
\begin{equation}
\|\lambda_{\mu,1:N}^{\pi} - \lambda_{\mu,1:N}^{\pi'}\|_1 + \|\pi_{-i} - \pi_{-i}'\|_1 
\leq \sqrt{\sum_{k=1}^N |\mathcal{A}_k|} \left[ \left( 1 + \frac{N \gamma}{1-\gamma}\right) \sqrt{|\cS|} + |\cS| \right] \|\pi - \pi'\|_2\,.
\end{equation} 
Plugging the above inequality into \eqref{eq:interm-2-proof} and then using \eqref{eq:term1-smoothness} and \eqref{eq:decomp-proof-smoothness} yields: 
\begin{multline}
\label{eq:intermed-ineq}
\|\nabla_{\pi_i} u_{\mu,i}(\pi) - \nabla_{\pi_i} u_{\mu,i}(\pi')\|_2 
\leq \frac{3 N \sqrt{|\cA_i|} l_{\infty}}{(1-\gamma)^3} \sum_{k=1}^N \sqrt{|\cA_k|} \cdot \|\pi_k - \pi_k'\|_2\\
+ \frac{N \sqrt{|\cAi|} L}{(1-\gamma)^2} \sqrt{\sum_{k=1}^N |\mathcal{A}_k|} \left[ \left( 1 + \frac{N \gamma}{1-\gamma}\right) \sqrt{|\cS||} + |\cS| \right] \|\pi - \pi'\|_2\,.
\end{multline}
Using Cauchy-Schwarz's inequality gives: 
\begin{equation}
\sum_{k=1}^N \sqrt{|\cA_k|} \cdot \|\pi_k - \pi_k'\|_2  \leq \sqrt{\sum_{k=1}^N |\cA_k|} \cdot \sqrt{\sum_{i=1}^N \|\pi_i - \pi_i'\|_2^2} \leq \sqrt{\sum_{k=1}^N |\cA_k|}  \cdot \|\pi - \pi'\|_2\,. 
\end{equation}
Using the above inequality in \eqref{eq:intermed-ineq}, we obtain: 
\begin{equation}
\|\nabla_{\pi_i} u_{\mu,i}(\pi) - \nabla_{\pi_i} u_{\mu,i}(\pi')\|_2 
\leq \beta_i \|\pi - \pi'\|_2\,,
\end{equation}
where the constant~$\beta_i$ is defined as follows: 
\begin{equation}
\beta_i := \frac{3 N \sqrt{|\cA_i|} l_{\infty}}{(1-\gamma)^3} \sqrt{\sum_{k=1}^N |\cA_k|} 
+ \frac{N \sqrt{|\cAi|} L}{(1-\gamma)^2} \sqrt{\sum_{k=1}^N |\mathcal{A}_k|} \left[ \left( 1 + \frac{N \gamma}{1-\gamma}\right) \sqrt{|\cS|} + |\cS| \right]\,.
\end{equation}

It follows from this that: 
\begin{equation}
\|\nabla_{\pi} u_{\mu,i}(\pi) - \nabla_{\pi} u_{\mu,i}(\pi')\|_2  
\leq \sum_{i=1}^N \|\nabla_{\pi_i} u_{\mu,i}(\pi) - \nabla_{\pi_i} u_{\mu,i}(\pi')\|_2
\leq \left(\sum_{i=1}^N \beta_i \right) \cdot \|\pi - \pi'\|_2\,.
\end{equation}
Using again Cauchy-Schwarz's inequality gives $\sum_{i=1}^N \sqrt{|\cAi|} \leq \sqrt{N \sum_{i=1}^N |\cAi|}$, we have the following bound: 
\begin{align}
\label{eq:smoothness-beta}
\sum_{i=1}^N \beta_i &= \frac{3 N l_{\infty} \sum_{i=1}^N \sqrt{|\cA_i|}}{(1-\gamma)^3} \sqrt{\sum_{k=1}^N |\cA_k|} 
+ \frac{N \sqrt{|\cAi|} L}{(1-\gamma)^2} \sqrt{\sum_{k=1}^N |\mathcal{A}_k|} \left[ \left( 1 + \frac{N \gamma}{1-\gamma}\right) \sqrt{|\cS||} + |\cS| \right]\\
&\leq \beta := \frac{N^{\frac{3}{2}} \sum_{k=1}^N |\mathcal{A}_k|}{(1-\gamma)^2} (3 l_{\infty} + L \left[ \left( 1 + \frac{N \gamma}{1-\gamma}\right) \sqrt{|\cS|} + |\cS| \right])\,.
\end{align}

Finally, we obtain our desired $\beta$-smoothness bound: 
\begin{equation}
\|\nabla_{\pi} u_i(\pi) - \nabla_{\pi} u_i(\pi')\|_2 
\leq \beta  \|\pi - \pi'\|_2\,,
\end{equation}
where the smoothness constant $\beta$ is defined as follows: 
\begin{equation}
\label{eq:def-beta}
\beta := \frac{N^{\frac{3}{2}} \sum_{k=1}^N |\mathcal{A}_k|}{(1-\gamma)^2} \left(3 l_{\infty} + L \left[ \left( 1 + \frac{N \gamma}{1-\gamma}\right) \sqrt{|\cS|} + |\cS| \right] \right)\,,
\end{equation}
and this concludes the proof. 
\end{proof}

\subsection{Proof of Lemma~\ref{lem:diff-policy-norms}}
\label{subsec:proof-lem-diff-policy-norms}

\begin{proof}
We prove each one of the points in what follows.

\noindent\textbf{Item~\ref{item1-diff-policy}.} 
First we use the product form of the joint policy to write: 
\begin{align}
\label{eq:difference-policies1}
|\pi_{-i}(a_{-i}|s) - \pi_{-i}'(a_{-i}|s)| &= \left|  \prod_{j=1, j \neq i}^{N} \pi_j(a_j|s) -  \prod_{j=1, j \neq i}^{N} \pi_j'(a_j|s)\right| \nonumber\\
&= \left| \sum_{k=1, k \neq i}^N (\pi_k - \pi_k')(a_k|s) \prod_{l=1, l \neq i}^{k-1} \pi_l(a_l|s) \prod_{l=k+1, l \neq i}^{N} \pi_l'(a_l|s) \right|\,.
\end{align}
As a consequence, we obtain by summing up the previous inequality over $a_{-i} \in \cA_{-i}$: 
\begin{align}
\label{eq:term1-interm}
\sum_{a_{-i} \in \cA_{-i}} |\pi_{-i}(a_{-i}|s) - \pi_{-i}'(a_{-i}|s)| 
&\leq \sum_{k=1, k \neq i}^N \sum_{a_k \in \cA_k} |(\pi_k - \pi_k')(a_k|s)| \sum_{a_{-k} \in \cA_{-k}} \prod_{l=1, l \neq i}^{k-1} \pi_l(a_l|s) \prod_{l=k+1, l \neq i}^{N} \pi_l'(a_l|s) \nonumber\\
&\leq \sum_{k=1, k \neq i} \|\pi_k(\cdot|s) - \pi_k'(\cdot|s)\|_1 \nonumber\\
&\leq \sum_{k=1, k \neq i} \sqrt{|\cA_k|} \|\pi_k(\cdot|s) - \pi_k'(\cdot|s)\|_2\,, 
\end{align}
where the second inequality follows from noting that $\sum_{a_{-k} \in \cA_{-k}} \prod_{l=1, l \neq i}^{k-1} \pi_l(a_l|s) \prod_{l=k+1, l \neq i}^{N} \pi_l'(a_l|s) = 1$ by the definition of policies.

\noindent\textbf{Item~\ref{item2-diff-policy}.}
Using the first item, we can write for any~$s \in \cS$, 
\begin{align}
\|\pi_{-i}(\cdot|s) - \pi_{-i}'(\cdot|s)\|_1
&\leq \sum_{k=1, k \neq i} \|\pi_k(\cdot|s) - \pi_k'(\cdot|s)\|_1 \nonumber\\
&\leq \sum_{k=1, k \neq i} \sqrt{|\cA_k|} \|\pi_k(\cdot|s) - \pi_k'(\cdot|s)\|_2\,,\nonumber\\ 
&\leq \sqrt{\sum_{k=1}^N |\cA_k|} \cdot \|\pi - \pi'\|_2\,,
\end{align}
where the last inequality follows from the Cauchy-Schwarz inequality. 

\noindent\textbf{Item~\ref{item3-diff-policy}.} 
We upper bound the norm as follows: 
\begin{align}
\sum_{k=1}^N  \|\pi_k - \tilde{\pi}_k\|_1
&\leq \sum_{k=1}^N \sqrt{|\cS| \cdot |\cA_k|} \cdot \|\pi_k - \tilde{\pi}_k\|_2\nonumber\\
&\leq \sqrt{|\cS| \cdot \sum_{k=1}^N |\cA_k|} \sqrt{\sum_{k=1}^N \|\pi_k - \tilde{\pi}_k\|_2^2}\nonumber\\
&= \sqrt{|\cS| \cdot \sum_{k=1}^N |\cA_k|} \cdot \|\pi - \tilde{\pi}\|_2\,,
\end{align}
where the first inequality follows similarly to the derivations in \eqref{eq:difference-policies1} and \eqref{eq:term1-interm}. 

\noindent\textbf{Item~\ref{item4-diff-policy}.} 
Observe using similar derivations as in the proof of the first two items that:  
\begin{equation}
\|\pi - \tilde{\pi}\|_1 = \sum_{s \in \cS, a \in \cA} |\pi(a|s) - \tilde{\pi}(a|s)| 
\leq \sum_{s \in \cS} \sum_{k=1}^N \|\pi_k(\cdot|s) - \tilde{\pi}_k(\cdot|s)\|_1 
= \sum_{k=1}^N  \|\pi_k - \tilde{\pi}_k\|_1\,,
\end{equation}
and use the previous item. 
\end{proof}

\subsection{Proof of Thm.~\ref{thm:iter-complexity-exact-PG}}

\noindent\textbf{Proof sketch.} The proof follows a similar strategy to prior work (e.g. \cite{leonardos-et-al22iclr,zhang-et-al24tac}) and consists of three main steps: 
\begin{enumerate}
\item Relate the Nash gap to first-order stationarity using gradient domination (Proposition~\ref{prop:agentwise-grad-dom}): see end of the proof in this section.  
\item Relate the first-order stationarity quantity to the gradient mapping (using Proposition~\ref{prop:relate-fos-to-grad-mapping}). 
\item Establish a first-order stationarity iteration complexity using the gradient mapping and the smoothness of the potential function (see Proposition~\ref{prop:fos-opt-guarantee}). This part of the proof is standard in non-convex smooth optimization. 
\end{enumerate}

\noindent\textbf{Proof.} In the remainder of this section, we provide a complete proof. 

We start by presenting a standard result for smooth non-convex optimization reported and proved in e.g. \cite[Lemma 12]{zhang-et-al24tac}. 

To state the result, we first define the standard gradient mapping to quantify first-order stationarity in our constrained setting: For any policy $\pi \in \Pi$ and any greediness policy parameter $\alpha >0$, we define:   
\begin{equation}
\label{eq:det-grad-mapping}
G^{\eta, \alpha}(\pi) := \frac{1}{\eta} \left( \text{Proj}(\pi + \eta \nabla \Phi(\pi)) - \pi \right)\,, 
\end{equation}
where $\Phi: \Pi \to \R$ is a given smooth function. Note that we will use the simpler notation 
\begin{equation}
G^{\eta}(\pi) := G^{\eta, \alpha}(\pi) \quad \text{when} \quad \alpha = 0\,,
\end{equation}
as it will be the case in the rest of this section since $\alpha$-greedy policies are not needed in the exact gradient setting. 

\begin{proposition}[e.g. Lemma 12, \cite{zhang-et-al24tac}]
\label{prop:fos-opt-guarantee}
Let $\Phi: \Pi \to \R$ be $\beta$-smooth and bounded, i.e. there exist~$\Phi_{\min}, \Phi_{\max} > 0$ s.t. for all $\pi \in \Pi, \Phi_{\min} \leq \Phi(\pi) \leq \Phi_{\max}\,.$ Then running projected gradient descent:
\begin{equation}
\pi^{t+1} = \text{Proj}_{\Pi}(\pi^t - \eta \nabla_{\pi} \Phi(\pi^t))\,,
\end{equation}
with stepsize $\eta \leq 1/\beta$ guarantees that $\lim_{t \to +\infty} \|G^{\eta}(\pi^t)\|_2 = 0\,.$ Moreover, we also have: 
\begin{equation}
\frac{1}{T} \sum_{t=1}^{T} \|G^{\eta}(\pi^t)\|_2^2 \leq \frac{2 \beta (\Phi_{\max} - \Phi_{\min})}{T}\,.
\end{equation}
\end{proposition}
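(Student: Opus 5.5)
The statement is the standard descent guarantee for projected gradient ascent on a $\beta$-smooth bounded function over a convex compact set; I prove it for ascent (the descent form is symmetric under $\Phi \mapsto -\Phi$). Write $\pi^{t+1} = \text{Proj}_{\Pi}(\pi^t + \eta \nabla\Phi(\pi^t))$, so that $\pi^{t+1} - \pi^t = \eta G^{\eta}(\pi^t)$. The plan is to show a sufficient increase inequality
\[
\Phi(\pi^{t+1}) \ge \Phi(\pi^t) + \frac{\eta}{2}\|G^{\eta}(\pi^t)\|_2^2 ,
\]
then telescope it.

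\textbf{Step 1 (smoothness).} By $\beta$-smoothness (the quadratic lower bound),
\[
\Phi(\pi^{t+1}) \ge \Phi(\pi^t) + \langle \nabla\Phi(\pi^t), \pi^{t+1}-\pi^t\rangle - \frac{\beta}{2}\|\pi^{t+1}-\pi^t\|_2^2 .
\]

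\textbf{Step 2 (projection).} The variational characterization of the projection (as in Lemma~\ref{prop:characterization-fos-fixed-point}) is applied with $y = \pi^t + \eta\nabla\Phi(\pi^t)$ and comparison point $\pi^t \in \Pi$. It gives $\langle \pi^t - \pi^{t+1}, y - \pi^{t+1}\rangle \le 0$. Expanding this yields
\[
\langle \nabla\Phi(\pi^t), \pi^{t+1}-\pi^t\rangle \ge \frac{1}{\eta}\|\pi^{t+1}-\pi^t\|_2^2 = \eta\|G^{\eta}(\pi^t)\|_2^2 .
\]
Plugging this into Step 1 and using $\eta\le 1/\beta$ gives
\[
\Phi(\pi^{t+1}) - \Phi(\pi^t) \ge \Bigl(\eta - \frac{\beta\eta^2}{2}\Bigr)\|G^{\eta}(\pi^t)\|_2^2 \ge \frac{\eta}{2}\|G^{\eta}(\pi^t)\|_2^2 .
\]

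\textbf{Step 3 (telescoping).} Summing over $t = 1,\dots,T$ and using $\Phi_{\min}\le\Phi\le\Phi_{\max}$ gives
\[
\sum_{t=1}^T \|G^{\eta}(\pi^t)\|_2^2 \le \frac{2(\Phi_{\max}-\Phi_{\min})}{\eta} .
\]
The stated bound $2\beta(\Phi_{\max}-\Phi_{\min})/T$ on the average follows for $\eta = 1/\beta$, the intended choice. For general $\eta\le1/\beta$ the constant is $2/\eta$, so I would state it with $1/\eta$ or fix $\eta=1/\beta$.

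The telescoped series $\sum_t \|G^{\eta}(\pi^t)\|_2^2$ is finite, so its terms tend to zero. This gives $\lim_{t\to\infty}\|G^{\eta}(\pi^t)\|_2 = 0$.

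The only delicate point is Step 2: the projection inequality must be used with the correct sign and comparison point so that the inner product dominates $\|\pi^{t+1}-\pi^t\|^2/\eta$. Everything else is routine.
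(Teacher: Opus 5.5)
Your proof is correct and is the same smoothness--projection--telescoping argument the paper relies on: it cites \citet{zhang-et-al24tac} for this statement and writes out these steps in the proof of its inexact-gradient analogue, Proposition~\ref{prop:fos-stoch-opt-guarantee}. Your other two points also hold: reading the update as ascent matches how $G^{\eta}$ is defined, and the argument gives $2(\Phi_{\max}-\Phi_{\min})/(\eta T)$, which is what that proposition gives with $\tau=0$, while the stated $2\beta(\Phi_{\max}-\Phi_{\min})/T$ actually fails for $\eta$ well below $1/\beta$ (e.g.\ a linear $\Phi$ on a two-action simplex), so it should be read with $\eta=1/\beta$.
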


The following result relates FOS with the gradient mapping norm in the deterministic setting.  

\begin{proposition}[Lemma 11, \cite{zhang-et-al24tac}]
\label{prop:relate-fos-to-grad-mapping}
Let $\Phi: \Pi \to \mathbb{R}$ be $\beta$-smooth for some constant $\beta > 0$ and let $\pi^{+} = \text{Proj}_{\Pi^\alpha}(\pi + \eta \nabla \Phi(\pi))$ for $\eta > 0\,.$ Then for all $\pi' \in \Pi^{\alpha}\,,$ 
\begin{equation}
\langle \nabla \Phi(\pi^{+}), \pi' - \pi^{+} \rangle \leq (1+\eta \beta) \|G^{\eta, \alpha}(\pi)\| \cdot \|\pi' - \pi^{+} \|\,.
\end{equation}
Moreover, we have 
\begin{equation}
\max_{\bar{\pi}_i \in \Pi_i} \langle \nabla_{\pi_i} \Phi(\pi^{+}), \bar{\pi}_i - \pi_i^{+} \rangle \leq 2(1+\eta \beta) \|G^{\eta, \alpha}(\pi)\| + \frac{2 N l_{\infty}}{(1-\gamma)^2} \alpha\,.
\end{equation}
\end{proposition}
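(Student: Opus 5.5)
We need to prove Proposition (Lemma 11 of zhang et al): for $\pi^+ = \text{Proj}_{\Pi^\alpha}(\pi+\eta\nabla\Phi(\pi))$, the variational inequality bound, and then the per-agent bound over $\Pi_i$ (not $\Pi_i^\alpha$), which picks up an $\alpha$ term.

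The plan has two parts: the first inequality, which is a standard projection argument, and then the passage from $\Pi^\alpha$ to $\Pi_i$.

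\textbf{First inequality.} By the variational characterization of the projection onto the convex set $\Pi^{\alpha}$, for all $\pi'\in\Pi^\alpha$ we have
\[
\langle \pi+\eta\nabla\Phi(\pi)-\pi^+, \pi'-\pi^+\rangle\le 0 .
\]
Since $\pi^+-\pi=\eta G^{\eta,\alpha}(\pi)$, this gives
\[
\langle \nabla\Phi(\pi),\pi'-\pi^+\rangle\le \langle G^{\eta,\alpha}(\pi),\pi'-\pi^+\rangle\le \|G^{\eta,\alpha}(\pi)\|\,\|\pi'-\pi^+\| .
\]
Next I write $\nabla\Phi(\pi^+)=\nabla\Phi(\pi)+(\nabla\Phi(\pi^+)-\nabla\Phi(\pi))$ and use $\beta$-smoothness:
\[
\|\nabla\Phi(\pi^+)-\nabla\Phi(\pi)\|\le\beta\|\pi^+-\pi\|=\eta\beta\|G^{\eta,\alpha}(\pi)\| .
\]
Cauchy–Schwarz on the extra term and adding the two bounds yields the factor $(1+\eta\beta)$.

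\textbf{Second inequality.} Fix $i$ and $\bar\pi_i\in\Pi_i$, and define its greedy version $\bar\pi_i^\alpha=(1-\alpha)\bar\pi_i+\alpha/|\cA_i|\in\Pi_i^\alpha$. Then
\[
\bar\pi_i-\bar\pi_i^\alpha=\alpha\,(\bar\pi_i-\mathrm{unif}),
\]
whose $\ell_1$ norm per state is at most $2\alpha$. I split
\[
\langle \nabla_{\pi_i}\Phi(\pi^+),\bar\pi_i-\pi_i^+\rangle=\langle \nabla_{\pi_i}\Phi(\pi^+),\bar\pi_i^\alpha-\pi_i^+\rangle+\langle \nabla_{\pi_i}\Phi(\pi^+),\bar\pi_i-\bar\pi_i^\alpha\rangle .
\]

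For the first term, I apply part one with the test point $\pi'=(\bar\pi_i^\alpha,\pi_{-i}^+)\in\Pi^\alpha$. Only block $i$ differs from $\pi^+$, so the inner product reduces to the $i$-th block. The distance $\|\pi'-\pi^+\|=\|\bar\pi_i^\alpha-\pi_i^+\|_2$ is bounded by $\sqrt{2|\cS|}$ (each state's simplex has diameter $\sqrt2$). I will need the normalization implicit in the stated constant $2$. I expect this constant bookkeeping to be the main obstacle: I would either accept a $\sqrt{|\cS|}$-type diameter constant or follow Zhang et al.'s convention, where the norm is taken so that the diameter is at most $2$.

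For the second term, I use Hölder per state together with the policy gradient formula of Prop.~\ref{prop:individual-pg}(ii):
\[
\Big|\frac{\partial\Phi}{\partial\pi_i(a_i|s)}\Big|\le\frac{d^{\pi}_\mu(s)}{1-\gamma}\sum_{j}\|\bar Q(\bar r_{ij})\|_\infty\le \frac{d_\mu^\pi(s)\,N l_\infty}{(1-\gamma)^2},
\]
using $\|\bar r_{ij}\|_\infty\le l_\infty$ from Assumption~\ref{as:smoothness}. Summing over $s$ weighted by $\|\bar\pi_i(\cdot|s)-\bar\pi_i^\alpha(\cdot|s)\|_1\le 2\alpha$, with $\sum_s d(s)=1$, gives $2Nl_\infty\alpha/(1-\gamma)^2$. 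Finally I take the max over $\bar\pi_i$.
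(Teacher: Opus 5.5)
Your proof is the paper's. The paper proves the stochastic analogue in Section~\ref{sec:prop-relate-fos-to-grad-mapping-stoch-proof}, of which this statement is the case $g=\nabla\Phi(\pi)$. It uses the same projection inequality plus $\beta$-smoothness, the same test point $(\bar\pi_i^{\alpha},\pi^{+}_{-i})\in\Pi^{\alpha}$, and the same policy-gradient bound $2Nl_\infty\alpha/(1-\gamma)^2$ on the greedy correction.

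On the constant you flagged, do not look for a normalization that yields the factor $2$. The paper's own proof uses $\|\pi'-\pi^{+}\|_2\le 2\sqrt{|\cS|}$, and the bare factor $2$ cannot hold in the Euclidean norm. To see this, take $\alpha=0$, two actions per state, a linear $\Phi$ with $\nabla_{\pi_i}\Phi\equiv c$ and $c(\cdot|s)=(1,0)$, and $\pi_i(\cdot|s)=(0,1)$ for all $s$. Then $\pi_i^{+}(\cdot|s)=(\eta/2,1-\eta/2)$ and $\|G^{\eta,0}(\pi)\|_2=\sqrt{|\cS|/2}$. The left-hand side equals $|\cS|(1-\eta/2)$, which exceeds $2(1+\eta\beta)\sqrt{|\cS|/2}$ once $|\cS|\ge 3$ and $\eta$ is small.

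So keep your diameter factor $\sqrt{2|\cS|}$. Its only downstream effect is an extra factor of order $|\cS|$ in the iteration count of Theorem~\ref{thm:iter-complexity-exact-PG}.
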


The proof consists in connecting the Nash gap to FOS via agent-wise gradient domination and then using the link between FOS and the gradient mapping as well as the gradient mapping rate which follows from smoothness of the potential function.

Using Propositions~\ref{prop:fos-opt-guarantee} and~\ref{prop:relate-fos-to-grad-mapping} together with our agent-wise gradient domination (Proposition~\ref{prop:agentwise-grad-dom}), we are now ready to complete the proof. 

\noindent\textbf{End of Proof of Thm.~\ref{thm:iter-complexity-exact-PG}.} 
By definition of the Nash equilibrium gap and using agent-wise gradient domination (Proposition~\ref{prop:agentwise-grad-dom}), we have for every~$i \in \mathcal{N}, t \geq 0$: 
\begin{equation}
\text{NE-Gap}_i(\pi^{t+1}) \leq \frac{C_{\mathcal{G}}}{1-\gamma} \underset{\pi_i' \in \Pi_i}{\max} \langle \pi_i' - \pi_i^{t+1}, \nabla_{\pi_i} u_i(\pi^{t+1}) \rangle\,.
\end{equation}
Using the potential structure, we have $\nabla_{\pi_i} u_i(\pi^{t+1}) = \nabla_{\pii} \phi(\pi^{t+1})$ and we obtain: 
\begin{equation}
\label{eq:ne-gep-to-grad-pot}
\text{NE-Gap}_i(\pi^{t+1}) \leq \frac{C_{\mathcal{G}}}{1-\gamma} \underset{\pi_i' \in \Pi_i}{\max} \langle \pi_i' - \pi_i^{t+1}, \nabla_{\pii} \phi(\pi^{t+1})\rangle\,.
\end{equation}
Using now Proposition~\ref{prop:relate-fos-to-grad-mapping} with $\alpha = 0$ together with the above inequality and taking the maximum over $i \in \mathcal{N}$ yields: 
\begin{equation}
\label{eq:ne-gap-to-grad-map-det}
\text{NE-Gap}(\pi^{t+1}) \leq \frac{2C_{\mathcal{G}} (1+ \eta \beta)}{1-\gamma} \|G^{\eta}(\pi^{t+1})\|_2\,.
\end{equation}
To conclude, it remains to sum up the above inequalities over $t \in [T-1]$, use Jensen's inequality with the square root function and Proposition~\ref{prop:fos-opt-guarantee} to obtain: 
\begin{align}
\frac{1}{T}\sum_{t=0}^{T-1}\text{NE-Gap}(\pi^{t+1}) 
&\leq \frac{2C_{\mathcal{G}} (1+ \eta \beta)}{1-\gamma} \frac{1}{T}\sum_{t=0}^{T-1}\|G^{\eta}(\pi^{t+1})\|_2 \nonumber\\
&\leq \frac{2C_{\mathcal{G}} (1+ \eta \beta)}{1-\gamma} \sqrt{\frac{1}{T} \sum_{t=1}^{T}\|G^{\eta}(\pi^{t})\|_2^2}\nonumber\\
&\leq \frac{4C_{\mathcal{G}}}{1-\gamma} \sqrt{\frac{2 \beta (\Phi_{\max} - \Phi_{\min})}{T}}\,,
\end{align}
where the last step also uses the condition~$\eta \leq 1/\beta$. It follows that for any $\varepsilon >0$, if \begin{equation}
T \geq \frac{32 C_{\mathcal{G}}^2 \beta (\Phi_{\max} - \Phi_{\min})}{(1-\gamma)^2\varepsilon^2}\,,
\end{equation} 
then there exists $t \in [T]$ s.t. $\text{NE-Gap}(\pi^{t}) \leq \varepsilon\,.$

\subsection{Proof of Thm.~\ref{thm:sample-complexity-gen-model}}

\noindent\textbf{Proof sketch and technical comparison to prior work.} The proof follows similar lines as the proof of Theorem~\ref{thm:iter-complexity-exact-PG} but we now need to account for the errors due to inexactness of gradients used in the policy update rule. In particular the following steps deviate from the previous deterministic setting: 
\begin{itemize}
\item As we prove a high-probability result, the first-order stationarity guarantee now holds pathwise (i.e. without expectation) and has an additional error term due to inexact gradients (see Proposition~\ref{prop:fos-stoch-opt-guarantee}). 
\item Our policy gradient estimator in the generative model setting differs from both \cite{leonardos-et-al22iclr} (which does not consider the generative model setting) and \cite{zhang-et-al24tac}. Compared to the latter we rather use trajectory sampling with minibatches and we do not resort to model estimation (estimating the transition kernel). Our resulting sample complexity of~$\tilde{O}(\varepsilon^{-4})$ in the generative model setting improves over both the results of \cite{leonardos-et-al22iclr} and \cite{zhang-et-al24tac}. 
\item To control inexactness of gradients (see Proposition~\ref{prop:high-prob-grad-errors-bound}), we decompose the errors into a bias term of the order~$\gamma^H$ and a sampling error depending on the minibatch size~$M$ (using concentration inequalities). Then we select the minibatch size~$M$ to guarantee that the sampling error is of the same order as the bias term. This leads to a minibatch size of order $M = \tilde{O}(\gamma^{-2H})\,.$ It remains to choose the horizon~$H$ logarithmically in the inverse desired precision to control the bias and hence choose the corresponding minibatch size which is as a consequence of order~$\tilde{O}(\varepsilon^{-2})\,.$
\end{itemize} 

\noindent\textbf{Proof.} We now develop the proof in details. 

We introduce the stochastic gradient mapping similarly to its deterministic counterpart introduced in \eqref{eq:det-grad-mapping}. For any policy $\pi \in \Pi$ and any greediness policy parameter $\alpha \geq 0$, we define for all $t \geq 0$,   
\begin{equation}
\label{eq:stoch-grad-mapping}
\hat{G}^{\eta, \alpha}(\pi^t) := \frac{1}{\eta} \left( \text{Proj}_{\Pi^{\alpha}}(\pi + \eta g^t) - \pi \right)\,, 
\end{equation}
where for every $i \in [N], g_i^t := \hat{\nabla}_{\pi_i} u_{\mu,i}(\pi^t)\,.$ 

\begin{proposition}
\label{prop:fos-stoch-opt-guarantee}
Let $\Phi: \Pi \to \R$ be $\beta$-smooth and bounded, i.e. there exist~$\Phi_{\min}, \Phi_{\max} > 0$ s.t. for all $\pi \in \Pi, \Phi_{\min} \leq \Phi(\pi) \leq \Phi_{\max}\,.$ Let $(\pi^t)$ be the sequence of policies generated by running stochastic projected gradient descent:\footnote{In this section, we will only use $\alpha = 0$ given a generative model for sampling.}
\begin{equation}
\pi^{t+1} = \text{Proj}_{\Pi}(\pi^t - \eta g^t)\,, \quad g_i^t = \hat{\nabla}_{\pi_i} u_{\mu,i}(\pi^t)\,, \forall i \in [N]\,,
\end{equation}
with stepsize $\eta \leq 1/(1+\beta)$. If there exists~$\tau > 0$ s.t. for all $t, \|g^t - \nabla \Phi(\pi^t)\| \leq \tau$, then we have for all $t \geq 1$: 
\begin{equation}
\frac{1}{T}\sum_{t=1}^T \|\hat{G}^{\eta,\alpha}(\pi^t)\|_2^2 \leq \frac{2(\Phi_{\max} - \Phi_{\min})}{\eta T} + \frac{1}{\eta}\tau^2\,.
\end{equation}
\end{proposition}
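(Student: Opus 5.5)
The bound follows from a one-step sufficient-increase inequality for the potential, summed over iterations; the only non-standard ingredient is absorbing the gradient error $\tau$ with Young's inequality. Consistent with Algorithm~\ref{algo} and the potential setting, I read the update as projected \emph{ascent} on $\Phi$, namely $\pi^{t+1} = \text{Proj}_{\Pi^{\alpha}}(\pi^t + \eta g^t)$. The descent form in the statement is the same argument applied to $-\Phi$. Under this update $\pi^{t+1}-\pi^t = \eta\,\hat{G}^{\eta,\alpha}(\pi^t)$. I write $d_t := \pi^{t+1}-\pi^t$. The set $\Pi^{\alpha}$ is convex and compact, so only convexity of the feasible set is used.

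\textbf{Step 1 (smoothness).} Since $\Phi$ is $\beta$-smooth, the quadratic lower bound gives
$\Phi(\pi^{t+1}) \geq \Phi(\pi^t) + \langle \nabla\Phi(\pi^t), d_t\rangle - \frac{\beta}{2}\|d_t\|^2$.

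\textbf{Step 2 (projection inequality).} The variational characterization of the projection (as used in Lemma~\ref{prop:characterization-fos-fixed-point}), applied with the feasible point $\pi^t\in\Pi^{\alpha}$, gives $\langle \pi^t+\eta g^t-\pi^{t+1}, \pi^t-\pi^{t+1}\rangle \leq 0$. Rearranging yields $\langle g^t, d_t\rangle \geq \frac{1}{\eta}\|d_t\|^2$.

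\textbf{Step 3 (error term).} I split $\langle\nabla\Phi(\pi^t),d_t\rangle = \langle g^t,d_t\rangle + \langle \nabla\Phi(\pi^t)-g^t,d_t\rangle$. Cauchy--Schwarz and the hypothesis $\|g^t-\nabla\Phi(\pi^t)\|\le\tau$ bound the second term below by $-\tau\|d_t\|$. Young's inequality then gives $-\tau\|d_t\|\ge -\frac{\tau^2}{2}-\frac{1}{2}\|d_t\|^2$. Combining with Steps 1--2,
\begin{equation*}
\Phi(\pi^{t+1})-\Phi(\pi^t) \geq \Big(\frac1\eta - \frac{\beta+1}{2}\Big)\|d_t\|^2 - \frac{\tau^2}{2}.
\end{equation*}
The stepsize condition $\eta\le 1/(1+\beta)$ gives $\frac{\beta+1}{2}\le\frac{1}{2\eta}$, so the coefficient is at least $\frac{1}{2\eta}$. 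Substituting $\|d_t\|^2=\eta^2\|\hat{G}^{\eta,\alpha}(\pi^t)\|^2$, the increase is at least $\frac{\eta}{2}\|\hat{G}^{\eta,\alpha}(\pi^t)\|^2-\frac{\tau^2}{2}$.

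\textbf{Step 4 (telescoping).} I sum this inequality over $t=1,\dots,T$. The left-hand side telescopes and is bounded by $\Phi_{\max}-\Phi_{\min}$, so
$\frac{\eta}{2}\sum_{t=1}^T\|\hat{G}^{\eta,\alpha}(\pi^t)\|^2 \le \Phi_{\max}-\Phi_{\min}+\frac{T\tau^2}{2}$.
Multiplying by $\frac{2}{\eta T}$ gives exactly the claimed $\frac{2(\Phi_{\max}-\Phi_{\min})}{\eta T}+\frac{\tau^2}{\eta}$.

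The only delicate point is the constant bookkeeping in Step 3. The Young split must use weight $1/2$ on $\|d_t\|^2$, so that the stepsize requirement is precisely $\eta\le 1/(1+\beta)$ (rather than $1/\beta$) and the error term comes out as $\tau^2/\eta$. One should also make sure the gradient mapping is defined with the same projected point $\pi^{t+1}$ that the iteration produces.
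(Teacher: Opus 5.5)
Your proof is correct and follows the paper's argument essentially line by line. Both use the $\beta$-smoothness quadratic bound, the projection inequality tested at the feasible point $\pi^t$, and telescoping. Both also absorb the error term $\langle\nabla\Phi(\pi^t)-g^t,d_t\rangle$ with Young's inequality at weight $1/2$, which is exactly why the stepsize condition is $\eta\le 1/(1+\beta)$ and the error appears as $\tau^2/\eta$. The only difference is that you write it for projected ascent, consistent with Algorithm~\ref{algo} and with the paper's definition of $\hat{G}^{\eta,\alpha}$ using $+\eta g^t$. The paper writes the descent form, which is the same argument applied to $-\Phi$.
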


\begin{proof}
By $\beta$-smoothness of~$\Phi$, we have for every time step $t \geq 0$: 
\begin{align}
\label{eq:smoothnes-ineq-gen}
\Phi(\pi^{t+1}) &\leq \Phi(\pi^{t}) + \langle \nabla \Phi(\pi^t), \pi^{t+1} - \pi^{t} \rangle + \frac{\beta}{2} \|\pi^{t+1} - \pi^{t}\|_2^2\,, \nonumber\\
&= \Phi(\pi^{t}) +  \langle g^t, \pi^{t+1} - \pi^{t} \rangle + \langle \nabla \Phi(\pi^t) - g^t, \pi^{t+1} - \pi^{t} \rangle + \frac{\beta}{2} \|\pi^{t+1} - \pi^{t}\|_2^2\,, \nonumber\\
&\leq \Phi(\pi^{t}) + \langle g^t, \pi^{t+1} - \pi^{t} \rangle +  \frac 12 \|\nabla \Phi(\pi^t) - g^t\|_2^2 + \frac{1+\beta}{2} \|\pi^{t+1} - \pi^{t}\|_2^2\,.
\end{align}
By characterization of the projection, the update rule of biased stochastic projected gradient descent yields:
\begin{equation}
\forall \pi \in \Pi, \langle \pi - \pi^{t+1} , (\pi^t - \eta g^t) - \pi^{t+1} \rangle \leq 0\,.
\end{equation}
Setting $\pi = \pi^t$ in the above inequality and rearranging, we obtain:
\begin{equation}
\label{eq:charac-proj-term-gen}
\langle g^t, \pi^{t+1} - \pi^t \rangle \leq - \frac{1}{\eta} \|\pi^{t+1} - \pi^t\|_2^2 = - \eta \|\hat{G}^{\eta,\alpha}(\pi^t)\|_2^2\,.
\end{equation}
Moreover, using our inexactness bound assumption, we also have: 
\begin{equation}
\label{eq:interm-bias-var-gen}
\frac 12 \|\nabla \Phi(\pi^t) - g^t\|_2^2 \leq \frac{1}{2}\tau^2\,.
\end{equation}

Using the definition of the (stochastic) gradient mapping together with \eqref{eq:interm-bias-var-gen} and \eqref{eq:charac-proj-term-gen} in \eqref{eq:smoothnes-ineq-gen}, we obtain: 
\begin{equation}
\Phi(\pi^{t+1}) \leq \Phi(\pi^{t}) - \eta \left(1 - \frac{\eta (1+ \beta)}{2}\right) \|\hat{G}^{\eta,\alpha}(\pi^t)\|_2^2 + \frac{1}{2}\tau^2\,.
\end{equation}
Choosing $\eta \leq \frac{1}{1+\beta}$ gives $1 - \frac{\eta (1+ \beta)}{2} \geq \frac 12$ and we obtain by rearranging the above inequality: 
\begin{equation}
\|\hat{G}^{\eta,\alpha}(\pi^t)\|_2^2 \leq \frac{2}{\eta} (\Phi(\pi^{t}) - \Phi(\pi^{t+1})) + \frac{1}{\eta}\tau^2\,.
\end{equation}
Summing the above inequality for $t \in [T]$ yields: 
\begin{equation}
\frac{1}{T}\sum_{t=1}^T \|\hat{G}^{\eta,\alpha}(\pi^t)\|_2^2 \leq \frac{2(\Phi_{\max} - \Phi_{\min})}{\eta T} + \frac{1}{\eta}\tau^2\,.
\end{equation}
\end{proof}

\begin{proposition}[Lemma 11 in \cite{zhang-et-al24tac}]
\label{prop:relate-fos-to-grad-mapping-stoch}
Let $\Phi: \Pi \to \R$ be $\beta$-smooth and bounded, i.e. there exist~$\Phi_{\min}, \Phi_{\max} > 0$ s.t. for all $\pi \in \Pi, \Phi_{\min} \leq \Phi(\pi) \leq \Phi_{\max}\,.$ Let $(\pi^t)$ be the sequence of policies generated by running stochastic projected gradient descent:
\begin{equation}
\pi^{t+1} = \text{Proj}_{\Pi}(\pi^t - \eta g^t)\,, \quad g_i^t = \hat{\nabla}_{\pi_i} u_{\mu,i}(\pi^t)\,, \forall i \in [N]\,,
\end{equation}
with stepsize $\eta \leq 1/(1+\beta)$. If there exists~$\tau > 0$ s.t. for all $t, \|g^t - \nabla \Phi(\pi^t)\|_2 \leq \tau$, then we have for all $t \geq 1$: 
\begin{equation}
\underset{i \in \mathcal{N}}{\max} \underset{\pi_i' \in \Pi_i}{\max} \langle \pi_i' - \pi_i^{t+1}, \nabla_{\pi_i} \Phi(\pi^{t+1}) \rangle \leq 2 \sqrt{|\cS|}(1+\eta \beta) \|\hat{G}^{\eta, \alpha}(\pi^t)\|  + 2 \sqrt{|\cS|} \tau\,.
\end{equation}
\end{proposition}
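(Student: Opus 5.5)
We compare the stationarity gap at $\pi^{t+1}$ with the stochastic gradient mapping $\hat{G}^{\eta,\alpha}(\pi^t)$ through the projection's variational inequality, and then pass from the joint inner product to each agent's maximum using the product structure of $\Pi$.

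\textbf{Step 1: rewrite the gradient at $\pi^{t+1}$.} The update is an ascent step $\pi^{t+1}=\text{Proj}_{\Pi}(\pi^t+\eta g^t)$, so $\pi^{t+1}-\pi^t=\eta\,\hat{G}^{\eta,\alpha}(\pi^t)$. The projection inequality gives, for every $\pi'\in\Pi$,
\begin{equation*}
\langle \pi^t+\eta g^t-\pi^{t+1}, \pi'-\pi^{t+1}\rangle\le 0 .
\end{equation*}
Equivalently, $\langle g^t,\pi'-\pi^{t+1}\rangle\le \langle \hat{G}^{\eta,\alpha}(\pi^t),\pi'-\pi^{t+1}\rangle$. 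We then decompose
\begin{equation*}
\nabla\Phi(\pi^{t+1})=g^t+(\nabla\Phi(\pi^t)-g^t)+(\nabla\Phi(\pi^{t+1})-\nabla\Phi(\pi^t)).
\end{equation*}
By $\beta$-smoothness (Proposition~\ref{prop:smoothness}), the last term has norm at most $\beta\|\pi^{t+1}-\pi^t\|=\eta\beta\|\hat{G}^{\eta,\alpha}(\pi^t)\|$. By hypothesis, the middle term has norm at most $\tau$. Cauchy--Schwarz then yields, for all $\pi'\in\Pi$,
\begin{equation*}
\langle \nabla\Phi(\pi^{t+1}),\pi'-\pi^{t+1}\rangle\le\big((1+\eta\beta)\|\hat{G}^{\eta,\alpha}(\pi^t)\|+\tau\big)\,\|\pi'-\pi^{t+1}\|_2 .
\end{equation*}

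\textbf{Step 2: restrict to unilateral deviations.} For fixed $i$ and $\pi_i'\in\Pi_i$, take $\pi'=(\pi_i',\pi_{-i}^{t+1})$. This is allowed because $\Pi$ is a product. The left side then becomes exactly $\langle \nabla_{\pi_i}\Phi(\pi^{t+1}),\pi_i'-\pi_i^{t+1}\rangle$, and $\|\pi'-\pi^{t+1}\|_2=\|\pi_i'-\pi_i^{t+1}\|_2$.

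\textbf{Step 3: bound the diameter.} For each state, $\pi_i'(\cdot|s)$ and $\pi_i^{t+1}(\cdot|s)$ lie in a simplex. Hence $\|\pi_i'(\cdot|s)-\pi_i^{t+1}(\cdot|s)\|_2^2\le\|\cdot\|_1^2\le 4$. Summing over the $|\cS|$ states gives $\|\pi_i'-\pi_i^{t+1}\|_2\le 2\sqrt{|\cS|}$. Taking the maximum over $\pi_i'$ and then over $i$ gives the claimed bound
\begin{equation*}
2\sqrt{|\cS|}(1+\eta\beta)\|\hat{G}^{\eta,\alpha}(\pi^t)\|+2\sqrt{|\cS|}\tau .
\end{equation*}
Boundedness of $\Phi$ and the step-size restriction are not needed for this inequality. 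They matter only when this bound is combined with Proposition~\ref{prop:fos-stoch-opt-guarantee}.

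\textbf{Main obstacle: sign and set conventions.} The displayed recursion is written as descent, $\pi^t-\eta g^t$, while Algorithm~\ref{algo} ascends. We work with the ascent version, matching maximization of the potential; with the descent convention one flips $\Phi\to-\Phi$. The projection target also needs care. For $\alpha>0$ the projection is onto $\Pi^\alpha$, so deviations $\pi_i'\in\Pi_i\setminus\Pi_i^\alpha$ are not covered by the variational inequality. In that case we would compare $\pi_i'$ with its $\alpha$-greedy version $(1-\alpha)\pi_i'+\alpha\,\mathrm{Unif}$ and pay an extra $O(\alpha\, l_\infty/(1-\gamma)^2)$ term, as in Proposition~\ref{prop:relate-fos-to-grad-mapping}. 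Since here $\alpha=0$ and the projection is onto $\Pi$, this correction is not needed.
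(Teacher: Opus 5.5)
Your proposal is correct and follows essentially the same route as the paper's proof. Both use the three-term decomposition $\nabla\Phi(\pi^{t+1}) = g^t + (\nabla\Phi(\pi^t)-g^t) + (\nabla\Phi(\pi^{t+1})-\nabla\Phi(\pi^t))$, the projection variational inequality for the ascent step, $\beta$-smoothness, the unilateral deviation $\pi'=(\pi_i',\pi_{-i}^{t+1})$, and the per-state simplex diameter bound giving $2\sqrt{|\mathcal{S}|}$. The only difference is the order of steps, and it works in your favor: you bound $\|\pi'-\pi^{t+1}\|_2$ after restricting to unilateral deviations, whereas the paper applies the $2\sqrt{|\mathcal{S}|}$ bound to a general joint $\pi'$ first, which is only justified once $\pi'$ differs from $\pi^{t+1}$ in a single agent's block.
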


We now quantify the magnitude~$\tau$ of the gradient errors using concentration results for depending on the minibatch size~$M$ used for sampled trajectories and the truncation horizon~$H$. 

\begin{proposition}
\label{prop:high-prob-grad-errors-bound}
For any~$\delta \in (0,1),$ if $M \geq 2 \gamma^{-2H} \log(\frac{2 |\cS| \cdot \max_{k \in [N]}|\cA_k|}{\delta}),$ then with probability at least $1-3\delta$,
\begin{equation}
\|\hat{\nabla}_{\pi_i} u_{\mu, i}(\pi^t) - \nabla_{\pi_i} u_{\mu, i}(\pi^t)\|_{\infty} \leq \frac{2 N}{1-\gamma} \left[ l_{\infty} \left( 1+ \frac{1}{1-\gamma}\right) + N L\right] \gamma^{H}\,.
\end{equation}
\end{proposition}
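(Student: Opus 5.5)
The approach is to decompose the error of the estimator \eqref{eq:stochastic-pg-gen-model} coordinatewise at each $(s,a_i)$. For brevity, write $\bar Q_{j}:=\bar{Q}_{s,a_i}^{\pi}(\bar r_{i,j}^{\pi})$ and $\hat q_j:=\hat q^{\pi}_{i,j,(s,a_i)}$. Then
\begin{equation*}
[\hat\nabla_{\pi_i}u_{\mu,i}(\pi)-\nabla_{\pi_i}u_{\mu,i}(\pi)]_{s,a_i}=\frac{1}{1-\gamma}\Big[(\hat d_\mu^\pi(s)-d_\mu^\pi(s))\sum_{j}\hat q_j+d_\mu^\pi(s)\sum_j(\hat q_j-\bar Q_j)\Big].
\end{equation*}
The first bracket is controlled by an occupancy-measure error multiplied by a bound on the estimated $Q$-values. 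The second bracket is controlled by a $Q$-estimation error, which in turn splits into a pseudo-reward error and a truncation/sampling error.

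\textbf{Occupancy estimation.} We first bound $|\hat d_\mu^\pi(s)-d_\mu^\pi(s)|$. Note that \eqref{eq:occup-estimate} targets the unnormalized quantity $\sum_t\gamma^t\mathbb{P}(s_t=s)$, so the factor $(1-\gamma)$ must be tracked. The truncation bias is at most $\gamma^H/(1-\gamma)$. For the sampling error, each truncated sum lies in $[0,1/(1-\gamma)]$, so Hoeffding's inequality with a union bound over $s$ gives a deviation of order $\frac{1}{1-\gamma}\sqrt{\log(2|\cS|/\delta)/M}$. The choice $M\ge 2\gamma^{-2H}\log(\cdot)$ makes this at most $\gamma^H/(1-\gamma)$. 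Since $\hat\lambda_{\mu,j}^\pi(s,a_j)=\hat d(s)\pi_j(a_j|s)$, this yields $\|\hat\lambda_{\mu,1:N}^\pi-\lambda_{\mu,1:N}^\pi\|_1\lesssim N|\cS|\gamma^H/(1-\gamma)$ up to constants. Assumption~\ref{as:smoothness} then gives $\|\hat r_{i,j}-\bar r_{i,j}^\pi\|_\infty\le L\|\hat\lambda-\lambda\|_1$; this step is the source of the $NL$ term. The occupancy estimate uses its own independent trajectories from $\mu$, which accounts for one $\delta$ in the failure probability.

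\textbf{Q estimation.} For each $j$, compare $\hat q_j$ with the idealized estimator $\tilde q_j$ that uses the true pseudo-reward $\bar r_{i,j}^\pi$. Each summand $|\hat r-\bar r|\gamma^t$ sums to at most $\|\hat r-\bar r\|_\infty/(1-\gamma)$, so $|\hat q_j-\tilde q_j|\le\|\hat r_{i,j}-\bar r^\pi_{i,j}\|_\infty/(1-\gamma)$. Next, $|\mathbb{E}\tilde q_j-\bar Q_j|\le l_\infty\gamma^H/(1-\gamma)$ by truncation. The trajectories start at $(s,a_i)$ with $a_{-i}\sim\pi_{-i}(\cdot|s)$, so $\mathbb{E}\tilde q_j$ is the truncated version of $\bar Q_j$ itself. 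Finally, Hoeffding's inequality applied to i.i.d.\ terms bounded by $l_\infty/(1-\gamma)$, with a union bound over $(s,a_i)$ (this is where $\max_k|\cA_k|$ enters the log), gives a sampling error of at most $l_\infty\gamma^H/(1-\gamma)$. This accounts for another $\delta$. Summing over $j$ contributes factors of $N$, and $d_\mu^\pi(s)\le 1$.

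\textbf{Combining.} We have $|\sum_j\hat q_j|\le N(l_\infty+\|\hat r-\bar r\|_\infty)/(1-\gamma)$. Plugging all pieces into the decomposition and collecting the $\gamma^H$ terms produces the form $\frac{2N}{1-\gamma}[l_\infty(1+\frac1{1-\gamma})+NL]\gamma^H$. Here the $l_\infty/(1-\gamma)$ term comes from the occupancy error times the $Q$ bound, and the $l_\infty$ term comes from the $Q$ truncation and sampling error. The main obstacle is the bookkeeping in this last step. Matching the stated constant exactly, in particular absorbing the $|\cS|$ factor from the $\ell_1$ occupancy error into the $NL$ term and the normalization of $\hat d$, may require using $\|\hat\lambda-\lambda\|_1\le\sum_s|\hat d(s)-d(s)|$ with a sharper joint concentration for the vector $\hat d$ (e.g.\ an $\ell_1$ bound via the bounded-differences inequality). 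I would try that first, and otherwise accept constant or polynomial factors in $|\cS|$. A third $\delta$ covers the pseudo-reward-dependent step, giving overall probability at least $1-3\delta$.
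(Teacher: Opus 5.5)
Your argument follows the paper's proof step for step. Both use the same product-rule split of each $(s,a_i)$ coordinate into an occupancy error and a $Q$-error. The occupancy error is controlled by a $\gamma^H$ truncation bias plus Hoeffding with a union bound over $s$. The $Q$-error is split into a pseudo-reward error (via the $L$-Lipschitz assumption and the occupancy bound) plus truncation bias plus Hoeffding with a union bound over $(s,a_i)$. Your third $\delta$, like the paper's, is just the occupancy event counted a second time, since the pseudo-reward step is deterministic on that event. The paper pairs $\hat d(s)(\hat q_j-\bar Q_j)$ with $\bar Q_j(\hat d(s)-d(s))$ instead. This needs only $\hat d(s)\le 1$ and $|\bar Q_j|\le l_\infty/(1-\gamma)$, and it avoids your cross term.

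Two small repairs are needed. First, use the normalized estimator $\frac1M\sum_k(1-\gamma)\sum_{t<H}\gamma^t\mathbf{1}_{\{s_t^{(k)}=s\}}$; the paper's proof implicitly uses it (values in $[0,1]$, bias $\gamma^H$). Read literally, \eqref{eq:occup-estimate} makes $\frac{1}{1-\gamma}\hat d(s)$ target $d(s)/(1-\gamma)^2$. Then $\hat d-d$ does not vanish, and your bias $\gamma^H/(1-\gamma)$ is measured against $d/(1-\gamma)$ rather than against $d$ as your decomposition needs. Second, apply Hoeffding to $\sum_j\tilde q_j$ directly (range $2Nl_\infty/(1-\gamma)$, same exponent) rather than per $j$. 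Summing per-$j$ bounds needs a union bound over $j$ that the stated $M$ and $1-3\delta$ do not pay for.

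Your worry about the constant is justified, and the paper does not resolve it with a sharper concentration. It writes $\|\hat q_j(\hat r)-\hat q_j(\bar r)\|_\infty\le\frac{1}{1-\gamma}\|\hat r-\bar r\|_\infty\le L\|\hat\lambda-\lambda\|_1=NL\|\hat d-d\|_1$ and then plugs in the $\ell_\infty$ bound $2\gamma^H$. This drops both the $1/(1-\gamma)$ and the $|\mathcal S|$ that you correctly keep. Even then, its two lemmas combine to $\frac{2N}{1-\gamma}[\frac{2l_\infty}{1-\gamma}+NL]\gamma^H$, not $l_\infty(1+\frac1{1-\gamma})$.

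Your own last step has the same slip. The $Q$ truncation-plus-sampling error is $2l_\infty\gamma^H/(1-\gamma)$ per $j$, so after the prefactor it contributes $\frac{l_\infty}{1-\gamma}$ inside the bracket, not $l_\infty$. What your steps actually yield is $\frac{2N}{1-\gamma}[\frac{2l_\infty}{1-\gamma}+\frac{|\mathcal S|NL}{1-\gamma}]\gamma^H$, plus an $O(\gamma^{2H})$ term from your split.

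Bounded differences will not remove $|\mathcal S|$. Each normalized trajectory vector has $\ell_1$-mass at most $1$, which gives $\mathbb{E}\|\hat d-\mathbb{E}\hat d\|_1\le\sqrt{|\mathcal S|/M}$. That is the true order (take absorbing states and uniform $\mu$), so at this $M$ you only get $\sqrt{|\mathcal S|}$ in place of $|\mathcal S|$. Your fallback of accepting extra $\mathrm{poly}(|\mathcal S|,(1-\gamma)^{-1})$ factors, or enlarging $M$, is the right call. It is harmless for Theorem~\ref{thm:sample-complexity-gen-model}, where such factors are absorbed into $\tilde{\mathcal O}$.
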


\begin{proof}
See section~\ref{sec:proof-high-prob-grad-errors-bound} below. 
\end{proof}

\noindent\textbf{End of Proof of Thm.~\ref{thm:sample-complexity-gen-model}.} 
Similarly to the proof of the deterministic setting (Thm.~\ref{thm:iter-complexity-exact-PG}, see \eqref{eq:ne-gep-to-grad-pot}), we have: 
\begin{equation}
\label{eq:ne-gep-to-grad-pot-sto}
\text{NE-Gap}_i(\pi^{t+1}) \leq \frac{C_{\mathcal{G}}}{1-\gamma} \underset{\pi_i' \in \Pi_i}{\max} \langle \pi_i' - \pi_i^{t+1}, \nabla_{\pii} \phi(\pi^{t+1})\rangle\,.
\end{equation}

Using Proposition~\ref{prop:relate-fos-to-grad-mapping-stoch} with $\alpha = 0$ together with the above inequality and taking the maximum over $i \in \mathcal{N}$ yields: 
\begin{equation}
\label{eq:ne-gap-to-grad-map-det}
\text{NE-Gap}(\pi^{t+1}) \leq \frac{2 C_{\mathcal{G}} \sqrt{|\cS|}}{1-\gamma} \left((1+ \eta \beta) \|\hat{G}^{\eta, \alpha}(\pi^t)\|_2 + \tau\right)\,.
\end{equation}
Summing up the above inequalities over $t \in \{1, \ldots, T\}$, using Jensen's inequality with the square root function and Proposition~\ref{prop:fos-stoch-opt-guarantee}, we obtain: 
\begin{align}
\frac{1}{T}\sum_{t=1}^{T}\text{NE-Gap}(\pi^{t+1}) 
&\leq \frac{2 C_{\mathcal{G}} \sqrt{|\cS|}}{1-\gamma} \frac{1}{T}\sum_{t=1}^{T}\left((1+ \eta \beta) \|\hat{G}^{\eta, \alpha}(\pi^t)\|_2 + \tau\right) \nonumber\\
&\leq \frac{2 C_{\mathcal{G}} \sqrt{|\cS|}}{1-\gamma} \left(\tau + \frac{1+ \eta \beta}{T}\sum_{t=1}^{T} \|\hat{G}^{\eta, \alpha}(\pi^t)\|_2 \right) \nonumber\\
&\leq \frac{2 C_{\mathcal{G}} \sqrt{|\cS|}}{1-\gamma} \left(\tau + (1+ \eta \beta) \sqrt{\frac{1}{T}\sum_{t=1}^{T} \|\hat{G}^{\eta, \alpha}(\pi^t)\|_2^2} \right)\nonumber\\
&\leq \frac{2 C_{\mathcal{G}} \sqrt{|\cS|}}{1-\gamma} \left(\tau + (1+ \eta \beta) \sqrt{\frac{2(\Phi_{\max} - \Phi_{\min})}{\eta T} + \frac{1}{\eta}\tau^2} \right)\nonumber\\
&\leq \frac{2 C_{\mathcal{G}} \sqrt{|\cS|}}{1-\gamma} \left( \left(1+ \frac{2}{\sqrt{\eta}}\right)\tau + \sqrt{\frac{8(\Phi_{\max} - \Phi_{\min})}{\eta T}} \right)\,,
\end{align}
where the last step also uses the condition~$\eta \leq 1/(1+\beta)$. It follows that for any $\varepsilon >0$, if 
\begin{equation}
\label{eq:nb-iter-selection-gen}
T \geq \frac{128 (1+\beta) C_{\mathcal{G}}^2 |\cS| (\Phi_{\max} - \Phi_{\min})}{(1-\gamma)^2} \cdot\frac{1}{\varepsilon^2}\,, \quad \tau \leq \frac{1-\gamma}{4(1+ 2 \sqrt{1+\beta}) C_{\mathcal{G}}\sqrt{|\cS|}} \cdot \varepsilon\,,
\end{equation} 
then, we have $\frac{1}{T}\sum_{t=1}^{T}\text{NE-Gap}(\pi^{t+1}) \leq \varepsilon\,.$ 
Hence, there exists $t \in [T]$ s.t. $\text{NE-Gap}(\pi^{t}) \leq \varepsilon\,.$ 

It remains to select the minibatch size~$M$ and the truncation horizon~$H$ appropriately using Proposition~\ref{prop:high-prob-grad-errors-bound} and the condition on $\tau$ above which gives: 
\begin{equation}
C_1 \gamma^H \leq C_2 \varepsilon\,,
\end{equation}
where the constants $C_1, C_2$ are independent of $H$ and defined as follows: 
\begin{equation}
\label{eq:def-C1-C2}
C_1:= \frac{2 N}{1-\gamma} \left[ l_{\infty} \left( 1+ \frac{1}{1-\gamma}\right) + N L\right], \quad C_2 := \frac{1-\gamma}{4(1+ 2 \sqrt{1+\beta}) C_{\mathcal{G}}\sqrt{|\cS|}}\,. 
\end{equation}
It suffices to choose $H$ such that $\exp(- \frac{H}{2} (1-\gamma)) \leq \frac{C_2}{C_1} \varepsilon\,,$ i.e. select $H$ such that: 
\begin{equation}
\label{eq:horizon-selection-gen}
H \geq \frac{2}{1-\gamma} \log\left(\frac{C_2}{C_1} \cdot \frac{1}{\varepsilon} \right)\,, 
\end{equation}
where $C_2, C_1$ are defined in \eqref{eq:def-C1-C2}. Then, using Proposition~\ref{prop:high-prob-grad-errors-bound}, we obtain the minibatch size condition: 
\begin{equation}
\label{eq:minibatch-selection-gen}
M \geq 2 \gamma^{-2H} \log\left(\frac{2 |\cS| \cdot \max_{k \in [N]}|\cA_k|}{\delta}\right) \geq 2 \left(\frac{C_1}{C_2} \right)^2 \frac{1}{\varepsilon^2} \log\left(\frac{2 |\cS| \cdot \max_{k \in [N]}|\cA_k|}{\delta}\right),
\end{equation}
since $\gamma^H \leq \frac{C_2}{C_1} \varepsilon\,.$ Overall the total sample complexity is given by: 
\begin{equation}
T \times M \times H = \tilde{O}\left(\varepsilon^{-2} \times \varepsilon^{-2} \times \log\left(\frac{1}{\varepsilon}\right) \right) = \tilde{O}(\varepsilon^{-4})\,,
\end{equation}
where all the constants are specified above in the choice of $T, M, H$ respectively in \eqref{eq:nb-iter-selection-gen}, \eqref{eq:minibatch-selection-gen} and \eqref{eq:horizon-selection-gen}. This concludes the proof.

\subsection{Proof of Proposition~\ref{prop:high-prob-grad-errors-bound}}
\label{sec:proof-high-prob-grad-errors-bound}

We use the policy gradient theorem (Theorem~\ref{prop:individual-pg}) to write: 
\begin{align}
\label{eq:decomp-grad-error-proof}
&\|\hat{\nabla}_{\pi_i} u_{\mu, i}(\pi^t) - \nabla_{\pi_i} u_{\mu, i}(\pi^t)\|_{\infty} 
= \max_{s \in \cS, a_i \in \cA_i} \left|\frac{\hat{\partial} u_{\mu,i}(\pi^t)}{\partial \pi_i(a_i|s)} - \frac{\partial u_{\mu,i}(\pi^t)}{\partial \pi_i(a_i|s)} \right| \nonumber\\
&= \frac{1}{1-\gamma} \max_{s \in \cS, a_i \in \cA_i} \left|\hat{d}_{\mu}^{\pi^t}(s) \sum_{j=1}^N \hat{q}^{\pi^t}_{i,j, (s,a_i)} -  d_{\mu}^{\pi^t}(s) \sum_{j=1}^N \bar{Q}_{s,a_i}^{\pi^t}(\bar{r}_{i,j}^{\pi^t})\right| \nonumber\\
&\leq \frac{1}{1-\gamma} \max_{s \in \cS, a_i \in \cA_i} \left(\hat{d}_{\mu}^{\pi^t}(s) \cdot \sum_{j=1}^N \|\hat{q}_j^{\pi^t} - \bar{Q}^{\pi^t}(\bar{r}_{i,j}^{\pi^t})\|_{\infty} + |\sum_{j=1}^N \bar{Q}_{s,a_i}^{\pi^t}(\bar{r}_{i,j}^{\pi^t})| \cdot |\hat{d}_{\mu}^{\pi^t}(s) - d_{\mu}^{\pi^t}(s)| \right) \nonumber\\
&\leq \frac{1}{1-\gamma} \left( \sum_{j=1}^N \|\hat{q}_j^{\pi^t} - \bar{Q}^{\pi^t}(\bar{r}_{i,j}^{\pi^t})\|_{\infty} + \frac{Nl_{\infty}}{1-\gamma} \|\hat{d}_{\mu}^{\pi^t} - d_{\mu}^{\pi^t}\|_{\infty} \right)\,.
\end{align}
In what follows, we establish high probability error bounds for each one of the error terms in the above inequality. 

\begin{lemma}
\label{lem:concentration-state-occup}
For any $\delta \in (0,1),$ if $M \geq \frac{\gamma^{-2H}}{2} \log(\frac{2 |\cS|}{\delta}),$ then with probability at least $1-\delta,$ 
\begin{equation}
\|\hat{d}_{\mu}^{\pi^t} - d_{\mu}^{\pi^t}\|_{\infty} \leq 2 \gamma^H\,,
\end{equation}
for all iterations~$t \geq 1\,.$
\end{lemma}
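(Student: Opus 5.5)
The estimator $\hat d^{\pi^t}_\mu(s)$ is the average of $M$ i.i.d.\ copies of a truncated discounted visit count, so I would split the error into a truncation bias and a sampling fluctuation:
\begin{equation*}
\|\hat d^{\pi^t}_\mu - d^{\pi^t}_\mu\|_\infty \le \|\hat d^{\pi^t}_\mu - \mathbb{E}[\hat d^{\pi^t}_\mu]\|_\infty + \|\mathbb{E}[\hat d^{\pi^t}_\mu] - d^{\pi^t}_\mu\|_\infty .
\end{equation*}
Each part is then shown to be at most $\gamma^H$.

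\textbf{Bias term.} Fix a state $s$ and define
\begin{equation*}
X^{(k)}(s) := \sum_{t'=0}^{H-1}\gamma^{t'}\mathds{1}_{\{s_{t'}^{(k)}=s\}} .
\end{equation*}
Its expectation is $\sum_{t'<H}\gamma^{t'}\mathbb{P}_{\mu,\pi^t}(s_{t'}=s)$. Reading the normalization in \eqref{eq:occup-estimate} so that it matches $d^{\pi}_\mu$ (that is, inserting the factor $(1-\gamma)$ if it is intended), the gap to $d^{\pi^t}_\mu(s)$ is the tail of the series. Since each probability is at most one, this tail is at most $(1-\gamma)\sum_{t'\ge H}\gamma^{t'} = \gamma^H$.

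\textbf{Sampling term.} Condition on the history up to iteration $t$, so that $\pi^t$ is fixed. The $M$ trajectories are then i.i.d., and each $X^{(k)}(s)$ lies in an interval of length at most $1$ (after the normalization), since $(1-\gamma)\sum_{t'<H}\gamma^{t'} \le 1$. Hoeffding's inequality gives
\begin{equation*}
\mathbb{P}\bigl(|\hat d(s) - \mathbb{E}\hat d(s)| \ge \epsilon\bigr) \le 2\exp(-2M\epsilon^2).
\end{equation*}
Taking $\epsilon = \gamma^H$ and a union bound over the $|\cS|$ states, the failure probability is at most $2|\cS|\exp(-2M\gamma^{2H})$. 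This is at most $\delta$ as soon as $M \ge \tfrac{\gamma^{-2H}}{2}\log(2|\cS|/\delta)$, which is exactly the stated condition. Combining the two terms gives the bound $2\gamma^H$.

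\textbf{Main obstacle: ``for all iterations $t\ge1$''.} Applying the per-iteration bound at every $t$ requires a union bound over iterations, which would replace $\delta$ by $\delta/T$ and add only a $\log T$ factor. I expect this is the intended reading, absorbed in the $\tilde{\mathcal O}$ notation. Alternatively, the statement can be read as holding with probability $1-\delta$ at each fixed iteration, since the conditional argument does not depend on $t$. I would state the per-iteration version and note the $\log T$ adjustment for uniformity.

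A second point to check is the normalization constant in \eqref{eq:occup-estimate}. If the $(1-\gamma)$ factor is omitted, the range of $X^{(k)}(s)$ becomes $1/(1-\gamma)$. The constants then change by $(1-\gamma)^{-1}$, so I would fix the normalization consistently with $d^\pi_\mu$ before invoking Hoeffding's inequality.
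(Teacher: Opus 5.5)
Your proposal is correct and follows the paper's proof essentially step for step: the same split into a truncation bias bounded by $(1-\gamma)\sum_{t\ge H}\gamma^t=\gamma^H$ and a fluctuation term handled by Hoeffding's inequality (range $[0,1]$) plus a union bound over states with $\sigma=\gamma^H$. Your two caveats are accurate, because the paper's proof tacitly assumes the $(1-\gamma)$ normalization in \eqref{eq:occup-estimate} and only controls a single, conditionally fixed iteration; uniformity over iterations genuinely requires replacing $\delta$ by $\delta/T$, or else reading the claim per iteration.
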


\begin{proof}
First, we upper bound the error into two terms: a bias error due to horizon truncation and a concentration error which vanishes when increasing the number of sampled trajectories: 
\begin{equation}
\|\hat{d}_{\rho}^{\pi^t} - d_{\rho}^{\pi^t}\|_{\infty} \leq \|\hat{d}_{\rho}^{\pi^t} - \bE[\hat{d}_{\rho}^{\pi^t}]\| + \|\bE[\hat{d}_{\rho}^{\pi^t}] - d_{\rho}^{\pi^t}\|_{\infty}\,.
\end{equation}
For the second error, we observe that:
\begin{equation}
\label{eq:bias-d}
\|\bE[\hat{d}_{\rho}^{\pi^t}] - d_{\rho}^{\pi^t}\|_{\infty} = \max_{s \in \cS} \left| \sum_{t=H}^{+\infty} (1-\gamma) \gamma^t \mathbb{P}(s_t^{(j)}= s) \right|
\leq (1-\gamma) \sum_{t=H}^{+\infty} \gamma^t = \gamma^H\,.
\end{equation}

As for the first error term, we use Hoeffding's inequality. For any state $s \in \cS$ and any $\sigma > 0$, since $\hat{d}_{\rho}^{\pi^t}(s)$ is bounded (in $[0,1]$), we have: 
\begin{equation}
\mathbb{P}(|\hat{d}_{\rho}^{\pi^t}(s) - \bE[\hat{d}_{\rho}^{\pi^t}(s)]| \geq \sigma) \leq 2 \exp(-2 M \sigma^2)\,,
\end{equation}
where~$M$ is the minibatch size used in the Monte Carlo estimate of~$\hat{d}_{\rho}^{\pi^t}(s)\,.$ 
Using a union bound, we obtain: 
\begin{equation}
\mathbb{P}(\|\hat{d}_{\rho}^{\pi^t}(s) - \bE[\hat{d}_{\rho}^{\pi^t}(s)]\|_{\infty} \geq \sigma)
\leq \mathbb{P}\left( \cup_{s \in \cS} \{|\hat{d}_{\rho}^{\pi^t}(s) - \bE[\hat{d}_{\rho}^{\pi^t}(s)]| \geq \sigma\} \right) \leq 2 |\cS| \exp(-2 M \sigma^2)\,.
\end{equation}
Hence, for any $\delta \in (0,1),$ if $M \geq \frac{1}{2\sigma^2} \log(\frac{2 |\cS|}{\delta}),$ then with probability at least $1-\delta,$ we have with probability at least $1-\delta$, 
\begin{equation}
\label{eq:hoeff-d}
\|\hat{d}_{\rho}^{\pi^t}(s) - \bE[\hat{d}_{\rho}^{\pi^t}(s)]\|_{\infty} \leq \sigma\,.
\end{equation}
Combining \eqref{eq:bias-d} and \eqref{eq:hoeff-d} yields: 
\begin{equation}
\|\hat{d}_{\rho}^{\pi^t} - d_{\rho}^{\pi^t}\|_{\infty} \leq \gamma^H + \sigma\,,
\end{equation}
with probability at least $1-\delta\,.$ It remainss to set $\sigma = \gamma^H$ to obtain the desired result.
\end{proof}

\begin{lemma}
\label{lem:concentration-q-values}
For any $\delta \in (0,1),$ if $M \geq 2\gamma^{-2H} \log(\frac{2 |\cS| \cdot \max_{k \in [N]}|\cA_k|}{\delta}),$ then with probability at least $1-2\delta,$ 
\begin{equation}
\|\hat{q}_j^{\pi^t} - \bar{Q}^{\pi^t}(\bar{r}_{i,j}^{\pi^t})\|_{\infty} \leq \left(\frac{2l_{\infty}}{1-\gamma} + 2NL \right) \gamma^H\,,
\end{equation}
for all iterations~$t \geq 1$ and for all $i,j \in [N]\,.$ 
\end{lemma}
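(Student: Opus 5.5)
The error splits into three pieces: a truncation bias, a Monte Carlo sampling error, and the error coming from using the estimated pseudo-rewards $\hat r_{i,j}^{\pi^t}$ in place of the true ones $\bar r_{i,j}^{\pi^t}$. Fix $t$, $i,j$ and a pair $(s,a_i)$. Let $\tilde q$ be the idealized estimator, defined exactly like $\hat q^{\pi^t}_{i,j,(s,a_i)}$ but with $\bar r_{i,j}^{\pi^t}$ in place of $\hat r_{i,j}^{\pi^t}$. By the triangle inequality,
\begin{equation*}
|\hat q - \bar Q_{s,a_i}^{\pi^t}(\bar r_{i,j}^{\pi^t})| \leq |\hat q - \tilde q| + |\tilde q - \mathbb{E}[\tilde q]| + |\mathbb{E}[\tilde q] - \bar Q_{s,a_i}^{\pi^t}(\bar r_{i,j}^{\pi^t})|.
\end{equation*}

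\emph{Bias term.} Trajectories start at $s_0=s$, $a_{i,0}=a_i$, with $a_{-i,0}\sim\pi_{-i}^t(\cdot|s)$ and $\pi^t$ used afterwards. Hence $\mathbb{E}[\tilde q]$ is the $H$-truncated version of $\bar Q_{s,a_i}^{\pi^t}(\bar r_{i,j}^{\pi^t})$. By Assumption~\ref{as:smoothness}, $\|\bar r_{i,j}\|_\infty\le l_\infty$, so the tail is at most $l_\infty\sum_{t\ge H}\gamma^t = l_\infty\gamma^H/(1-\gamma)$.

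\emph{Sampling term.} Each summand $\sum_{t<H}\gamma^t\bar r_{i,j}(\cdot)$ lies in an interval of length $2l_\infty/(1-\gamma)$. Hoeffding's inequality then gives deviation at most $\frac{2l_\infty}{1-\gamma}\sqrt{\log(2/\delta')/(2M)}$. A union bound over the $|\cS|\max_k|\cA_k|$ pairs (and over $j$; I absorb this into the stated $\log$, or accept an extra $\log N$) shows that for $M\ge 2\gamma^{-2H}\log(2|\cS|\max_k|\cA_k|/\delta)$ this deviation is at most $l_\infty\gamma^H/(1-\gamma)$ with probability $1-\delta$. Together with the bias term this gives the $\frac{2l_\infty}{1-\gamma}\gamma^H$ part of the bound.

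\emph{Pseudo-reward term.} Pathwise, $|\hat q-\tilde q|\le \frac{1}{1-\gamma}\|\hat r_{i,j}-\bar r_{i,j}\|_\infty$. By the Lipschitz part of Assumption~\ref{as:smoothness}, with $\pi_{-i}$ unchanged, this is at most $L\|\hat\lambda_{\mu,1:N}-\lambda_{\mu,1:N}\|_1/(1-\gamma)$. Since $\hat\lambda_{\mu,k}(s,a_k)=\hat d(s)\pi_k(a_k|s)$, summing over $a_k$ gives $\|\hat\lambda_{\mu,k}-\lambda_{\mu,k}\|_1 = \|\hat d-d\|_1$. I then invoke Lemma~\ref{lem:concentration-state-occup} (probability $1-\delta$, using the same $M$, which is larger than that lemma requires) to get $\|\hat d-d\|_\infty\le 2\gamma^H$.

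\emph{Main obstacle.} Converting this sup-norm bound into the $\ell_1$ norm needed here is the delicate step. A naive conversion introduces an $|\cS|$ factor and a $1/(1-\gamma)$ factor, whereas the target is exactly $2NL\gamma^H$. I would first try to arrange for $|\cS|$ to be absorbed, or to bound $\hat d-d$ directly in $\ell_1$. For the truncation part this works: the bias in $\ell_1$ is $\gamma^H$ with no $|\cS|$. For the sampling part I would use a vector concentration bound (the empirical distribution in $\ell_1$) to keep the dependence mild. Failing that, I would accept constants hidden in $\tilde{\mathcal O}$, consistent with how the lemma is used. Summing over the $N$ agents $k$ gives the $N$ factor, and combining the two events with a union bound yields probability $1-2\delta$ uniformly over $i,j$ (iterations handled per $t$ as in Lemma~\ref{lem:concentration-state-occup}).
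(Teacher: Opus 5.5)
Your plan is the paper's proof, step for step. The paper also splits the error into two parts: the pseudo-reward part $\hat q_j(\hat r_{i,j})-\hat q_j(\bar r_{i,j})$, and the Monte Carlo part $\hat q_j(\bar r_{i,j})-\bar Q(\bar r_{i,j})$, which is your $\tilde q$. It controls the Monte Carlo part by the truncation bias $l_\infty\gamma^H/(1-\gamma)$ plus Hoeffding at level $\sigma=l_\infty\gamma^H/(1-\gamma)$, with a union bound over $(s,a_i)$; this is exactly where the condition on $M$ comes from. It controls the pseudo-reward part via Assumption~\ref{as:smoothness}, the identity $\|\hat\lambda_{\mu,k}-\lambda_{\mu,k}\|_1=\|\hat d-d\|_1$, and Lemma~\ref{lem:concentration-state-occup}. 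Your bias and sampling pieces reproduce the $\frac{2l_\infty}{1-\gamma}\gamma^H$ term exactly.

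The obstacle you flag is genuine, and the paper does not get past it either. Its chain $\frac{1}{1-\gamma}\|\hat r_{i,j}-\bar r_{i,j}\|_\infty\le L\|\hat\lambda_{\mu,1:N}-\lambda_{\mu,1:N}\|_1$ silently drops the factor $1/(1-\gamma)$ that you correctly keep. It then bounds $NL\|\hat d-d\|_1$ by $2NL\gamma^H$ using only the sup-norm guarantee $\|\hat d-d\|_\infty\le 2\gamma^H$, and passing from $\ell_\infty$ to $\ell_1$ costs a factor $|\mathcal{S}|$. So neither argument yields the constant $2NL$ as stated. On the same events, what this route actually proves is
\[
\|\hat q_j^{\pi^t}-\bar Q^{\pi^t}(\bar r_{i,j}^{\pi^t})\|_\infty\le\Bigl(\frac{2l_\infty}{1-\gamma}+\frac{2NL|\mathcal{S}|}{1-\gamma}\Bigr)\gamma^H .
\]

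Do not count on an $\ell_1$ vector concentration bound to restore $2\gamma^H$ under the stated $M$. The $\ell_1$ deviation of the empirical discounted visitation vector is in general of order $\sqrt{|\mathcal{S}|/M}$. For example, from $s_0$ the chain may jump to a uniformly random state of $\mathcal{S}\setminus\{s_0\}$, all of which are absorbing. At the smallest admissible $M$ this exceeds $\gamma^H$ by a factor of order $\sqrt{|\mathcal{S}|}$ up to logarithms, so at best you trade $|\mathcal{S}|$ for roughly $\sqrt{|\mathcal{S}|}$.

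Your fallback is the right resolution. The weaker constant only multiplies $C_1$ in \eqref{eq:def-C1-C2} by a $\mathrm{poly}(|\mathcal{S}|,(1-\gamma)^{-1})$ factor, which the $\tilde{\mathcal{O}}$ in Theorem~\ref{thm:sample-complexity-gen-model} absorbs. The same applies to uniformity over $t$, $i$, $j$, which you and the paper both assert without a union bound. Putting an extra $\log(TN^2)$ inside the logarithm in $M$ fixes this at no cost in $\tilde{\mathcal{O}}$.
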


\begin{proof}
We decompose the Q-value estimation errors as follows: 
\begin{align}
\label{eq:decomp-error-q-val-diff}
\|\hat{q}_j^{\pi^t} - \bar{Q}^{\pi^t}(\bar{r}_{i,j}^{\pi^t})\|_{\infty} 
&= \|\hat{q}_j^{\pi^t}(\hat{r}_{i,j}^{\pi^t}) - \bar{Q}^{\pi^t}(\bar{r}_{i,j}^{\pi^t})\|_{\infty} \nonumber\\
&= \|\hat{q}_j^{\pi^t}(\hat{r}_{i,j}^{\pi^t}) - \hat{q}_j^{\pi^t}(\bar{r}_{i,j}^{\pi^t}) 
+ \hat{q}_j^{\pi^t}(\bar{r}_{i,j}^{\pi^t}) - \bar{Q}^{\pi^t}(\bar{r}_{i,j}^{\pi^t})\|_{\infty}\nonumber\\
&\leq \|\hat{q}_j^{\pi^t}(\hat{r}_{i,j}^{\pi^t}) - \hat{q}_j^{\pi^t}(\bar{r}_{i,j}^{\pi^t})\|_{\infty} 
+ \|\hat{q}_j^{\pi^t}(\bar{r}_{i,j}^{\pi^t}) - \bar{Q}^{\pi^t}(\bar{r}_{i,j}^{\pi^t})\|_{\infty}\,.
\end{align}

For the first term, we observe that: 
\begin{equation}
\|\hat{q}_j^{\pi^t}(\hat{r}_{i,j}^{\pi^t}) - \hat{q}_j^{\pi^t}(\bar{r}_{i,j}^{\pi^t})\|_{\infty} 
\leq \frac{1}{1-\gamma} \|\hat{r}_{i,j}^{\pi^t} - \bar{r}_{i,j}^{\pi^t}\|_{\infty, \tau_{1:M}} 
\leq L \|\hat{\lambda}_{\mu,1:N}^{\pi^t} - \lambda_{\mu,1:N}^{\pi^t}\|_{1} 
\leq L \sum_{k=1}^N \|\hat{\lambda}_{\mu,k}^{\pi^t} - \lambda_{\mu,k}^{\pi^t}\|_{1} 
= N L \|\hat{d}_{\mu}^{\pi^t} - d_{\mu}^{\pi^t}\|_1\,.
\end{equation}
where the second inequality follows from using Assumption~\ref{as:smoothness} together with the fact that there is no dependence on~$\pi_{-i}^t$ in the pseudo-rewards in the potential GUMG setting. 

We can now apply Lemma~\ref{lem:concentration-state-occup} to obtain that, if $M \geq \frac{\gamma^{-2H}}{2} \log(\frac{2 |\cS|}{\delta})$, then with probability at least $1-\delta,$ 
\begin{equation}
\|\hat{q}_j^{\pi^t}(\hat{r}_{i,j}^{\pi^t}) - \hat{q}_j^{\pi^t}(\bar{r}_{i,j}^{\pi^t})\|_{\infty} \leq 2 NL \gamma^H\,. 
\end{equation} 

As for the second term in \eqref{eq:decomp-error-q-val-diff}, we use the same proof technique as the proof of Lemma~\ref{lem:concentration-state-occup}, combining a bias term bound and an application of Hoeffding's concentration inequality. Observe for this that the random variables $\hat{q}_j^{\pi^t}(\hat{r}_{i,j}^{\pi^t})$ are bounded (in $[\frac{-l_{\infty}}{1-\gamma},\frac{l_{\infty}}{1-\gamma}]$) by boundedness of the pseudo-rewards and use $\sigma = \frac{l_{\infty}}{1-\gamma} \gamma^H$. Therefore, following the same lines we obtain that for any $\delta \in (0,1),$ if $M \geq 2\gamma^{-2H} \log(\frac{2 |\cS| \cdot \max_{k \in [N]} |\cA_k|}
{\delta}),$ then with probability at least $1-\delta,$ 
\begin{equation}
\|\hat{q}_j^{\pi^t}(\bar{r}_{i,j}^{\pi^t}) - \bar{Q}^{\pi^t}(\bar{r}_{i,j}^{\pi^t})\|_{\infty} \leq \frac{2 l_{\infty}}{1-\gamma} \gamma^H\,. 
\end{equation}
\end{proof}

Combining the results of Lemma~\ref{lem:concentration-state-occup} and Lemma~\ref{lem:concentration-q-values} in \eqref{eq:decomp-grad-error-proof} gives the desired inequality and concludes the proof of Proposition~\ref{prop:high-prob-grad-errors-bound}.  

\subsection{Proof of Thm.~\ref{thm:sample-complexity-on-policy}}

\noindent\textbf{Proof sketch and comparison to prior work.} Our proof technique here differs from the proof of a similar result of \cite{leonardos-et-al22iclr} (Theorem 4.4) for the particular case of Markov potential games. Note that we do not use the Moreau-envelope in our analysis for this stochastic setting. We rather provide an analysis of projected stochastic gradient descent with minibatch, controlling the bias and variance of the policy gradient estimates. The following decomposition gives more intuition on the different errors involved in the analysis: 
\begin{equation}
\frac{1}{T} \sum_{t=0}^{T-1} \E[\text{NE-Gap}(\pi^{t+1})] 
= \underbrace{\tilde{O}\left(\frac{1}{\sqrt{T}}\right)}_{\text{Optimization error}} + \underbrace{\tilde{O}\left(\gamma^H\right)}_{\text{bias due to truncation}} + \underbrace{\tilde{O}\left(\frac{1}{\sqrt{M\alpha}}\right)}_{\text{variance term}} + \underbrace{\tilde{O}\left(\alpha\right)}_{\text{bias due to $\alpha$-greediness}}\,.
\end{equation}
This suggests to choose $T = \tilde{O}(\varepsilon^{-2})$ (like in the deterministic setting), a greediness parameter $\alpha =  \tilde{O}(\varepsilon)$, a minibatch size $M =  \tilde{O}(\varepsilon^{-3})$ and a horizon length $H = O\left(\frac{1}{1-\gamma} \log\left(\frac{1}{\varepsilon}\right)\right)\,.$ The resulting total sample complexity is: $TMH =  \tilde{O}(\varepsilon^{-5})\,.$ 

\noindent\textbf{Proof.} We provide below a complete proof. We start with a first-order stationarity guarantee for smooth non-convex optimization with biased stochastic gradients.  

\begin{proposition}
\label{prop:fos-stoch-opt-guarantee-on-pol}
Let $\Phi: \Pi \to \R$ be $\beta$-smooth and bounded, i.e. there exist~$\Phi_{\min}, \Phi_{\max} > 0$ s.t. for all $\pi \in \Pi, \Phi_{\min} \leq \Phi(\pi) \leq \Phi_{\max}\,.$ Let $(\pi^t)$ be the sequence of policies generated by running stochastic projected gradient descent:
\begin{equation}
\pi^{t+1} = \text{Proj}_{\Pi^{\alpha}}(\pi^t - \eta g^t)\,, \quad g_i^t = \hat{\nabla}_{\pi_i} u_{\mu,i}(\pi^t)\,, \forall i \in [N]\,,
\end{equation}
with stepsize $\eta \leq 1/(1+\beta)$. Suppose that the bias and variance of the policy gradient estimator~$g^t$ are bounded by positive constants~$\delta, \sigma^2$ as follows: 
\begin{equation}
\label{as:bound-bias-var}
\|\E_t[g^t] - \nabla \Phi(\pi^t)\|_{2} \leq \delta\,,
\quad 
\E\left[ \|g^t - \E_t[g^t]\|_{2}^2 \right] \leq \sigma^2\,,
\end{equation}
where~$\mathbb{E}_t$ is the conditional expectation w.r.t. randomness up to time~$t$. 
Then we have for any~$\alpha \geq 0, t \geq 1$: 
\begin{equation}
\frac{1}{T} \sum_{t=1}^{T} \E[\|\hat{G}^{\eta,\alpha}(\pi^t)\|_2^2] \leq \frac{2 (\Phi_{\max} - \Phi_{\min})}{\eta T} + \delta^2 + \sigma^2\,,
\end{equation}
where we recall that~$\hat{G}^{\eta, \alpha}(\pi^t) =  \frac{1}{\eta} \left( \text{Proj}_{\Pi^{\alpha}}(\pi^t + \eta g^t) - \pi^t \right)$ for any $\alpha \geq 0, t \geq 1\,.$ 
\end{proposition}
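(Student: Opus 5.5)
We follow the template of the proof of Proposition~\ref{prop:fos-stoch-opt-guarantee}: a one-step smoothness inequality, the variational characterization of the projection onto the convex set $\Pi^{\alpha}$, and a Young-type bound on the gradient error. The new ingredient is that we only control the error in expectation. We therefore take conditional expectations and split the error into a bias part and a variance part.

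We read the update as ascent on the potential, consistent with Algorithm~\ref{algo}. This is the sign convention implicit in $\hat G^{\eta,\alpha}$; the descent case is symmetric. Write $\Delta_t := \pi^{t+1}-\pi^t = \eta\,\hat G^{\eta,\alpha}(\pi^t)$ and $e_t := \nabla\Phi(\pi^t) - g^t$.

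\textbf{Step 1 (one-step inequality).} By $\beta$-smoothness of $\Phi$ on $\Pi\supseteq \Pi^{\alpha}$,
\[
\Phi(\pi^{t+1}) \ge \Phi(\pi^t) + \langle g^t,\Delta_t\rangle + \langle e_t,\Delta_t\rangle - \tfrac{\beta}{2}\|\Delta_t\|_2^2 .
\]
Since $\pi^t\in\Pi^{\alpha}$ and $\Pi^{\alpha}$ is closed and convex, the projection characterization with the comparison point $\pi^t$ gives $\langle g^t,\Delta_t\rangle \ge \frac1\eta\|\Delta_t\|_2^2$. For the error term we use Young's inequality with a free weight $c>0$:
\[
\langle e_t,\Delta_t\rangle \ge -\tfrac{1}{2c}\|e_t\|_2^2 - \tfrac{c}{2}\|\Delta_t\|_2^2 .
\]
We take $c$ of order $1/\eta$ (for instance $c = 1/(2\eta)$). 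With $\eta\le 1/(1+\beta)$, the coefficient of $\|\Delta_t\|_2^2$ is then bounded below by a constant multiple of $1/\eta$. The error coefficient $\frac{1}{2c}$ is of order $\eta$. Substituting $\|\Delta_t\|_2^2 = \eta^2\|\hat G^{\eta,\alpha}(\pi^t)\|_2^2$ yields a bound of the form
\[
\|\hat G^{\eta,\alpha}(\pi^t)\|_2^2 \le \frac{C}{\eta}\bigl(\Phi(\pi^{t+1})-\Phi(\pi^t)\bigr) + C'\,\|e_t\|_2^2 ,
\]
with absolute constants $C,C'$.

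\textbf{Step 2 (bias--variance split in expectation).} Condition on the history up to time $t$; $\pi^t$ is measurable with respect to it. Then
\[
\E_t\|e_t\|_2^2 = \|\nabla\Phi(\pi^t)-\E_t[g^t]\|_2^2 + \E_t\|g^t-\E_t[g^t]\|_2^2 ,
\]
because the cross term vanishes under $\E_t$. Taking total expectation and using assumption~\eqref{as:bound-bias-var} gives $\E\|e_t\|_2^2\le \delta^2+\sigma^2$.

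\textbf{Step 3 (telescoping).} Take expectations in Step 1 and sum over $t=1,\dots,T$. The potential differences telescope and are bounded by $\Phi_{\max}-\Phi_{\min}$. Dividing by $T$ gives
\[
\frac1T\sum_{t=1}^T \E\|\hat G^{\eta,\alpha}(\pi^t)\|_2^2 \lesssim \frac{\Phi_{\max}-\Phi_{\min}}{\eta T} + \delta^2+\sigma^2 .
\]

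\textbf{Main obstacle.} The only delicate point is the constant bookkeeping in the Young step. Matching exactly the factor $2$ in front of the optimization term and coefficient $1$ in front of $\delta^2+\sigma^2$ requires tuning $c$ jointly with the step-size condition. A naive choice of $c$ gives either a factor $1/\eta$ on the error, as in Proposition~\ref{prop:fos-stoch-opt-guarantee}, or a factor larger than $2$ on the optimization term.

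I would first try $c = \frac1\eta - \beta$. This keeps the coefficient $\frac{1}{2\eta}$ on $\|\Delta_t\|_2^2$ and gives error weight $\frac{\eta}{2(1-\eta\beta)}$. I would then check whether, under $\eta\le 1/(1+\beta)$, this is at most $\eta/2$ times a constant after dividing through by $\eta^2/2$. If it is not, I would accept slightly worse absolute constants (e.g.\ $4$ in place of $2$), or strengthen the step-size condition. Either change does not affect the subsequent rates in Theorem~\ref{thm:sample-complexity-on-policy}.
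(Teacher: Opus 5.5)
Your argument is sound and follows the paper's own skeleton: the smoothness inequality, the projection characterization with comparison point $\pi^t\in\Pi^{\alpha}$ (valid for $t\ge 1$, which is all the sum uses), a Young bound on the error term, and telescoping. It improves on the paper in two places. You use the exact conditional split $\mathbb{E}\|e_t\|_2^2\le\delta^2+\sigma^2$ instead of the pointwise $\|a+b\|_2^2\le 2\|a\|_2^2+2\|b\|_2^2$. You also use a Young weight of order $1/\eta$ instead of the paper's weight $1$. With $c=1/(2\eta)$ and $\eta\beta<1$ you get $\frac{1}{4\eta}\|\Delta_t\|_2^2\le \Phi(\pi^{t+1})-\Phi(\pi^t)+\eta\|e_t\|_2^2$. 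This gives the bound with constants $4$ and $4$ in place of $2$ and $1$.

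What you call the main obstacle is not bookkeeping. No choice of $c$ gives $2$ and $1$, because the stated inequality does not follow from the hypotheses the proof uses (smoothness, boundedness, bias/variance bounds). Here is a counterexample.
\begin{itemize}
\item Take $\alpha=0$, $N=1$, one state and two actions, so $\Pi$ is a segment.
\item Let $\pi_c=(\frac12,\frac12)$ and $v=(w,-w)$ with $w>0$ small.
\item Let $\Phi(\pi)=C-\frac{\beta}{2}\|\pi-\pi_c\|_2^2$ with $C$ large; this is $\beta$-smooth with $\Phi_{\min}>0$.
\item Use the deterministic estimator $g=\pm\frac{2}{\eta}v$ at $\pi_c\mp v$.
\end{itemize}
The projection is inactive, and the iterates alternate between $\pi_c-v$ and $\pi_c+v$. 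The variance is $0$, so any $\sigma>0$ is admissible. The bias is exactly $\delta=(\frac{2}{\eta}-\beta)\|v\|_2$. At every step, $\|\hat G^{\eta,\alpha}(\pi^t)\|_2^2=\frac{4}{\eta^2}\|v\|_2^2=\frac{4}{(2-\eta\beta)^2}\,\delta^2>\delta^2$. So for $T$ large and $\sigma$ small the claimed bound fails. The mirror image with a convex $\Phi$ handles the descent reading.

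The paper reaches $2$ and $1$ only through an algebra slip. It has $\Phi(\pi^{t+1})\le\Phi(\pi^t)-\frac{\eta}{2}\|\hat G^{\eta,\alpha}(\pi^t)\|_2^2+\delta^2+\|g^t-\mathbb{E}_t[g^t]\|_2^2$. It then divides by $\eta/2$ without multiplying the error terms by $2/\eta$. Done correctly, its chain gives $\frac{2}{\eta}(\delta^2+\sigma^2)$. That is the $1/\eta$ factor it does keep in Proposition~\ref{prop:fos-stoch-opt-guarantee}, and it is worse than what you get.

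So settle for absolute constants. Your fallback $c=\frac{1}{\eta}-\beta$ does put $2$ on the optimization term. However, its error weight $\frac{1}{1-\eta\beta}$ can be as large as $1+\beta$ under $\eta\le\frac{1}{1+\beta}$; it is at most $2$ only if you assume $\eta\le\frac{1}{2\beta}$. The optimal weight is $c=\frac{1}{\eta}-\frac{\beta}{2}$, which gives
\[
\frac{1}{T}\sum_{t=1}^{T}\mathbb{E}\|\hat G^{\eta,\alpha}(\pi^t)\|_2^2\le\frac{4(\Phi_{\max}-\Phi_{\min})}{(2-\eta\beta)\,\eta T}+\frac{4}{(2-\eta\beta)^2}\,(\delta^2+\sigma^2).
\]
The example above shows the coefficient $\frac{4}{(2-\eta\beta)^2}$ on $\delta^2$ is sharp. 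Any of these corrected versions changes Theorem~\ref{thm:sample-complexity-on-policy} only by absolute constants.
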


\begin{proof}
See section~\ref{sec:proof-prop:fos-stoch-opt-guarantee-on-pol}
\end{proof}

\begin{lemma}[Bias bound]
\label{lem:bias-pg}
Under Assumption~\ref{as:smoothness}, we have for any $i \in \mathcal{N}$:  
\begin{equation}
\E\left[ \|\nabla_{\pi} u_{\mu,i,H}(\pi) - \nabla_{\pi} u_{\mu,i}(\pi)\|_2 \right]
\leq \left(\frac{1+l_{\infty}}{1-\gamma} + NL \right)|\cS|\left(\sum_{k=1}^N|\cA_k|\right) \gamma^H\,,
\end{equation}
where $\nabla_{\pi} u_{\mu,i,H}(\pi)$ is a notation for the (conditional) expectation of the truncated policy gradient estimator (as in Proposition~\ref{prop:fos-stoch-opt-guarantee-on-pol}). 
\end{lemma}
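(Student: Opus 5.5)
We bound the bias of the on-policy estimator \eqref{eq:stochastic-pg} in two pieces: truncation of the horizon, and the plug-in pseudo-rewards $\hat{r}_{i,j}$ that replace $\bar{r}_{i,j}^{\pi}$. Write $\nabla_{\pi} u_{\mu,i,H}(\pi) = \mathbb{E}_t[\hat{\nabla} u_{\mu,i}(\pi)]$. We insert the intermediate quantity
\begin{equation*}
\tilde g_H := \mathbb{E}\Big[\sum_{t=0}^{H-1}\gamma^t R_i(s_t,a_t)\,\psi^t_{\pi_i}\Big],
\end{equation*}
which uses the exact pseudo-rewards $R_i=\sum_j \bar r^{\pi}_{ij}$. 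Then
\begin{equation*}
\|\nabla_{\pi} u_{\mu,i,H}(\pi)-\nabla_{\pi} u_{\mu,i}(\pi)\|_2 \le \|\nabla_{\pi} u_{\mu,i,H}(\pi)-\tilde g_H\|_2+\|\tilde g_H-\nabla_{\pi} u_{\mu,i}(\pi)\|_2 .
\end{equation*}

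\textbf{Truncation term.} By Proposition~\ref{prop:individual-pg}-\ref{prop-i:individual-pg}, the second term equals the tail $\mathbb{E}\big[\sum_{t\ge H}\gamma^t R_i\,\psi^t_{\pi_i}\big]$. We will not bound $\psi^t$ naively, since $\|\nabla\log\pi_i\|$ can blow up. Instead we rewrite the tail coordinatewise, as in the proof of Proposition~\ref{prop:individual-pg}-\ref{prop-ii:individual-pg}, via the causality trick. Exchanging sums gives
\begin{equation*}
\sum_{t'\ge0}\mathbb{E}\Big[\nabla\log\pi_i(a_{i,t'}|s_{t'})\sum_{t\ge \max(t',H)}\gamma^t R_i\Big].
\end{equation*}
The score factor $1/\pi_i(a_i|s)$ is cancelled by the sampling probability, leaving per-coordinate terms of the form $\sum_{t'}\gamma^{t'}\mathbb{P}(s_{t'}=s)\,\gamma^{\max(0,H-t')}\,|\bar Q|$-type quantities. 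With $|R_i|\le N l_\infty$ (Assumption~\ref{as:smoothness}) and $|\bar Q|\le N l_\infty/(1-\gamma)$, each coordinate is $O\big(\tfrac{l_\infty}{1-\gamma}\gamma^H\big)$. Summing over the $|\cS|\sum_k|\cA_k|$ coordinates (crudely passing from $\|\cdot\|_\infty$ to $\|\cdot\|_2$ through $\|\cdot\|_1$) gives the $\tfrac{l_\infty}{1-\gamma}|\cS|\sum_k|\cA_k|\gamma^H$ part. The additive ``$1$'' in $1+l_\infty$ absorbs constant slack from this step.

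\textbf{Pseudo-reward term.} Conditionally on the occupancy estimate, $\nabla_{\pi} u_{\mu,i,H}-\tilde g_H$ is a truncated policy gradient with reward $\sum_j(\hat r_{ij}-\bar r_{ij}^{\pi})$. By the same coordinatewise bound, each coordinate is at most $\sum_j\|\hat r_{ij}-\bar r^{\pi}_{ij}\|_\infty$ times a bounded factor. Assumption~\ref{as:smoothness} gives $\|\hat r_{ij}-\bar r_{ij}^{\pi}\|_\infty\le L\|\hat\lambda^{\pi}_{\mu,1:N}-\lambda^{\pi}_{\mu,1:N}\|_1\le NL\|\hat d^{\pi}_{\mu}-d^{\pi}_{\mu}\|_1$; as in Lemma~\ref{lem:concentration-q-values}, there is no $\pi_{-i}$ dependence in the potential setting. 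Taking expectations, the truncation bias of $\hat d$ is $\gamma^H$ per state, as in \eqref{eq:bias-d}. This yields the $NL|\cS|\sum_k|\cA_k|\gamma^H$ contribution.

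\textbf{Main obstacle.} The delicate point is that $\hat r$ is computed from the same trajectories as the score terms. Because of this correlation, $\mathbb{E}[\hat r\,\psi]$ does not factor, so a pure bias argument fails: the sampling fluctuation of $\hat d$, of order $M^{-1/2}$, also enters. I would handle this either by assuming independent minibatches for occupancy estimation and gradient estimation, or by bounding $\mathbb{E}\|\hat d-d\|$ via the bias plus a Hoeffding-type term that is absorbed into the variance analysis of Proposition~\ref{prop:fos-stoch-opt-guarantee-on-pol}, keeping only the $\gamma^H$ piece here. I expect this bookkeeping, rather than the truncation estimate, to be where the stated constant is tight or slightly lossy.
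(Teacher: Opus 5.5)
\textbf{Verdict: there is a genuine gap.} Your decomposition matches what the paper points to: a truncation tail with exact pseudo-rewards, plus a plug-in error bounded through Assumption~\ref{as:smoothness} by $NL\|\hat d^{\pi}_{\mu}-d^{\pi}_{\mu}\|_1$. You are also right that the paper's one-line argument hides something. However, your proposal never closes the plug-in term, and neither remedy you suggest closes it.

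\textbf{The plug-in term.} The sentence ``taking expectations, the truncation bias of $\hat d$ is $\gamma^H$ per state'' silently replaces $\mathbb{E}\|\hat d^{\pi}_{\mu}-d^{\pi}_{\mu}\|_1$ by $\|\mathbb{E}\hat d^{\pi}_{\mu}-d^{\pi}_{\mu}\|_1$.
\begin{itemize}
\item Independent minibatches remove the correlation with the score, but not the fluctuation. $\mathbb{E}_t[g^t]$ becomes the truncated gradient at the averaged reward $\mathbb{E}[\nabla_{\lambda_j}F_i(\hat\lambda)]$. Since $\nabla F_i$ is only $L$-Lipschitz and not affine, $\mathbb{E}[\nabla_{\lambda_j}F_i(\hat\lambda)]-\nabla_{\lambda_j}F_i(\lambda)$ is in general of order $M^{-1/2}$ and does not vanish as $H\to\infty$. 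For example, $F(\lambda)=-\frac{L}{2}(\lambda(s_1,a)-c)_+^2$ with $c=\lambda^{\pi}_{\mu}(s_1,a)$ gives $\bar r^{\pi}=0$ but $\mathbb{E}\hat r\neq 0$ of that order.
\item Moving this piece into the variance can rescue Theorem~\ref{thm:sample-complexity-on-policy}, because the proof of Proposition~\ref{prop:fos-stoch-opt-guarantee-on-pol} only uses $\mathbb{E}\|g^t-\nabla\Phi(\pi^t)\|_2^2$. It does not prove this lemma, though: the piece lies in $\mathbb{E}_t[g^t]-\nabla\Phi(\pi^t)$, i.e.\ in $\delta$.
\end{itemize}
What makes a pure $\gamma^H$ rate possible is the condition the paper inherits by citing Lemmas~\ref{lem:concentration-state-occup} and~\ref{lem:concentration-q-values}: $M\gtrsim\gamma^{-2H}\log(\cdot)$. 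Under it, Hoeffding's inequality and a union bound give $\|\hat d^{\pi}_{\mu}-\mathbb{E}\hat d^{\pi}_{\mu}\|_\infty\le\gamma^H$ outside a small-probability event. This condition is compatible with the parameter choices of Theorem~\ref{thm:sample-complexity-on-policy}. Without some such condition, the $M$-free bound in the statement cannot hold.

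Under that condition the correlation you worried about becomes harmless:
\begin{itemize}
\item On the good event, bound $\|\hat r_{i,j}-\bar r^{\pi}_{i,j}\|_\infty\le 2NL|\cS|\gamma^H$ deterministically. Multiply this by the nonnegative score-weighted visit sums, each of expectation at most $(1-\gamma)^{-2}$.
\item On the bad event, use $\pi_i(a_i|s)\ge\alpha/|\cA_i|$ to bound those sums by $|\cA_i|/(\alpha(1-\gamma)^2)$. Then take the failure probability of order $\alpha\gamma^H$.
\end{itemize}

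\textbf{The truncation term.} Separately, your truncation estimate does not yield the stated constant. In your own coordinatewise expression:
\begin{itemize}
\item the terms with $t'<H$ contribute $\frac{Nl_\infty}{1-\gamma}\gamma^H\sum_{t'<H}\mathbb{P}(s_{t'}=s)$, which can be as large as $\frac{Nl_\infty H\gamma^H}{1-\gamma}$;
\item the terms with $t'\ge H$ add up to $\frac{Nl_\infty\gamma^H}{(1-\gamma)^2}$.
\end{itemize}
An additive $1$ in $1+l_\infty$ cannot absorb the multiplicative factors $N$, $H$ and $(1-\gamma)^{-1}$.

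The factor $H$ is genuine for the REINFORCE-form estimator \eqref{eq:stochastic-pg}. Take one state, one action, and a linear $F$ whose pseudo-reward is the constant $l_\infty$. The estimator is then deterministic, and its bias is $l_\infty\sum_{t\ge H}(t+1)\gamma^t=l_\infty\gamma^H\bigl(\frac{H+1}{1-\gamma}+\frac{\gamma}{(1-\gamma)^2}\bigr)$. This exceeds the stated right-hand side once $H$ is large.

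The $H$-free constants in \eqref{eq:bias-d} and Lemma~\ref{lem:concentration-q-values} come from the product form $\hat d\cdot\hat q$ of the generative-model estimator \eqref{eq:stochastic-pg-gen-model}, which the on-policy estimator does not have. Since $H=\mathcal{O}((1-\gamma)^{-1}\log(1/\varepsilon))$, this costs only a logarithmic factor downstream. Still, the bound your argument can actually deliver carries it.
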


\begin{proof}
The proof follows similar lines to the proof of Proposition~\ref{prop:high-prob-grad-errors-bound} where the bias is controlled in the proof of Lemma~\ref{lem:concentration-state-occup} (see eq.~\eqref{eq:bias-d}) and Lemma~\ref{lem:concentration-q-values}.  
\end{proof}

\begin{lemma}[Variance bound] 
\label{lem:var-pg}
For any $i \in \mathcal{N}$ and any policy~$\pi \in \Pi$, 
\begin{equation}
\E\left[ \|\hat{\nabla}_{\pi_i} u_{\mu,i}(\pi) - \nabla_{\pi_i} u_{\mu,i,H}(\pi)\|_2^2 \right] \leq \frac{N^2 l_{\infty}^2}{(1-\gamma)^4 M \alpha}\,.
\end{equation}
\end{lemma}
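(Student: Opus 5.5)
The estimator $\hat{\nabla}_{\pi_i} u_{\mu,i}(\pi)$ in \eqref{eq:stochastic-pg} is a minibatch average of $M$ i.i.d. single-trajectory estimators $g^{(k)} := \sum_{t=0}^{H-1}\gamma^t \hat{R}_i(s_t^{(k)},a_t^{(k)})\,\psi_{\pi_i}^{t,(k)}$. Its conditional mean is the truncated gradient $\nabla_{\pi_i} u_{\mu,i,H}(\pi)$. I treat the pseudo-rewards as fixed given the current iterate, and account for their estimation error in the bias lemma.

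\textbf{Step 1: reduce to one trajectory.} By independence of the $M$ trajectories, the variance of the average equals $\frac{1}{M}\,\E\|g^{(1)} - \E g^{(1)}\|_2^2 \le \frac{1}{M}\,\E\|g^{(1)}\|_2^2$. It therefore suffices to show
\[
\E\|g^{(1)}\|_2^2 \le \frac{N^2 l_\infty^2}{(1-\gamma)^4\alpha}.
\]

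\textbf{Step 2: regroup the sum.} Exchange the order of summation to write
\[
g^{(1)} = \sum_{t'=0}^{H-1} \nabla_{\pi_i}\log\pi_i(a_{i,t'}|s_{t'}) \sum_{t=t'}^{H-1}\gamma^t \hat{R}_i(s_t,a_t).
\]
Assumption~\ref{as:smoothness} gives $|\hat{R}_i|\le N l_\infty$, so the inner tail sum has magnitude at most $N l_\infty \gamma^{t'}/(1-\gamma)$.

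\textbf{Step 3: bound the score-function terms.} Under direct parameterization, $\nabla_{\pi_i}\log\pi_i(a_i|s)$ is the vector $e_{(s,a_i)}/\pi_i(a_i|s)$, with squared norm $1/\pi_i(a_i|s)^2$. The factor $1/\pi_i$ is cancelled in expectation: conditionally on $s$,
\[
\E_{a_i\sim\pi_i(\cdot|s)}\!\left[\frac{1}{\pi_i(a_i|s)^2}\right] = \sum_{a_i}\frac{1}{\pi_i(a_i|s)} \le \frac{|\cA_i|}{\alpha/|\cA_i|}.
\]
The last inequality uses $\pi_i\in\Pi_i^\alpha$, so every action has probability at least $\alpha/|\cA_i|$.

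\textbf{Step 4: combine.} Apply Cauchy--Schwarz across $t'$ with weights $\gamma^{t'}$, using $\sum_{t'}\gamma^{t'}\le 1/(1-\gamma)$:
\[
\E\|g^{(1)}\|^2 \le \frac{N^2 l_\infty^2}{(1-\gamma)^2}\cdot\frac{1}{1-\gamma}\sum_{t'}\gamma^{t'}\,\E\!\left[\frac{1}{\pi_i(a_{i,t'}|s_{t'})^2}\right] \lesssim \frac{N^2 l_\infty^2}{(1-\gamma)^4}\cdot\frac{|\cA_i|^2}{\alpha}.
\]
Summing over $t'$ gives the $(1-\gamma)^{-4}$ factor. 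The claimed bound has no $|\cA_i|$ dependence, so I expect the paper either absorbs $|\cA_i|$ into the constant or uses a sharper handling.

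\textbf{Main obstacle.} The delicate point is controlling $1/\pi_i$, which is exactly where $\alpha$-greediness enters.

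A tighter route keeps the sum $\sum_{a_i}\frac{1}{\pi_i(a_i|s)}$ as written. A cruder route bounds $1/\pi_i(a_i|s)\le |\cA_i|/\alpha$ only once inside the expectation $\E[\pi_i^{-2}] = \E[\pi_i^{-1}\cdot\pi_i^{-1}]$. Neither route gives a clean $1/\alpha$ without an $|\cA_i|$ factor.

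I would try first to match the stated $1/(M\alpha)$ form and track any $|\cA_i|$ factor into the $\tilde{\mathcal{O}}$ constant. A secondary issue is that the cross terms between different $t'$ do not vanish, since the rewards are not centred. This is why I use Cauchy--Schwarz rather than orthogonality, at the cost of one extra $(1-\gamma)^{-1}$; this cost is already reflected in the fourth power.
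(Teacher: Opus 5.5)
Your argument is the paper's argument written out in full. The paper's proof only invokes the single-trajectory second-moment bound of \cite[Lemma~2]{daskalakis-foster-golowich20}, uses $|\hat{R}_i|\le N l_\infty$, and divides by $M$ for the minibatch. Your Steps~1--4 do exactly this: regrouping into tail sums, bounding each tail by $N l_\infty\gamma^{t'}/(1-\gamma)$, using $\mathbb{E}[\pi_i(a_i|s)^{-2}\mid s]=\sum_{a_i}\pi_i(a_i|s)^{-1}$, and applying weighted Cauchy--Schwarz. Each step is correct, and your computation in fact gives $N^2 l_\infty^2|\mathcal{A}_i|^2/((1-\gamma)^4 M\alpha)$ with constant $1$.

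The $|\mathcal{A}_i|^2$ you could not remove is not an artifact of a crude step. No sharper handling exists, and the inequality as printed is false. Take a single state, $H=1$, $M=1$, and the linear (hence concave and smooth) utility $F_i(\lambda_{1:N})=l_\infty\sum_{j}\langle \mathbf{1},\lambda_j\rangle$. Then $\hat{R}_i\equiv N l_\infty$, and the estimator is $N l_\infty\, e_{a_i}/\pi_i(a_i)$ with $a_i\sim\pi_i$. Its variance is exactly $N^2 l_\infty^2\bigl(\sum_{a_i}\pi_i(a_i)^{-1}-|\mathcal{A}_i|\bigr)$. For the policy in $\Pi_i^{\alpha}$ that puts mass $\alpha/|\mathcal{A}_i|$ on all but one action, this is at least $N^2 l_\infty^2(|\mathcal{A}_i|-1)(|\mathcal{A}_i|-\alpha)/\alpha$. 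That exceeds the claimed $N^2 l_\infty^2/((1-\gamma)^4\alpha)$ whenever $|\mathcal{A}_i|\ge 2$, $\alpha<1$ and $\gamma$ is small enough, because the $H=1$ estimator does not depend on $\gamma$.

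The cited single-trajectory bound itself carries an $A_{\max}^2$ factor and an absolute constant, which the statement has dropped. The lemma also has to be stated for $\pi\in\Pi^{\alpha}$ rather than all of $\Pi$, since otherwise $\sum_{a_i}\pi_i(a_i|s)^{-1}$ is unbounded; you correctly used this restriction. So your version is the right one. Downstream it only multiplies $\tilde{C}_3$ by $\max_i|\mathcal{A}_i|$ and $M$ by its square. Both are absorbed in $\tilde{\mathcal{O}}(\cdot)$, so the $\varepsilon^{-5}$ rate is unaffected.

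One assumption you flagged should be made explicit. The factor $1/M$ in your Step~1 requires the pseudo-rewards to be independent of the $M$ trajectories being averaged. In Algorithm~\ref{algo} as written, $\hat{\lambda}$ and hence $\hat{r}_{i,j}$ are computed from the same batch, so the $M$ summands share a random reward and are not independent. The clean fix is to estimate $\hat{\lambda}$ from a separate batch. The paper's one-line proof relies on the same implicit assumption.
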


\begin{proof}
The proof follows similar lines to the proof of \cite[Lemma 2]{daskalakis-foster-golowich20}. It is enough to observe that the sum over $j$ of pseudo-rewards are bounded by $N l_{\infty}$ and our minibatch size~$M$ scales the variance. Then the variance for each trajectory estimator term is bounded using \cite[Lemma 2]{daskalakis-foster-golowich20}. 
\end{proof}

\begin{proposition}
\label{prop:relate-fos-to-grad-mapping-stoch}
Let $\Phi: \Pi \to \R$ be $\beta$-smooth and bounded, i.e. there exist~$\Phi_{\min}, \Phi_{\max} > 0$ s.t. for all $\pi \in \Pi, \Phi_{\min} \leq \Phi(\pi) \leq \Phi_{\max}\,.$ Let $(\pi^t)$ be the sequence of policies generated by running stochastic projected gradient descent:
\begin{equation}
\pi^{t+1} = \text{Proj}_{\Pi^{\alpha}}(\pi^t - \eta g^t)\,, \quad g_i^t = \hat{\nabla}_{\pi_i} u_{\mu,i}(\pi^t)\,, \forall i \in [N]\,,
\end{equation}
with stepsize $\eta \leq 1/(1+\beta)$. Suppose that the bias and variance of the policy gradient estimator~$g^t$ are bounded by positive constants~$\delta, \sigma^2$ as follows: 
\begin{equation}
\label{as:bound-bias-var}
\|\E_t[g^t] - \nabla \Phi(\pi^t)\|_{2} \leq \delta\,,
\quad 
\E\left[ \|g^t - \E_t[g^t]\|_{2}^2 \right] \leq \sigma^2\,,
\end{equation}
where~$\mathbb{E}_t$ is the conditional expectation w.r.t. randomness up to time~$t$. 
Then we have for any~$\alpha \geq 0,t \geq 1$: 
\begin{equation}
\E\left[\underset{i \in \mathcal{N}}{\max} \underset{\pi_i' \in \Pi_i}{\max} \langle \pi_i' - \pi_i^{t+1}, \nabla_{\pi_i} \Phi(\pi^{t+1}) \rangle \right] \leq 2 \sqrt{|\cS|}(1+\eta \beta) \E[\|\hat{G}^{\eta, \alpha}(\pi^t)\|_2] + 2 \sqrt{|\cS|} (\delta + \sigma) + \frac{2 N l_{\infty}}{(1-\gamma)^2} \alpha\,.
\end{equation}  
\end{proposition}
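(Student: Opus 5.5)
The plan is to follow the deterministic argument behind Proposition~\ref{prop:relate-fos-to-grad-mapping}: establish a pathwise inequality for each fixed realization, and take expectations only at the end. I read the update as the ascent step used in Algorithm~\ref{algo}, namely $\pi^{t+1} = \text{Proj}_{\Pi^{\alpha}}(\pi^t + \eta g^t)$, so that $\pi^{t+1} - \pi^t = \eta\, \hat{G}^{\eta,\alpha}(\pi^t)$. The sign convention only changes notation.

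\textbf{Step 1 (pathwise bound on $\Pi^\alpha$).} By the variational characterization of the projection onto the convex set $\Pi^\alpha$, for every $\pi' \in \Pi^\alpha$ we have $\langle \pi^t + \eta g^t - \pi^{t+1}, \pi' - \pi^{t+1}\rangle \le 0$. This gives $\langle g^t, \pi' - \pi^{t+1}\rangle \le \langle \hat{G}^{\eta,\alpha}(\pi^t), \pi' - \pi^{t+1}\rangle$. I would then decompose
\begin{equation*}
\nabla\Phi(\pi^{t+1}) = g^t + \big(\nabla\Phi(\pi^t) - g^t\big) + \big(\nabla\Phi(\pi^{t+1}) - \nabla\Phi(\pi^t)\big).
\end{equation*}
Applying Cauchy--Schwarz to each piece and using $\beta$-smoothness together with $\|\pi^{t+1}-\pi^t\| = \eta\|\hat{G}^{\eta,\alpha}(\pi^t)\|$ gives
\begin{equation*}
\langle \nabla\Phi(\pi^{t+1}), \pi' - \pi^{t+1}\rangle \le \Big((1+\eta\beta)\|\hat{G}^{\eta,\alpha}(\pi^t)\|_2 + \|g^t - \nabla\Phi(\pi^t)\|_2\Big)\|\pi'-\pi^{t+1}\|_2 .
\end{equation*}
Next I specialize to unilateral deviations $\pi' = (\pi_i', \pi_{-i}^{t+1})$ with $\pi_i' \in \Pi_i^\alpha$. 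Then only the $i$-th block of the inner product survives, and $\|\pi_i' - \pi_i^{t+1}\|_2 \le \sqrt{2|\cS|} \le 2\sqrt{|\cS|}$, since each per-state difference of two distributions has Euclidean norm at most $\sqrt 2$. This bound is uniform in $i$, so it survives taking $\max_i$ inside.

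\textbf{Step 2 (from $\Pi_i^\alpha$ to $\Pi_i$).} Given any $\pi_i' \in \Pi_i$, I set $\pi_i'^{\alpha} := (1-\alpha)\pi_i' + \alpha\,\mathrm{Unif}(\cA_i) \in \Pi_i^\alpha$. The extra term is $\alpha\langle \nabla_{\pi_i}\Phi(\pi^{t+1}), \pi_i' - \mathrm{Unif}\rangle$. Since $\nabla_{\pi_i}\Phi = \nabla_{\pi_i} u_{\mu,i}$ in a potential GUMG, I can use Proposition~\ref{prop:individual-pg}-\ref{prop-ii:individual-pg}. Each partial derivative at $(s,a_i)$ is bounded by $d_{\mu}^{\pi}(s) N l_\infty/(1-\gamma)^2$, because $|\bar{Q}(\bar r_{i,j})| \le l_\infty/(1-\gamma)$ by Assumption~\ref{as:smoothness}. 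Summing over states with weights $d_\mu^\pi(s)$ and using $\|\pi_i'(\cdot|s) - \mathrm{Unif}\|_1 \le 2$ bounds the extra term by $2N l_\infty \alpha/(1-\gamma)^2$.

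\textbf{Step 3 (expectations).} I would use the triangle inequality
\begin{equation*}
\|g^t - \nabla\Phi(\pi^t)\|_2 \le \|\mathbb{E}_t[g^t] - \nabla\Phi(\pi^t)\|_2 + \|g^t - \mathbb{E}_t[g^t]\|_2 .
\end{equation*}
Taking total expectations of the pathwise bound and applying Jensen, $\mathbb{E}\|g^t - \mathbb{E}_t g^t\| \le \sqrt{\mathbb{E}\|g^t-\mathbb{E}_t g^t\|^2} \le \sigma$, yields $\delta + \sigma$. Combining with the $2\sqrt{|\cS|}$ diameter factor gives the stated bound.

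The main obstacle I expect is Step 2. Its constant relies on the potential identity $\nabla_{\pi_i}\Phi = \nabla_{\pi_i}u_{\mu,i}$ and on the occupancy-weighted form of the gradient. A naive $\|\cdot\|_\infty\cdot\|\cdot\|_1$ bound would instead give an extra factor of $|\cS|$, so the $d_\mu^\pi(s)$ weighting must be kept when summing over states. A minor point to verify is that the maximum over $i$ and $\pi_i'$ can be taken before the expectation, which is legitimate because the pathwise right-hand side does not depend on $i$ or on $\pi_i'$.
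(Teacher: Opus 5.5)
Your proposal is correct and follows the paper's proof essentially step for step. It uses the same three-term decomposition of $\nabla\Phi(\pi^{t+1})$, bounded via the projection inequality, the gradient error and $\beta$-smoothness; the same mixing of $\pi_i'$ with the uniform policy, controlled by the occupancy-weighted policy gradient; and the same triangle-inequality-plus-Jensen step yielding $\delta+\sigma$. Your version is slightly tidier than the paper's, because you apply the diameter bound $\sqrt{2|\mathcal{S}|}\le 2\sqrt{|\mathcal{S}|}$ only after restricting to unilateral deviations, which is the only setting where it is valid.
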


\begin{proof}
See section~\ref{sec:prop-relate-fos-to-grad-mapping-stoch-proof}. 
\end{proof}

The proof follows similar lines as the proof of \cite[Lemma 11]{zhang-et-al24tac} with the following differences: 
\begin{itemize}
\item Our setting involves general utilities beyond standard RL. This implies a few modifications needed including bounding the pseudo-rewards and using a suitable policy gradient theorem for our setting. 
\item We control the bias and variance of our estimator differently. Indeed, in this section, we are interested in deriving (on-policy) guarantees in expectation rather than with high probability.  
\end{itemize}

\noindent\textbf{End of Proof of Thm.~\ref{thm:sample-complexity-on-policy}.}  
Similarly to the proof of Thm.~\ref{thm:iter-complexity-exact-PG} with exact gradients, we start by bounding the Nash equilibrium gap using agent-wise gradient domination. We have for every $i \in \mathcal{N}$ and every time step~$t \geq 0$:
\begin{equation}
\text{NE-Gap}_i(\pi^{t+1}) \leq \frac{C_{\mathcal{G}}}{1-\gamma} \underset{\pi_i' \in \Pi_i}{\max} \langle \pi_i' - \pi_i^{t+1}, \nabla_{\pi_i} \Phi(\pi^{t+1}) \rangle\,.
\end{equation}
Taking the maximum over $i \in \mathcal{N}$ on both sides of the inequality, taking expectation and using Proposition~\ref{prop:relate-fos-to-grad-mapping-stoch}, we obtain 
\begin{equation}
\E[\text{NE-Gap}(\pi^{t+1})] \leq \frac{C_{\mathcal{G}}}{1-\gamma}\left[ 2 \sqrt{|\cS|}(1+\eta \beta) \E[\|\hat{G}^{\eta, \alpha}(\pi^{t+1})\|_2] + 2 \sqrt{|\cS|} (\delta + \sigma) + \frac{2 N l_{\infty}}{(1-\gamma)^2} \alpha \right]\,.
\end{equation}

Taking the average of the above inequality over $t \in [T-1]$ and using a stepsize $\eta \leq 1/(1+\beta)$, it follows that: 
\begin{align}
\label{eq:master-ineq-proof-thm4}
\frac{1}{T} \sum_{t=0}^{T-1} \E[\text{NE-Gap}(\pi^{t+1})] 
&\leq  \frac{C_{\mathcal{G}}}{1-\gamma}\left[ 4 \sqrt{|\cS|} \frac{1}{T} \sum_{t=0}^{T-1}  \E[\|\hat{G}^{\eta, \alpha}(\pi^{t+1})\|_2]  + 2 \sqrt{|\cS|} (\delta + \sigma) + \frac{2 N l_{\infty}}{(1-\gamma)^2} \alpha \right] \nonumber\\
&\leq \frac{C_{\mathcal{G}}}{1-\gamma}\left[   4 \sqrt{|\cS|} \sqrt{\frac{1}{T} \sum_{t=0}^{T-1}  \E[\|\hat{G}^{\eta, \alpha}(\pi^{t+1})\|_2^2]}  + 2 \sqrt{|\cS|} (\delta + \sigma) + \frac{2 Nl_{\infty}}{(1-\gamma)^2} \alpha \right] \nonumber\\
&\leq \frac{C_{\mathcal{G}}}{1-\gamma}\left[   4 \sqrt{|\cS|} \sqrt{\frac{2 (\Phi_{\max} - \Phi_{\min})}{\eta T} + \delta^2 + \sigma^2}  + 2 \sqrt{|\cS|} (\delta + \sigma) + \frac{2 Nl_{\infty}}{(1-\gamma)^2} \alpha \right]\nonumber\\
&\leq \frac{C_{\mathcal{G}}}{1-\gamma}\left[   4 \sqrt{|\cS|} \sqrt{\frac{2 (\Phi_{\max} - \Phi_{\min})}{\eta T}}  + 6 \sqrt{|\cS|} (\delta + \sigma) + \frac{2 Nl_{\infty}}{(1-\gamma)^2} \alpha \right]\nonumber\\
\end{align}
where the second inequality follows from Jensen's inequality and the third one from using Proposition~\ref{prop:fos-stoch-opt-guarantee-on-pol}. 

It remains to plug in the bias~$\delta$ and variance~$\sigma^2$ obtained in Lemma~\ref{lem:bias-pg} and Lemma~\ref{lem:var-pg} respectively, and select the minibatch size~$M$, the horizon length~$H$ and the greediness parameter~$\alpha$ appropriately to obtain the desired sample complexity.  

Using Lemma~\ref{lem:bias-pg} and Lemma~\ref{lem:var-pg} in \eqref{eq:master-ineq-proof-thm4}, we obtain: 
\begin{equation}
\frac{1}{T} \sum_{t=0}^{T-1} \E[\text{NE-Gap}(\pi^{t+1})] 
\leq \underbrace{\frac{\tilde{C}_1}{\sqrt{T}}}_{\text{Optimization error}} + \underbrace{\tilde{C}_2 \gamma^H}_{\text{bias due to truncation}} + \underbrace{\frac{\tilde{C}_3}{\sqrt{M\alpha}}}_{\text{variance term}} + \underbrace{\tilde{C}_4 \alpha}_{\text{bias due to $\alpha$-greediness}}\,,
\end{equation}
where the constants~$\tilde{C}_1, \tilde{C}_2, \tilde{C}_3, \tilde{C}_4$ are independent of~$T, M, H, \alpha$ and defined as follows: 
\begin{align}
\label{eq:def-constants-tilde-C}
\tilde{C}_1 &:= \frac{4 \sqrt{|\cS|} C_{\mathcal{G}}}{1-\gamma} \sqrt{\frac{2 (\Phi_{\max} - \Phi_{\min})}{\eta}}\,, 
\quad \tilde{C}_2:= \frac{6 C_{\mathcal{G}}\sqrt{|\cS|}}{1-\gamma} \left(\frac{1+l_{\infty}}{1-\gamma} + NL \right)|\cS|\left(\sum_{k=1}^N|\cA_k|\right)\,, \nonumber\\
\tilde{C}_3 &:= \frac{6 C_{\mathcal{G}}\sqrt{|\cS|} N^{\frac{3}{2}} l_{\infty}}{(1-\gamma)^3}\,, 
\quad \tilde{C}_4 :=\frac{2 C_{\mathcal{G}}N l_{\infty}}{(1-\gamma)^3}\,.
\end{align}
Selecting then $T, H, M, \alpha$ as follows: 
\begin{equation}
T \geq \frac{\tilde{C}_1^2}{\varepsilon^2}\,, \quad \alpha \leq \frac{\varepsilon}{\tilde{C}_4}\,, \quad M \geq \left(\frac{\tilde{C}_3}{\varepsilon\sqrt{\alpha}}\right)^2 = \frac{\tilde{C}_3^2 \tilde{C}_4}{\varepsilon^3}\,, \quad H \geq \frac{1}{1-\gamma} \log\left(\frac{\tilde{C}_2}{\varepsilon}\right)\,,
\end{equation}
guarantees that $\frac{1}{T} \sum_{t=0}^{T-1} \E[\text{NE-Gap}(\pi^{t+1})] \leq \varepsilon$. 
In particular the total sample complexity is given by: 
\begin{equation}
T \times M \times H \leq \frac{\tilde{C}_1^2}{\varepsilon^2} \times \frac{\tilde{C}_3^2 \tilde{C}_4}{\varepsilon^3} \times \frac{1}{1-\gamma} \log\left(\frac{\tilde{C}_2}{\varepsilon}\right) = \frac{\tilde{C}_1^2 \tilde{C}_3^2 \tilde{C}_4}{1-\gamma} \frac{1}{\varepsilon^5} \log\left(\frac{\tilde{C}_2}{\varepsilon}\right)\,,
\end{equation}
where~$\tilde{C}_1, \tilde{C}_2, \tilde{C}_3, \tilde{C}_4$ are defined in \eqref{eq:def-constants-tilde-C}. This concludes the proof. 

\subsubsection{Proof of Proposition~\ref{prop:fos-stoch-opt-guarantee-on-pol}}
\label{sec:proof-prop:fos-stoch-opt-guarantee-on-pol}

The proof follows standard optimization arguments, we provide a full proof for completeness. 
Throughout this proof, we use the shorthand notation $\bar{g}^t  := \E_t[g^t]\,.$

By $\beta$-smoothness of~$\Phi$, we have for every time step $t \geq 0$: 
\begin{align}
\label{eq:smoothnes-ineq}
\Phi(\pi^{t+1}) &\leq \Phi(\pi^{t}) + \langle \nabla \Phi(\pi^t), \pi^{t+1} - \pi^{t} \rangle + \frac{\beta}{2} \|\pi^{t+1} - \pi^{t}\|_2^2\,, \nonumber\\
&= \Phi(\pi^{t}) +  \langle g^t, \pi^{t+1} - \pi^{t} \rangle + \langle \nabla \Phi(\pi^t) - g^t, \pi^{t+1} - \pi^{t} \rangle + \frac{\beta}{2} \|\pi^{t+1} - \pi^{t}\|_2^2\,, \nonumber\\
&\leq \Phi(\pi^{t}) + \langle g^t, \pi^{t+1} - \pi^{t} \rangle +  \frac 12 \|\nabla \Phi(\pi^t) - g^t\|_2^2 + \frac{1+\beta}{2} \|\pi^{t+1} - \pi^{t}\|_2^2\,.
\end{align}
By characterization of the projection, the update rule of biased stochastic projected gradient descent yields:
\begin{equation}
\forall \pi \in \Pi, \langle \pi - \pi^{t+1} , (\pi^t - \eta g^t) - \pi^{t+1} \rangle \leq 0\,.
\end{equation}
Setting $\pi = \pi^t$ in the above inequality and rearranging, we obtain:
\begin{equation}
\label{eq:charac-proj-term}
\langle g^t, \pi^{t+1} - \pi^t \rangle \leq - \frac{1}{\eta} \|\pi^{t+1} - \pi^t\|_2^2 = - \eta \|\hat{G}^{\eta,\alpha}(\pi^t)\|_2^2\,.
\end{equation}
Moreover, using our bias bound assumption, we also have: 
\begin{equation}
\label{eq:interm-bias-var}
\frac 12 \|\nabla \Phi(\pi^t) - g^t\|_2^2 \leq \|\nabla \Phi(\pi^t) - \bar{g}^t\|_2^2 + \|g^t - \bar{g}^t\|_2^2 \leq \delta^2 + \|g^t - \bar{g}^t\|_2^2\,.
\end{equation}

Using the definition of the (stochastic) gradient mapping together with \eqref{eq:interm-bias-var} and \eqref{eq:charac-proj-term} in \eqref{eq:smoothnes-ineq}, we obtain: 
\begin{equation}
\Phi(\pi^{t+1}) \leq \Phi(\pi^{t}) - \eta \left(1 - \frac{\eta (1+ \beta)}{2}\right) \|\hat{G}^{\eta,\alpha}(\pi^t)\|_2^2 + \delta^2 + \|g^t - \bar{g}^t\|_2^2\,.
\end{equation}
Choosing $\eta \leq \frac{1}{1+\beta}$ gives $1 - \frac{\eta (1+ \beta)}{2} \geq \frac 12$ and we obtain by rearranging the above inequality: 
\begin{equation}
\|\hat{G}^{\eta,\alpha}(\pi^t)\|_2^2 \leq \frac{2}{\eta} (\Phi(\pi^{t}) - \Phi(\pi^{t+1})) + \delta^2 +   \|g^t - \bar{g}^t\|_2^2\,.
\end{equation}
Taking total expectation and using the variance bound assumption gives: 
\begin{equation}
\E[\|\hat{G}^{\eta,\alpha}(\pi^t)\|_2^2 ] \leq \frac{2}{\eta} \E[\Phi(\pi^{t}) - \Phi(\pi^{t+1})] + \delta^2 + \sigma^2\,.
\end{equation}
It remains to sum up the above inequality for $t \in [T-1]$ to obtain: 
\begin{equation}
\frac{1}{T} \sum_{t=0}^{T-1} \E[\|\hat{G}^{\eta,\alpha}(\pi^t)\|_2^2 ]  \leq \frac{2}{\eta T} \E[\Phi(\pi^{0}) - \Phi(\pi^{T})] + \delta^2 + \sigma^2 \leq \frac{2 (\Phi_{\max} - \Phi_{\min})}{\eta T} + \delta^2 + \sigma^2\,,
\end{equation}
and this concludes the proof. 

\subsubsection{Proof of Proposition~\ref{prop:relate-fos-to-grad-mapping-stoch}}
\label{sec:prop-relate-fos-to-grad-mapping-stoch-proof}

In this proof, we use the shorthand notation $g := g^t, \pi = \pi^t, \pi^{+} = \pi^{t+1}\,.$ 

We use the following decomposition of our term of interest for any policy $\pi' \in \Pi_{\alpha}$, 
\begin{equation}
\label{eq:decomp-proof-fos-gradmap}
    \langle \pi' - \pi^{+}, \nabla \Phi(\pi^{+}) \rangle 
    = \langle \pi' - \pi^{+}, g\rangle 
    + \langle \pi' - \pi^{+}, g - \nabla \Phi(\pi) \rangle
    + \langle \pi' - \pi^{+},  \nabla \Phi(\pi) - \nabla \Phi(\pi^{+}) \rangle\,.
\end{equation}

We now control each one of the terms in the above decomposition. 

For the first term we use the characterization of the projection that for any policy $\pi' \in \Pi^{\alpha}\,,$  
\begin{equation}
\langle \pi' - \pi^{+}, (\pi + \eta g) - \pi^+ \rangle \leq 0\,, 
\end{equation}
which yields the following after rearranging the inequality and using the Cauchy-Schwarz inequality: 
\begin{equation}
\label{eq:term1-proof-fos-gradmap}
\langle \pi' - \pi^{+}, g \rangle \leq - \frac{1}{\eta} \langle \pi' - \pi^{+}, \pi - \pi^+ \rangle =  \langle \pi' - \pi^{+}, \hat{G}^{\eta, \alpha}(\pi) \rangle \leq \|\pi - \pi'\|_2 \cdot \|\hat{G}^{\eta, \alpha}(\pi)\|_2 
\leq 2 \sqrt{|\cS|} \cdot \|\hat{G}^{\eta, \alpha}(\pi)\|_2\,.
\end{equation}

For the second term involving the bias of the gradient estimator, we immediately have 
\begin{equation}
\label{eq:term2-proof-fos-gradmap}
\langle \pi' - \pi^{+}, g - \nabla \Phi(\pi) \rangle \leq \| \pi' - \pi^{+}\|_2 \cdot \|g - \nabla \Phi(\pi)\|_{2} \leq \| \pi' - \pi^{+}\|_1 \cdot \|g - \nabla \Phi(\pi)\|_{2} \leq 2 \sqrt{|\cS|} \cdot \|g - \nabla \Phi(\pi)\|_{2}\,.
\end{equation}
For the last term, we use smoothness of the function~$\Phi$ as follows: 
\begin{equation}
\label{eq:term3-proof-fos-gradmap}
\langle \pi' - \pi^{+},  \nabla \Phi(\pi) - \nabla \Phi(\pi^{+}) \rangle \leq \|\pi' - \pi^{+}\|_2 \cdot \|\nabla \Phi(\pi) - \nabla \Phi(\pi^{+})\|_2 \leq 2 \sqrt{|\cS|} \beta \|\pi - \pi^{+}\|_2 = 2 \eta \beta  \sqrt{|\cS|} \|\hat{G}^{\eta, \alpha}(\pi)\|_2\,.
\end{equation}
Combining \eqref{eq:term1-proof-fos-gradmap}, \eqref{eq:term2-proof-fos-gradmap} and \eqref{eq:term3-proof-fos-gradmap} in \eqref{eq:decomp-proof-fos-gradmap}, we obtain for any $\pi' \in \Pi^{\alpha},$
\begin{equation}
\langle \pi' - \pi^{+}, \nabla \Phi(\pi^{+}) \rangle  \leq 2 (1+ \eta \beta)  \sqrt{|\cS|} \cdot \|\hat{G}^{\eta, \alpha}(\pi)\|_2 + 2 \sqrt{|\cS|} \cdot \|g - \nabla \Phi(\pi)\|_{2}\,.
\end{equation}

Now for any $\pi_i' \in \Pi_i$, set $\pi' = (\pi_i', \pi^{+}_{-i})$ in the previous inequality. Define $\pi_i^{\alpha} := (1-\alpha) \pi_i' + \frac{\alpha}{|\cA_i|} \in \Pi_i^{\alpha}$. Hence, we get for any $\pi_i' \in \Pi_i,$
\begin{align}
\label{eq:interm-ineq-proof-fos-gradmap}
\langle \pi_i' - \pi_i^{+}, \nabla_{\pi_i} \Phi(\pi^{+}) \rangle 
&= \langle \pi_i^{\alpha} - \pi_i^{+}, \nabla_{\pi_i} \Phi(\pi^{+}) \rangle  + \langle \pi_i' - \pi_i^{\alpha}, \nabla_{\pi_i} \Phi(\pi^{+}) \rangle\\
&\leq 2 (1+ \eta \beta)  \sqrt{|\cS|} \cdot \|\hat{G}^{\eta, \alpha}(\pi)\|_2 + 2 \sqrt{|\cS|} \cdot \|g - \nabla \Phi(\pi)\|_{2} + \alpha \left\langle \pi_i' - \frac{1}{|\cA_i|}, \nabla_{\pi_i} \Phi(\pi^{+}) \right\rangle\,.
\end{align}

We bound the last term in the above inequality as follows using our policy gradient theorem (see Proposition~\ref{prop:individual-pg}):
\begin{align}
\label{eq:last-term-bound-fos-gradmap}
\left\langle \pi_i' - \frac{1}{|\cA_i|}, \nabla_{\pi_i} \Phi(\pi^{+}) \right\rangle  
&= \sum_{s \in \cS, a_i \in \cA_i} \left(\pi_i'(a_i|s) - \frac{1}{|\cA_i|}\right) \cdot [\nabla_{\pi_i} \Phi(\pi^+)]_{s,a_i} \nonumber\\
&= \frac{1}{1-\gamma} \sum_{s \in \cS, a_i \in \cA_i} \left(\pi_i'(a_i|s) - \frac{1}{|\cA_i|}\right) d_{\rho}^{\pi^{+}}(s) \sum_{j=1}^N \bar{Q}_{s,a_i}^{\pi^{+}}(\bar{r}_{i,j}^{\pi^{+}})\nonumber\\
&\leq \frac{\sum_{j=1}^N\|\bar{r}_{i,j}^{\pi^{+}}\|_{\infty}}{(1-\gamma)^2} \sum_{s \in \cS, a_i \in \cA_i} \left| \pi_i'(a_i|s) - \frac{1}{|\cA_i|} \right| \cdot d_{\rho}^{\pi^{+}}(s)\nonumber\\ 
&\leq \frac{2 N l_{\infty}}{(1-\gamma)^2}\,.
\end{align}

Using \eqref{eq:last-term-bound-fos-gradmap} in \eqref{eq:interm-ineq-proof-fos-gradmap} yields: 
\begin{equation}
\underset{i \in \mathcal{N}}{\max} \underset{\pi_i' \in \Pi_i}{\max} \langle \pi_i' - \pi_i^{+}, \nabla_{\pi_i} \Phi(\pi^{+}) \rangle 
\leq 2 (1+ \eta \beta)  \sqrt{|\cS|} \cdot \|\hat{G}^{\eta, \alpha}(\pi)\|_2 + 2 \sqrt{|\cS|} \cdot \|g - \nabla \Phi(\pi)\|_{2} + \frac{2 N l_{\infty}}{(1-\gamma)^2} \alpha\,.
\end{equation}
We conclude the proof by taking total expectation in the above inequality and using our assumptions in~\eqref{as:bound-bias-var} (and Jensen's inequality) to bound the term $\|g - \nabla \Phi(\pi)\|_{2}$ in expectation and obtain the desired inequality: 
\begin{equation}
\E\left[\underset{i \in \mathcal{N}}{\max}\underset{\pi_i' \in \Pi_i}{\max} \langle \pi_i' - \pi_i^{+}, \nabla_{\pi_i} \Phi(\pi^{+}) \rangle \right] \leq 2 \sqrt{|\cS|}(1+\eta \beta) \E[\|\hat{G}^{\eta, \alpha}(\pi)\|_2] + 2 \sqrt{|\cS|} (\delta + \sigma) + \frac{2 N l_{\infty}}{(1-\gamma)^2} \alpha\,.
\end{equation}

\section{AUXILIARY RESULTS: SINGLE-AGENT RL}

Let $\mathcal{M} := (\mathcal{S}, \mathcal{A}, P, r, \rho, \gamma)$ where $r \in \mathcal{R}^{|\mathcal{S}| \cdot |\mathcal{A}|}$ be a Markov Decision Process (MDP). Define the value and action-value functions as follows for every state-action pair $(s,a) \in \mathcal{S} \times \mathcal{A},$
\begin{align}
V_{\rho}^{\pi}(r) &:= \mathbb{E}_{\rho, \pi}\left[ \sum_{t=0}^{+\infty} \gamma^t r(s_t, a_t)  \right] \,,\quad
Q_{s,a}^{\pi}(r) := \mathbb{E}_{\rho, \pi}\left[ \sum_{t=0}^{+\infty} \gamma^t r(s_t, a_t) | s_0 = s, a_0 = a \right] = \langle  \lambda_{s,a}^{\pi},  r \rangle\,,
\end{align}
where $\lambda_{s,a}^{\pi}$ is the occupancy measure corresponding to the Dirac initial distribution putting all the weight on the state action pair~$(s,a)\,.$ 

\begin{proposition}[Policy Gradient Theorem]
\label{prop:classical-pg-thm}
Let $\mathcal{M}(r) := (\mathcal{S}, \mathcal{A}, P, r, \rho, \gamma)$ where $r \in \mathcal{R}^{|\mathcal{S}| \cdot |\mathcal{A}|}$ be an MDP parameterized by a reward function $r$. Then, we have for all $\pi \in \Pi$ and all $r \in \mathcal{R}^{|\mathcal{S}| \cdot |\mathcal{A}|}$, 
\begin{equation}
\nabla_{\pi} V_{\rho}^{\pi}(r) = \frac{1}{1-\gamma} \sum_{s \in \mathcal{S}} \sum_{a \in \mathcal{A}} \lambda_{\rho}^{\pi}(s,a)\, Q_{s,a}^{\pi}(r) \, \nabla \log \pi(a|s)\,.
\end{equation}
Moreover, for the direct policy parameterization, we have for any $s \in \mathcal{S}, a \in \mathcal{A},$ 
\begin{equation}
\label{eq:pg-expression-direct-pp}
\frac{\partial V_{\rho}^{\pi}(r)}{\partial \pi(a|s)} = \frac{1}{1-\gamma} d_{\rho}^{\pi}(s) Q_{s,a}^{\pi}(r)\,.
\end{equation}
\end{proposition}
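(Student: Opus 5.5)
The statement to prove is the classical policy gradient theorem for a single-agent MDP with reward $r$: first in the general form, then specialized to the direct parameterization. I would use the occupancy-measure representation $V_{\rho}^{\pi}(r) = \frac{1}{1-\gamma}\langle \lambda_{\rho}^{\pi}, r\rangle$. This normalization matches \eqref{eq:s-a-occup-measure}, where the factor $(1-\gamma)$ sits inside $\lambda$. The derivative then comes from differentiating the Bellman recursion for $V$ and $Q$.

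\textbf{Step 1 (recursive differentiation).} Write $V_s^{\pi}(r) = \sum_a \pi(a|s) Q_{s,a}^{\pi}(r)$ and $Q_{s,a}^{\pi}(r) = r(s,a) + \gamma \sum_{s'} P(s'|s,a) V_{s'}^{\pi}(r)$. Differentiate with respect to the policy parameters, using $\nabla \pi(a|s) = \pi(a|s)\nabla\log\pi(a|s)$. This gives
\[
\nabla V_s^{\pi} = \sum_a \pi(a|s) Q_{s,a}^{\pi}\nabla\log\pi(a|s) + \gamma \sum_a \pi(a|s)\sum_{s'}P(s'|s,a)\nabla V_{s'}^{\pi}.
\]

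\textbf{Step 2 (unrolling).} Unroll this recursion along the Markov chain induced by $\pi$ from $s_0\sim\rho$. The remainder term is $\gamma^t$ times a bounded quantity: in finite spaces all entries of $\nabla V$ are bounded, uniformly on compact subsets of the interior of the simplex. So the remainder vanishes and
\[
\nabla V_{\rho}^{\pi} = \sum_{t\ge 0}\gamma^t \sum_{s,a}\mathbb{P}_{\rho,\pi}(s_t=s,a_t=a)\, Q_{s,a}^{\pi}\nabla\log\pi(a|s).
\]
By \eqref{eq:s-a-occup-measure}, this equals $\frac{1}{1-\gamma}\sum_{s,a}\lambda_{\rho}^{\pi}(s,a)Q_{s,a}^{\pi}(r)\nabla\log\pi(a|s)$, which is the first claim.

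An alternative route avoids $\log\pi$, and hence any issue at policies that are not fully supported. Use $V_\rho^\pi=\rho^\top (I-\gamma P_\pi)^{-1} r_\pi$, where $r_\pi(s)=\sum_a\pi(a|s)r(s,a)$ and $P_\pi(s'|s)=\sum_a \pi(a|s)P(s'|s,a)$. Differentiating with $\partial (I-\gamma P_\pi)^{-1} = (I-\gamma P_\pi)^{-1}\gamma\,\partial P_\pi\,(I-\gamma P_\pi)^{-1}$ gives the same identity directly.

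\textbf{Step 3 (direct parameterization).} Take $\theta_{s,a}=\pi(a|s)$. Then $\pi(a'|s')\,\partial\log\pi(a'|s')/\partial\pi(a|s) = \mathbf{1}\{s'=s,a'=a\}$. Only the term $(s,a)$ survives, and $\lambda_\rho^\pi(s,a)/\pi(a|s) = d_\rho^\pi(s)$, which yields \eqref{eq:pg-expression-direct-pp}. Equivalently, the matrix-inverse computation gives $\partial V_\rho^\pi/\partial\pi(a|s) = \frac{1}{1-\gamma}d_\rho^\pi(s)\big(r(s,a)+\gamma\sum_{s'}P(s'|s,a)V_{s'}^\pi\big)$, and the bracket is exactly $Q^\pi_{s,a}(r)$. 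This version is valid even where $\pi(a|s)=0$, viewing $V$ as a function on an open neighborhood of the simplex in $\mathbb{R}^{|\mathcal S||\mathcal A|}$.

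\textbf{Main obstacle.} The only delicate points are justifying the interchange of differentiation with the infinite sum, and the boundary of the simplex. Both are handled by the closed-form resolvent expression: $(I-\gamma P_\pi)^{-1}$ is a rational, hence smooth, function of $\pi$ since $\|\gamma P_\pi\|_\infty<1$. I would therefore present the resolvent computation as the primary proof and the log-trick expression as a consequence on the interior, extended by continuity.
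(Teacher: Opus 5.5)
The paper gives no proof of this proposition. It sits among the auxiliary single-agent results and is used as the classical policy gradient theorem (the main text credits \citet{sutton-et-al99PG} and \citet{williams92reinforce}), so there is no in-paper argument to compare with. Your proposal is a correct, self-contained proof of it.

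Your Steps~1--2 are the standard Sutton-style derivation, and the bookkeeping $\sum_{t\ge0}\gamma^t\,\mathbb{P}_{\rho,\pi}(s_t=s,a_t=a)=\lambda_\rho^\pi(s,a)/(1-\gamma)$ is right. Your convention $V_\rho^\pi(r)=\frac{1}{1-\gamma}\langle\lambda_\rho^\pi,r\rangle$ is the consistent one. The appendix's identity $Q^\pi_{s,a}(r)=\langle\lambda^\pi_{s,a},r\rangle$ is missing that factor, and the proof of Proposition~\ref{prop:lipschitzness-value-wrt-rewards} silently uses the corrected version.

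Two small remarks on your argument:
\begin{enumerate}
\item For fixed $\pi$, the remainder $\gamma^t\sum_s\mathbb{P}_{\rho,\pi}(s_t=s)\nabla V^\pi_s$ vanishes simply because $\max_s\|\nabla V_s^\pi\|$ is finite. No uniformity over compact sets is needed.
\item Step~1 presupposes that $V^\pi_s$ and $Q^\pi_{s,a}$ are differentiable, which is exactly what the resolvent formula supplies. Leading with the resolvent computation, as you propose, is therefore the right order.
\end{enumerate}

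The resolvent route also provides something the paper tacitly needs. It yields $\partial V_\rho^\pi/\partial\pi(a|s)=\frac{1}{1-\gamma}d_\rho^\pi(s)Q^\pi_{s,a}(r)$ at every $\pi\in\Pi$, including policies with $\pi(a|s)=0$. By contrast, the log-derivative identity in the paper's proof of Proposition~\ref{prop:individual-pg}-(ii) divides by $\pi_i(a_i|s)$ and only makes sense in the interior. First-order stationarity, gradient domination and the Nash characterization are all applied at arbitrary, possibly deterministic, policies, so your boundary-valid version is what actually justifies those uses.
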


\begin{proposition}
\label{prop:lipschitzness-value-wrt-rewards}
For any policy~$\pi \in \Pi,$ for any reward functions~$r, r' \in \mathbb{R}^{|\mathcal{S}| \times |\mathcal{A}|}$, we have 
\begin{equation}
\| \nabla_{\pi} V_{\rho}^{\pi}(r) - \nabla_{\pi} V_{\rho}^{\pi}(r')\|_2 \leq \frac{\sqrt{|\mathcal{A}|}}{(1-\gamma)^{2}} \|r - r'\|_{\infty}\,.
\end{equation}
\end{proposition}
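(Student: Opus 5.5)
The plan is to exploit linearity of the value function in the reward, then bound the resulting gradient entrywise using the direct-parameterization form of the policy gradient theorem (Proposition~\ref{prop:classical-pg-thm}, eq.~\eqref{eq:pg-expression-direct-pp}).

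\textbf{Step 1: reduce to a single reward.} Since $V_{\rho}^{\pi}(r) = \frac{1}{1-\gamma}\langle \lambda_{\rho}^{\pi}, r\rangle$ is linear in $r$ for every fixed $\pi$, so is its gradient in $\pi$. Set $\Delta := r - r'$. Then
\begin{equation*}
\nabla_{\pi} V_{\rho}^{\pi}(r) - \nabla_{\pi} V_{\rho}^{\pi}(r') = \nabla_{\pi} V_{\rho}^{\pi}(\Delta),
\end{equation*}
and it suffices to show $\|\nabla_{\pi} V_{\rho}^{\pi}(\Delta)\|_2 \leq \frac{\sqrt{|\mathcal{A}|}}{(1-\gamma)^2}\|\Delta\|_\infty$.

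\textbf{Step 2: entrywise bound.} By \eqref{eq:pg-expression-direct-pp},
\begin{equation*}
\frac{\partial V_{\rho}^{\pi}(\Delta)}{\partial \pi(a|s)} = \frac{1}{1-\gamma}\, d_{\rho}^{\pi}(s)\, Q_{s,a}^{\pi}(\Delta).
\end{equation*}
Since $Q_{s,a}^{\pi}(\Delta)$ is a discounted sum of reward values, $|Q_{s,a}^{\pi}(\Delta)| \leq \sum_{t\ge0}\gamma^t \|\Delta\|_\infty = \|\Delta\|_\infty/(1-\gamma)$. Hence every entry is bounded in absolute value by $d_{\rho}^{\pi}(s)\,\|\Delta\|_\infty/(1-\gamma)^2$.

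\textbf{Step 3: sum the squares.} Squaring and summing over $(s,a)$ gives
\begin{equation*}
\|\nabla_{\pi} V_{\rho}^{\pi}(\Delta)\|_2^2 \leq \frac{\|\Delta\|_\infty^2}{(1-\gamma)^4}\,|\mathcal{A}| \sum_{s\in\mathcal{S}} d_{\rho}^{\pi}(s)^2 .
\end{equation*}
Because $d_{\rho}^{\pi}$ is a probability vector, $\sum_s d_{\rho}^{\pi}(s)^2 \leq \big(\sum_s d_{\rho}^{\pi}(s)\big)^2 = 1$. Taking square roots yields the claimed bound.

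No step is a genuine obstacle; the argument is a direct computation. The only point needing care is to use the direct-parameterization formula, so that the $\ell_2$ norm is taken over the $|\mathcal{S}|\cdot|\mathcal{A}|$ coordinates $\pi(a|s)$. The factor $d_{\rho}^{\pi}(s)$ is then what absorbs the sum over states and avoids a $\sqrt{|\mathcal{S}|}$ factor.
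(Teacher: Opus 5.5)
Your proof is correct and follows essentially the same route as the paper: you apply the direct-parameterization policy gradient formula, bound each entry by $\frac{1}{(1-\gamma)^2} d_{\rho}^{\pi}(s)\|r-r'\|_\infty$, and sum squares using $\sum_{s,a} d_{\rho}^{\pi}(s)^2 \leq |\mathcal{A}|$. The only cosmetic difference is that the paper gets the entrywise bound via H\"older's inequality with the normalized occupancy measure $\lambda_{s,a}^{\pi}$, rather than your up-front reduction to $\Delta = r - r'$ by linearity.
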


\begin{proof}
Using \eqref{eq:pg-expression-direct-pp}, we have 
\begin{align}
\label{eq:interm-diff-r}
\left|\frac{\partial V_{\rho}^{\pi}(r)}{\partial \pi(a|s)} - \frac{\partial V_{\rho}^{\pi}(r')}{\partial \pi(a|s)}\right| 
&= \frac{1}{1-\gamma} d_{\rho}^{\pi}(s) |Q_{s,a}^{\pi}(r) - Q_{s,a}^{\pi}(r')| \nonumber\\
&= \frac{1}{(1-\gamma)^2} d_{\rho}^{\pi}(s) |\langle \lambda_{s,a}^\pi, r - r'\rangle| \nonumber\\
&\leq \frac{1}{(1-\gamma)^2}  d_{\rho}^{\pi}(s) \|\lambda_{s,a}^\pi\|_1 \cdot \|r - r'\|_{\infty} \nonumber\\
&= \frac{1}{(1-\gamma)^2}  d_{\rho}^{\pi}(s) \|r - r'\|_{\infty}\,,
\end{align}
where for all $s' \in \mathcal{S}, a' \in \mathcal{A}$, 
$\lambda_{s,a}^\pi(s', a') = (1-\gamma) \sum_{t=0}^{+\infty} \gamma^t \bP_{\delta_{s,a},\pi}(s_t = s', a_t = a')$ and $\delta_{s,a}$ is the Dirac measure at the state-action pair $(s,a)\,.$ 
It follows that 
\begin{align}
\|\nabla_{\pi} V_{\rho}^{\pi}(r) - \nabla_{\pi} V_{\rho}^{\pi}(r')\|_2^2 &= \sum_{s,a} \left|\frac{\partial V_{\rho}^{\pi}(r)}{\partial \pi(a|s)} - \frac{\partial V_{\rho}^{\pi}(r')}{\partial \pi(a|s)} \right|^2\\
&\leq \frac{1}{(1-\gamma)^4} \sum_{s,a}  d_{\rho}^{\pi}(s)^2 \cdot \|r - r'\|_{\infty}^2\\
&\leq \frac{|\mathcal{A}|}{(1-\gamma)^4} \|r- r'\|_{\infty}^2\,, 
\end{align}
where the first inequality uses \eqref{eq:interm-diff-r} and the last inequality stems from recalling that $0 \leq  d_{\rho}^{\pi}(s) \leq 1$ which leads to $\sum_{s,a}  d_{\rho}^{\pi}(s)^2 \leq \sum_{s,a}  d_{\rho}^{\pi}(s) = |\mathcal{A}|\,.$ 
\end{proof}

\begin{proposition}[Variational Gradient Domination, Lemma~4 in \cite{agarwal-et-al21jmlr}]
\label{prop:var-grad-dom-mdps}
Let $\mathcal{M} := (\mathcal{S}, \mathcal{A}, P, r, \rho, \gamma)$ be an MDP and let $\pi^{\star}$ be an optimal policy. Then for any policy~$\pi \in \Pi$ and any state distribution~$\mu \in \Delta(\mathcal{S})$ we have 
\begin{equation}
V_{\rho}^{\pi^{\star}}(r) - V_{\rho}^{\pi}(r) \leq \left\| \frac{d_{\rho}^{\pi^{\star}}}{d_{\mu}^{\pi}} \right\|_{\infty}\, \max_{\pi' \in \Pi} \langle \nabla_{\pi_i} V_{\rho}^{\pi}(r), \pi' - \pi \rangle\,,
\end{equation}
where we suppose that $d_{\mu}^{\pi}(s) > 0$ for all $s \in \mathcal{S}\,.$ 
\end{proposition}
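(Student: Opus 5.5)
The plan is the classical argument of \cite{agarwal-et-al21jmlr}: the performance difference lemma, then a change of measure, then the direct-parameterization gradient formula of Proposition~\ref{prop:classical-pg-thm}. One point about the statement first. For the constant $\|d_{\rho}^{\pi^{\star}}/d_{\mu}^{\pi}\|_{\infty}$ to appear, the gradient on the right-hand side must be the gradient of the value started from the distribution in the denominator, i.e.\ $\nabla_{\pi} V_{\mu}^{\pi}(r)$. This is exactly how the result is used in the proof of Proposition~\ref{prop:agentwise-grad-dom}, where the value gap is taken under one initial distribution and the gradient under the other. I would prove it in that form.

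\textbf{Step 1 (performance difference).} Write the advantage as $A^{\pi}(s,a) := Q_{s,a}^{\pi}(r) - V_{s}^{\pi}(r)$. By the performance difference lemma (the normalized occupancy measure carries the factor $1/(1-\gamma)$),
\begin{equation*}
V_{\rho}^{\pi^{\star}}(r) - V_{\rho}^{\pi}(r) = \frac{1}{1-\gamma}\sum_{s} d_{\rho}^{\pi^{\star}}(s)\sum_{a}\pi^{\star}(a|s)A^{\pi}(s,a) \leq \frac{1}{1-\gamma}\sum_{s} d_{\rho}^{\pi^{\star}}(s)\max_{a}A^{\pi}(s,a).
\end{equation*}

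\textbf{Step 2 (change of measure).} Since $\sum_a \pi(a|s)A^{\pi}(s,a)=0$, each $\max_a A^{\pi}(s,a)$ is nonnegative. Using positivity of $d_{\mu}^{\pi}$, we can therefore bound $d_{\rho}^{\pi^{\star}}(s) \le \|d_{\rho}^{\pi^{\star}}/d_{\mu}^{\pi}\|_{\infty}\, d_{\mu}^{\pi}(s)$ termwise and obtain
\begin{equation*}
V_{\rho}^{\pi^{\star}}(r) - V_{\rho}^{\pi}(r) \le \Big\|\frac{d_{\rho}^{\pi^{\star}}}{d_{\mu}^{\pi}}\Big\|_{\infty}\frac{1}{1-\gamma}\sum_s d_{\mu}^{\pi}(s)\max_a A^{\pi}(s,a).
\end{equation*}
This nonnegativity is the only delicate point of the argument; without it the ratio bound could not be applied term by term.

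\textbf{Step 3 (identify the gradient term).} By \eqref{eq:pg-expression-direct-pp}, $\partial V_{\mu}^{\pi}(r)/\partial\pi(a|s) = d_{\mu}^{\pi}(s)Q^{\pi}_{s,a}(r)/(1-\gamma)$. Hence for any $\pi'$,
\begin{equation*}
\langle \nabla_{\pi} V_{\mu}^{\pi}(r), \pi'-\pi\rangle = \frac{1}{1-\gamma}\sum_s d_{\mu}^{\pi}(s)\sum_a(\pi'(a|s)-\pi(a|s))Q^{\pi}_{s,a}(r) = \frac{1}{1-\gamma}\sum_s d_{\mu}^{\pi}(s)\sum_a \pi'(a|s)A^{\pi}(s,a).
\end{equation*}
The second equality holds because $\sum_a(\pi'-\pi)(a|s)$ vanishes, so $V_s^{\pi}(r)$ can be subtracted from $Q$ for free. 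Maximizing over $\pi'$ decouples across states, and the greedy policy attains the maximum. The maximum therefore equals $\frac{1}{1-\gamma}\sum_s d_{\mu}^{\pi}(s)\max_a A^{\pi}(s,a)$, which is exactly the quantity appearing in Step 2. Combining the two displays gives the claim. No step is a serious obstacle; the main care is bookkeeping of the $1/(1-\gamma)$ normalization and of which initial distribution enters the gradient.
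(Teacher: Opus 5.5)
Your proof is correct and is precisely the standard argument behind Lemma~4 of \cite{agarwal-et-al21jmlr}, which the paper invokes without reproducing: the performance difference lemma, nonnegativity of $\max_a A^{\pi}(s,a)$ to justify the termwise change of measure, and the direct-parameterization gradient formula \eqref{eq:pg-expression-direct-pp}. Your reading of the right-hand side as $\max_{\pi'}\langle \nabla_{\pi} V_{\mu}^{\pi}(r), \pi' - \pi\rangle$ is also the right one, since it matches the original lemma and the way the result is applied in the proof of Proposition~\ref{prop:agentwise-grad-dom}, whereas with the gradient taken at $\rho$ as literally displayed the inequality fails in general (e.g.\ when $\rho$ is a Dirac mass at a state from which $\pi$ never reaches the states where $\max_a A^{\pi}(s,a)>0$).
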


\begin{proposition}
\label{prop:lip-state-occup-measures}
For any policies $\pi, \tilde{\pi} \in \Pi$, we have 
\begin{equation}
\|d_{\rho}^{\pi} -  d_{\rho}^{\tilde{\pi}}\|_1 \leq \frac{\gamma}{1-\gamma} \E_{s \sim d_{\rho}^{\pi}} \left[ \|\pi(\cdot|s) - \tilde{\pi}(\cdot|s)\|_1 \right] \leq \frac{\gamma}{1-\gamma} \|\pi - \tilde{\pi}\|_1\,. 
\end{equation}
\end{proposition}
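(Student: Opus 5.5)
The plan is to use the linear-algebraic (Bellman flow) representation of the state occupancy measure and a resolvent identity that isolates the policy difference. I work with a generic MDP with transition kernel $P$; in the game setting one takes the joint policy, so $P_\pi(s'|s) := \sum_{a} \pi(a|s) P(s'|s,a)$ is a row-stochastic matrix on $\cS$. Viewing $d_\rho^\pi$ as a row vector, the definition of the occupancy measure gives the flow equation $d_\rho^{\pi}(I - \gamma P_\pi) = (1-\gamma)\rho$. Hence $d_\rho^\pi = (1-\gamma)\rho (I-\gamma P_\pi)^{-1}$, and the inverse exists because $\gamma P_\pi$ has spectral radius at most $\gamma<1$.

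\textbf{Step 1 (difference identity).} I subtract the two flow equations after right-multiplying both occupancies by $(I-\gamma P_{\tilde\pi})$:
\begin{equation*}
(d_\rho^\pi - d_\rho^{\tilde\pi})(I-\gamma P_{\tilde\pi}) = d_\rho^\pi(I-\gamma P_\pi) + \gamma d_\rho^\pi (P_\pi - P_{\tilde\pi}) - (1-\gamma)\rho = \gamma\, d_\rho^\pi (P_\pi - P_{\tilde\pi})\,.
\end{equation*}
This yields $d_\rho^\pi - d_\rho^{\tilde\pi} = \gamma\, d_\rho^\pi (P_\pi - P_{\tilde\pi}) (I-\gamma P_{\tilde\pi})^{-1}$.

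\textbf{Step 2 (norm of the resolvent).} For any row vector $x$ I have $\|x (I-\gamma P_{\tilde\pi})^{-1}\|_1 \le \frac{1}{1-\gamma}\|x\|_1$. This follows by expanding the Neumann series $\sum_t \gamma^t P_{\tilde\pi}^t$ and noting that right-multiplication by a row-stochastic matrix does not increase the $\ell_1$ norm of a row vector. Combining with Step 1 gives
\begin{equation*}
\|d_\rho^\pi - d_\rho^{\tilde\pi}\|_1 \le \frac{\gamma}{1-\gamma}\, \|d_\rho^\pi(P_\pi - P_{\tilde\pi})\|_1\,.
\end{equation*}

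\textbf{Step 3 (policy difference).} I write $\|d_\rho^\pi(P_\pi - P_{\tilde\pi})\|_1 = \sum_{s'} \bigl| \sum_s d_\rho^\pi(s) \sum_a (\pi-\tilde\pi)(a|s) P(s'|s,a)\bigr|$. Moving the absolute value inside and using $\sum_{s'} P(s'|s,a)=1$ bounds this by $\sum_s d_\rho^\pi(s)\|\pi(\cdot|s)-\tilde\pi(\cdot|s)\|_1 = \E_{s\sim d_\rho^\pi}[\|\pi(\cdot|s)-\tilde\pi(\cdot|s)\|_1]$, which is the first inequality. The second inequality then follows from $d_\rho^\pi(s)\le 1$, since $\sum_s \|\pi(\cdot|s)-\tilde\pi(\cdot|s)\|_1 = \|\pi-\tilde\pi\|_1$.

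The only delicate point is getting the resolvent identity in Step 1 with the correct ordering, so that the expectation is taken under $d_\rho^\pi$ and not $d_\rho^{\tilde\pi}$. The remaining steps are routine $\ell_1$ contraction arguments.
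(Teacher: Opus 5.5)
Your proof is correct, but it takes a different route from the paper's.

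The paper never inverts $I-\gamma P_{\tilde\pi}$. Its steps are:
\begin{itemize}
\item It writes the one-step Bellman consistency equations for $d_\rho^{\pi}$ and $d_\rho^{\tilde\pi}$ and subtracts them, so the $\rho$ terms cancel.
\item It sums over the next state using $\sum_s P(s|s',a')=1$, obtaining $\|d_\rho^{\pi}-d_\rho^{\tilde\pi}\|_1 \le \gamma\sum_{s',a'}|\pi(a'|s')d_\rho^{\pi}(s')-\tilde\pi(a'|s')d_\rho^{\tilde\pi}(s')|$.
\item It splits each summand as $d_\rho^{\pi}(s')|\pi(a'|s')-\tilde\pi(a'|s')| + \tilde\pi(a'|s')|d_\rho^{\pi}(s')-d_\rho^{\tilde\pi}(s')|$.
\item This gives the self-bounding inequality $\|d_\rho^{\pi}-d_\rho^{\tilde\pi}\|_1 \le \gamma\,\mathbb{E}_{s\sim d_\rho^{\pi}}[\|\pi(\cdot|s)-\tilde\pi(\cdot|s)\|_1] + \gamma\|d_\rho^{\pi}-d_\rho^{\tilde\pi}\|_1$. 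The last term is then absorbed into the left-hand side; it is finite, at most $2$.
\end{itemize}

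The two arguments correspond step by step. In the paper, the choice of which factor carries the policy difference in the triangle-inequality split is what puts the expectation under $d_\rho^{\pi}$. In your proof, right-multiplying by $(I-\gamma P_{\tilde\pi})$ plays that role. Likewise, the paper's final rearrangement by $1-\gamma$ plays the role of your Neumann-series bound on the $\ell_1$ norm of the resolvent.

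The paper's version is shorter and entirely entrywise. Yours costs a bit of linear-algebra bookkeeping (existence of the inverse, row-vector conventions). In return it gives an exact identity, $d_\rho^{\pi}-d_\rho^{\tilde\pi} = \gamma\, d_\rho^{\pi}(P_\pi-P_{\tilde\pi})(I-\gamma P_{\tilde\pi})^{-1}$. This makes the origin of the $1/(1-\gamma)$ factor transparent, and it can be reused directly for bounds in other norms or for higher-order perturbation arguments, such as those needed in the smoothness analysis.

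A minor point in your favour: your flow equation correctly carries the factor $(1-\gamma)$ on $\rho$. The paper's Bellman equation omits it, which is harmless there only because the $\rho$ terms cancel.
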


\begin{proof}
By definition, the state occupancy measures $d_{\rho}^{\pi}$ and $d_{\rho}^{\pi'}$ satisfy the Bellman consistency equations, i.e. for every state $s \in \mathcal{S}$, 
\begin{align}
d_{\rho}^{\pi}(s) &= \rho(s) + \gamma \sum_{s' \in \cS, a' \in \cA} P(s|s',a') \pi(a'|s') d_{\rho}^{\pi}(s')\,,\\
d_{\rho}^{\tilde{\pi}}(s) &= \rho(s) + \gamma \sum_{s' \in \cS, a' \in \cA} P(s|s',a') \tilde{\pi}(a'|s') d_{\rho}^{\tilde{\pi}}(s')\,.
\end{align}
As a consequence, we have the following upper bound: 
\begin{align}
|d_{\rho}^{\pi}(s) - d_{\rho}^{\tilde{\pi}}(s)| &\leq \gamma \sum_{s' \in \cS, a' \in \cA} P(s|s',a') |\pi(a'|s') d_{\rho}^{\pi}(s') -  \tilde{\pi}(a'|s') d_{\rho}^{\tilde{\pi}}(s')|\,.
\end{align}
Therefore we obtain by summing the above inequality over states and actions: 
\begin{align}
\|d_{\rho}^{\pi} - d_{\rho}^{\tilde{\pi}}\|_1 = \sum_{s \in \cS} |d_{\rho}^{\pi}(s) - d_{\rho}^{\tilde{\pi}}(s)| 
&\leq \gamma \sum_{s' \in \cS, a' \in \cA} \sum_{s \in \cS} P(s|s',a') |\pi(a'|s') - \tilde{\pi}(a'|s')|  d_{\rho}^{\pi}(s')\nonumber\\
&= \gamma \sum_{s' \in \cS, a' \in \cA}  |\pi(a'|s') d_{\rho}^{\pi}(s') -  \tilde{\pi}(a'|s') d_{\rho}^{\tilde{\pi}}(s')| \nonumber\\
&\leq \gamma \sum_{s' \in \cS, a' \in \cA} d_{\rho}^{\pi}(s') \cdot |\pi(a'|s') - \tilde{\pi}(a'|s')| + \gamma \sum_{s' \in \cS, a' \in \cA} \tilde{\pi}(a'|s') \cdot |d_{\rho}^{\pi}(s') - d_{\rho}^{\tilde{\pi}}(s')|\nonumber\\
&=  \gamma \E_{s \sim d_{\rho}^{\pi}} \left[ \|\pi(\cdot|s) - \tilde{\pi}(\cdot|s)\|_1 \right] + \gamma \|d_{\rho}^{\pi} - d_{\rho}^{\tilde{\pi}}\|_1\,.
\end{align}
Rearranging the above inequality gives the desired result. 
\end{proof}

\newpage
\section{NUMERICAL SIMULATIONS}

While our contributions are mainly theoretical, we performed simple simulations to illustrate the performance of our PG algorithm (Algorithm~\ref{algo}) in the stochastic setting. We consider three different examples from section~\ref{subsec:examples} (a, d and e) in the potential setting with $N=3$ agents. We set $\gamma=0.95$, $\eta=0.01$, $H=20$, $A_i = 5$ (number of different actions) on a $5 \times 5$ grid environment with $512$ episodes per iteration. In all the different simulations, we observe that the potential is increasing and the occupancy/Nash gaps are reducing as expected along the iterations of the algorithm. 

\begin{table}[h]
\centering
\caption{Multi-agent Imitation Learning (example a in section~\ref{subsec:examples})}
\resizebox{0.6\columnwidth}{!}{
\label{tab:multiagent_imitation}
\begin{tabular}{cccc}
\toprule
\textbf{Iter} & \textbf{Potential} & \textbf{Occupancy Gap} & \textbf{KL Occupancy Gap} \\
\midrule
1   & -1.0712 & 1.2263 & 1.0712 \\
20  & -0.9606 & 1.1851 & 0.9606 \\
40  & -0.8469 & 1.1373 & 0.8469 \\
60  & -0.6965 & 1.0277 & 0.6965 \\
100 & -0.4607 & 0.8135 & 0.4607 \\
120 & -0.2687 & 0.6321 & 0.2687 \\
160 & -0.1932 & 0.5467 & 0.1932 \\
180 & -0.2173 & 0.5782 & 0.2173 \\
200 & -0.1591 & 0.4882 & 0.1591 \\
\bottomrule
\end{tabular}}
\end{table}

\begin{table}[h]
\centering
\caption{Team Coverage and Distribution Matching (example d in section~\ref{subsec:examples})}
\label{tab:team_coverage_distribution}
\begin{tabular}{ccccc}
\toprule
\textbf{Iter} & \textbf{Potential} & \textbf{Nash Gap} & \textbf{Occupancy Gap} & \textbf{KL Occupancy Gap} \\
\midrule
1   & -1.8612 & 95.0888 & 1.3509 & 1.8612 \\
10  & -1.7698 & 92.1758 & 1.3128 & 1.7698 \\
20  & -1.6501 & 88.4389 & 1.2765 & 1.6501 \\
50  & -1.2508 & 74.4521 & 1.1924 & 1.2508 \\
80  & -0.8294 & 55.9668 & 1.0197 & 0.8294 \\
100 & -0.4804 & 42.0560 & 0.7538 & 0.4804 \\
130 & -0.1305 & 22.7316 & 0.3542 & 0.1305 \\
160 & -0.0148 & 7.8765  & 0.1297 & 0.0148 \\
190 & -0.0040 & 5.3988  & 0.0629 & 0.0040 \\
\bottomrule
\end{tabular}
\end{table}

\begin{table}[h]
\centering
\caption{Collective Exploration (example e in section~\ref{subsec:examples})}
\label{tab:collective_exploration}
\begin{tabular}{cccc}
\toprule
\textbf{Iter} & \textbf{Potential} & \textbf{Nash Gap} & \textbf{Occupancy Gap} \\
\midrule
1   & 2.9909 & 95.0888 & 1.3434 \\
10  & 3.0684 & 92.1724 & 1.3079 \\
30  & 3.3050 & 84.3933 & 1.2522 \\
50  & 3.5616 & 74.7144 & 1.1992 \\
80  & 3.9804 & 56.6424 & 1.0325 \\
100 & 4.3174 & 43.2140 & 0.7834 \\
150 & 4.7797 & 15.6444 & 0.2363 \\
200 & 4.8165 & 7.7096  & 0.1010 \\
210 & 4.8198 & 7.6516  & 0.0872 \\
\bottomrule
\end{tabular}
\end{table}

\end{document}